\newif\IFdraft
\newcommand{\LOCAL}{\ensuremath{\mathsf{LOCAL}}\xspace}
\newcommand{\CONGEST}{\ensuremath{\mathsf{CONGEST}}\xspace}
\newcommand{\cR}{\mathcal{R}}
\title{Towards Optimal Distributed Edge Coloring with Fewer Colors}
\author{Manuel Jakob}{TU Graz, Austria}{m.jakob@tugraz.at}{https://orcid.org/0009-0009-0229-0287}{}
\author{Yannic Maus}{TU Graz, Austria}{yannic.maus@tugraz.at}{https://orcid.org/0000-0003-4062-6991}{}
\author{Florian Schager}{TU Graz, Austria}{florian.schager@tugraz.at}{https://orcid.org/0009-0009-3923-051X}{}
\authorrunning{M. Jakob, Y. Maus and F. Schager}
\keywords{distributed graph algorithms, edge coloring, \LOCAL model}
\date{\today}
\newcommand{\poly}{\mathrm{poly}}
\begin{document}
\maketitle

\begin{abstract}
There is a huge difference in techniques and runtimes of distributed algorithms for problems that can be solved by a sequential greedy algorithm and those that cannot. 
A prime example of this contrast appears in the edge coloring problem: while $(2\Delta-1)$-edge coloring---where $\Delta$ is the maximum degree---can be solved in $\mathcal{O}(\log^{\ast}(n))$ rounds on constant-degree graphs, the seemingly minor reduction to $(2\Delta-2)$ colors leads to an $\Omega(\log n)$ lower bound [Chang, He, Li, Pettie \& Uitto, SODA'18]. 
Understanding this sharp divide between very local problems and inherently more global ones remains a central open question in distributed computing and it is a core focus of this paper.

As our main contribution we design a deterministic distributed $\mathcal{O}(\log n)$-round reduction from the $(2\Delta-2)$-edge coloring problem to the much easier $(2\Delta-1)$-edge coloring problem. 
This reduction is optimal, as the $(2\Delta-2)$-edge coloring problem admits an  $\Omega(\log n)$ lower bound that even holds on the class of constant-degree graphs, whereas the $2\Delta-1$-edge coloring problem can be solved in $\mathcal{O}(\log^{\ast}n)$ rounds. 
By plugging in the $(2\Delta-1)$-edge coloring algorithms from [Balliu, Brandt, Kuhn \& Olivetti, PODC'22] running in $\mathcal{O}(\log^{12}\Delta + \log^{\ast} n)$ rounds, we obtain an optimal runtime of $\mathcal{O}(\log n)$ rounds as long as $\Delta = 2^{\mathcal{O}(\log^{1/12} n)}$. 
Previously, such an optimal algorithm was only known for the class of constant-degree graphs [Brandt, Maus, Narayanan, Schager \& Uitto, SODA'25]. 
Furthermore, on general graphs our reduction improves the runtime from $\widetilde{\mathcal{O}}(\log^3 n)$ to $\widetilde{\mathcal{O}}(\log^{5/3} n)$.

In addition, we also obtain an optimal $\mathcal{O}(\log \log n)$-round randomized reduction of $(2\Delta - 2)$-edge coloring to $(2\Delta - 1)$-edge coloring. 
This leads to a $\widetilde{\mathcal{O}}(\log^{5/3} \log n)$-round $(2\Delta-2$)-edge coloring algorithm, which beats the (very recent) previous state-of-the-art taking $\widetilde{\mathcal{O}}(\log^{8/3}\log n)$ rounds from [Bourreau, Brandt \& Nolin, STOC'25].

Lastly, we obtain an $\mathcal{O}(\log_\Delta n)$-round reduction from the $(2\Delta-1)$-edge coloring, albeit to the somewhat harder maximal independent set (MIS) problem. 

\end{abstract}
\newpage

\thispagestyle{empty}
\tableofcontents
\newpage

\pagenumbering{arabic}
\setcounter{page}{1}

\section{Introduction}

There is a huge difference in the techniques and runtimes of distributed algorithms for locally checkable problems that can be solved by a sequential greedy algorithm and those that cannot. 
For example, on bounded degree graphs, the former problems have distributed complexity $\Theta(\log^{\ast}n)$ \cite{ColeVishkin,Linial1,SimpleAlgorithms} while the latter require at least $\Omega(\log n)$ time \cite{SmallPalettes,ExponentialSeparation}. 
Understanding this sharp and potentially wide divide between very locally solvable problems and inherently more global ones remains a central open question in distributed computing. 

A prime example of this contrast---explained in detail next and forming the main topic of this paper---appears in the edge coloring problem. 
A \emph{$k$-edge coloring} of a graph $G = (V,E)$ is a function $\varphi: E \to \{ 1,\dots, k \} $ such that no two adjacent edges receive the same color. 
This problem \cite{SimpleAlgorithms,Deltalogn,MuchEasier,EdgeColorHyperMaxMatch,EdgeColoringPolylogarithmicInDelta,SublogarithmicEdgeColoring}, but also other \emph{greedy problems} like $(\Delta + 1)$-vertex coloring, maximal independent set, and maximal matching \cite{LubySTOC,ABI,LocalitySymmetryBreaking,VertexColoringSublinear,OptimalColoring,MISLowerBound,NDCompBreakthrough,ColoringWithoutNetDComp,Rounding,GreedyAlgorithm}, has been extensively studied in the field of distributed computing, particularly in the \LOCAL model.

In this model the communication network is abstracted as an undirected graph $G = (V,E)$.
Each vertex hosts a processor that may communicate with its neighbors in synchronous rounds.
In one round, each processor may send unbounded-size messages to all its neighbors in $G$ and perform unbounded local computations.
The goal is to minimize the number of synchronous rounds until every node outputs (a local part of) the solution. 
The related \CONGEST model additionally imposes a maximum message size of $\mathcal{O}(\log n)$ bits per message.

In order to make progress in many parts of the graph in parallel, many distributed algorithms crucially rely on the property that color choices made by other vertices do not affect the solvability on the remaining graph. 
While this property is satisfied for $k = 2\Delta - 1$ (hence we call this the \emph{greedy threshold}), for any $k < 2\Delta - 1$ there are partial $k$-edge colorings that cannot be extended to the entire graph \cite{OnlineEdgeColoring}.
On the other hand of the spectrum, it is known that every graph admits a $(\Delta + 1)$-edge coloring by a celebrated result of Vizing \cite{Vizing}.

The intuition that coloring with fewer colors is significantly more challenging---particularly in the distributed setting---is evidenced by strong lower bounds: 
Any deterministic edge coloring algorithm using less than $2\Delta - 1$ colors requires at least $\Omega(\log_\Delta n)$ rounds, and even randomized algorithms face an $\Omega(\log_{\Delta}\log n)$-round lower bound \cite{SmallPalettes}. 
On trees, where these lower bounds are established, there is even a deterministic algorithm that solves the greedy version of the problem in strongly sublogarithmic time, specifically in $\mathcal{O}(\log^{12/13}n)$ rounds \cite{SublogarithmicEdgeColoring}.
On bounded degree graphs, the gap is even larger as the greedy version has deterministic and randomized complexity  $\Theta(\log^{\ast}n)$, yet the lower bounds for algorithms that use fewer colors---$\Omega(\log n)$ deterministic and $\Omega(\log \log n)$ randomized---still apply.

Despite these challenges, there has been extensive work on designing increasingly faster deterministic and randomized algorithms for edge coloring with fewer than $2\Delta-1$ colors---both in the \LOCAL model \cite{Bernshteyn,VizingBoundedDegree,HSO,SmallPalettes,SmallPalettesJournal,3Delta/2,3Delta/2HMM}, the more restricted \CONGEST model \cite{3Delta/2,FastVertexSplitting}, and in other models such as streaming \cite{OnlineEdgeColoring,OnlineEdgeColoringTightBounds,GreedyAlgorithmNotOptimal}. 
The fastest known deterministic distributed algorithm that operates below the greedy threshold runs in $\widetilde{\mathcal{O}}(\log^3 n)$ rounds\footnote{We use the notation $\widetilde{\mathcal{O}}(f(n))$ to hide factors polylogarithmic in $f(n)$.} \cite{HSO}, while the fastest randomized algorithm, derived from a recent more general result on $\Delta$-vertex coloring, achieves a runtime of $\widetilde{\mathcal{O}}(\log^{8/3} \log n)$ rounds \cite{FasterDeltaColoring}.
See the related work section (\Cref{subsec:related_work}) for further results on edge coloring with fewer colors, and refer to \Cref{tab:edge_coloring} for an overview of existing algorithms, the number of colors they use, and their runtime.

This work aims to better understand the size and nature of the complexity gap between greedy-solvable problems and those below the greedy threshold. Understanding this latter class of problems is also central to the poorly understood regime in the complexity landscape of local graph problems, see, e.g., \cite{SublogarithmicLLL, TimeHierarchy, CongestLLL}.

\subsection{Our contributions}

As our main contribution, we show that the considerably harder $(2\Delta - 2)$-edge coloring problem can be efficiently reduced to its greedy cousin $(2\Delta - 1)$-edge coloring\footnote{We use $T_{2\Delta - 1}(n,\Delta)$ ($T_{2\Delta - 1}^{\dice{6}}(n,\Delta)$) to denote the runtime of a deterministic (randomized) $(2\Delta - 1)$-edge coloring algorithm and $T_{\mathrm{MIS}}(n,\Delta)$ to denote the runtime of a deterministic MIS algorithm for $n$-vertex graphs with maximum degree $\Delta$. } in $\mathcal{O}(\log n)$ rounds of the \LOCAL model.

\begin{restatable}{theorem}{detAlgo} \label{thm:detAlgo}
    There is a deterministic \LOCAL algorithm computing a $(2\Delta-2)$-edge coloring in any $n$-vertex graph with maximum degree $\Delta$ in $\mathcal{O}(\log n) + T_{2\Delta - 1}(n,\Delta-1)$ rounds.
\end{restatable}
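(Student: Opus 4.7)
The plan is to instantiate the four-step framework sketched in the contributions, replacing the expensive MIS step with a cheaper ruling-set computation, which is the only structural change compared to the MIS-based reduction behind \Cref{thm:MISalgo}. First I would compute a suitable $(\beta, \alpha)$-ruling set $S$ on the input graph (or a low-power thereof) using a standard deterministic ruling-set algorithm, which runs in $\mathcal{O}(\log n)$ rounds and has $\alpha = \mathcal{O}(1)$. Each vertex then joins the cluster of its closest ruler, giving a partition $\mathcal{C}$ of $V$ into connected clusters of constant diameter in which distinct rulers are at distance at least $\beta$. The spacing parameter $\beta$ can be tuned to guarantee that every cluster-neighborhood needed in the later matching step is essentially ``private.'' Singleton clusters need no special treatment: if a cluster contains no deferred edge, it imposes no constraint on the final extension step.

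Next, inside each cluster $C$ I would fix a BFS spanning tree rooted at its ruler and remove all tree edges, obtaining $G' = G \setminus E_T$. Since every non-singleton cluster contributes at least one tree edge per vertex, $G'$ has maximum degree at most $\Delta - 1$. I then invoke the assumed $(2\Delta-1)$-edge coloring subroutine on $G'$, but draw colors only from the restricted palette $\{1,\ldots,2\Delta-3\}$, which suffices because any $(2\Delta-1)$-edge coloring algorithm applied to a graph of maximum degree $\Delta-1$ produces a $(2(\Delta-1)-1) = (2\Delta-3)$-edge coloring; this leaves the two colors $2\Delta-2$ and $2\Delta-1$ unused. This call costs $T_{2\Delta-1}(n,\Delta)$ rounds.

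The main obstacle, and the only genuinely non-greedy component, is the third step: producing, in $\mathcal{O}(\log n)$ rounds, a matching $M$ in $G'$ together with an injective assignment of \emph{two} edges of $M$ to each cluster $C$, lying in the $1$-hop neighborhood of $C$ and used by no other cluster. I would start from a greedy maximal matching in $G'$, which already satisfies the matching property but may leave some clusters starved, and then invoke the ``proposal/redirection'' procedure outlined in the abstract: each cluster sends proposals to matching edges in its $2$-hop neighborhood, a load-balancing argument shows that each cluster can be allotted two edges because it sends more proposals than any matching edge receives, and finally each cluster pulls its assigned matching edges into its $1$-hop neighborhood while preserving the matching property. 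The ruling-set spacing is exactly what allows these $2$-hop neighborhoods to be handled independently. Crucially, this entire procedure is local with respect to the cluster structure and uses only deterministic primitives that run in $\mathcal{O}(\log n)$ rounds on ruling-set clusters of constant diameter.

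Finally, each cluster $C$ switches, in parallel and in a single round, the colors of its two assigned edges between their current value and the free color $2\Delta-2$ as required by the extension lemma from \Cref{sec:technical}; by the exclusivity of the assignment these rewrites cannot conflict, and the matching property guarantees that the resulting coloring remains proper. The extension lemma then promises that, after this local recoloring, each cluster can color its own tree edges using all $2\Delta-2$ colors, which can be done inside the cluster in $\mathcal{O}(\mathrm{diam}(C)) = \mathcal{O}(1)$ rounds after $\mathcal{O}(\log^\ast n)$-time local symmetry breaking if needed. Summing the four stages, the runtime is $\mathcal{O}(\log n)$ for the ruling set and matching redistribution, plus $T_{2\Delta-1}(n,\Delta)$ for the preliminary coloring of $G'$, plus $\mathcal{O}(1)$ for the final local extension, giving the claimed bound.
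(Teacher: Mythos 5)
Your overall architecture matches the paper's (cluster via ruling set, defer a spanning forest, color the rest with $2\Delta-3$ colors, reassign matching-type edges via proposals, switch colors, extend), but two of your concrete steps do not fit inside the $\mathcal{O}(\log n)+T_{2\Delta-1}(n,\Delta)$ budget, and these are exactly the points where the paper deviates from the MIS-based reduction. First, you assume a deterministic $(\beta,\alpha)$-ruling set with $\alpha=\mathcal{O}(1)$ computable in $\mathcal{O}(\log n)$ rounds. No such primitive is known: the standard deterministic bound is $\mathcal{O}(c\cdot\Delta^{2/c}+\log^{\ast}n)$ for a $(2,c)$-ruling set, which for constant $c$ is $\mathrm{poly}(\Delta)$ and can far exceed $\log n$; constant-radius ruling sets are essentially MIS-hard territory ($\widetilde{\mathcal{O}}(\log^{5/3}n)$). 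The paper instead takes $c=\Theta(\log\Delta)$, getting a $(2,\mathcal{O}(\log\Delta))$-ruling set on $G^{8}$ in $\mathcal{O}(\log\Delta+\log^{\ast}n)$ rounds, accepts clusters of diameter $\mathcal{O}(\log\Delta)$, and then observes that the HSO step needs only $\mathcal{O}(\log_{\Delta}n)$ hypergraph rounds, so the $\mathcal{O}(\log\Delta)$ per-round simulation overhead still multiplies out to $\mathcal{O}(\log n)$. Your claims of $\mathcal{O}(1)$-round simulation and $\mathcal{O}(1)$-round final extension rest on the unavailable constant-radius primitive; they can be repaired, but only by making this trade-off explicitly. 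Second, you start Phase 3 from a greedy maximal matching. Deterministic maximal matching is itself not known in $\mathcal{O}(\log n)$ rounds for general $\Delta$ (the known bounds are roughly $\mathcal{O}(\log^{2}\Delta\cdot\log n)$ or $\widetilde{\mathcal{O}}(\log^{5/3}n)$), and it cannot be charged to $T_{2\Delta-1}$; this is precisely why the paper replaces the maximal matching by a $2$-edge ruling set, computable in $\Theta(\log^{\ast}n)$ rounds, and adapts the proposal scheme to vertices that may be two hops from their nearest ruling-set edge.

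Two further points are glossed over in a way that matters. The cluster spacing does \emph{not} make the $2$-hop neighborhoods ``private'': a single matching/ruling-set edge can receive proposals from up to $\Theta(\Delta)$ different clusters, and resolving these conflicts is the genuinely non-greedy part, solved in the paper as a hypergraph sinkless orientation in $\mathcal{O}(\log_{\delta/r}n)$ rounds. For that HSO instance to be feasible one needs each (non-edge-degree-choosable) cluster to emit many more proposals than any edge receives; the paper guarantees this via the cluster-expansion lemma (clusters without even cycles have $(\Delta-2)^{4}$ leaves, yielding $\Omega(\Delta^{3})$ proposers at a common depth), and the common-depth requirement is also what the color-switching lemma needs. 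Your ``load-balancing argument'' asserts the conclusion of this machinery without supplying either the quantitative proposal bound or the equal-depth leaf selection, so as written the assignment step is not justified.
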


Secondly, allowing the use of maximal independent set (MIS) as a subroutine improves the runtime of our reduction at the cost of reducing to a harder problem.

\begin{restatable}{theorem}{MISalgo}
    \label{thm:simple_algo}
    There is a deterministic \LOCAL algorithm computing a $(2\Delta - 2)$-edge coloring in any $n$-vertex graph with maximum degree $\Delta$ in $\mathcal{O}(\log_\Delta n) + T_{\mathrm{MIS}}(\Delta^2 \cdot n,\mathrm{poly}(\Delta))$ rounds.
\end{restatable}

While the latter result might sound like a strict improvement over the former---especially given the fact that the runtimes of the current state-of-the-art algorithms for $(2\Delta - 1)$-edge coloring and MIS on general graphs coincide---we argue that the former version is, in fact, the stronger result.
A classic reduction from Luby \cite{LubySTOC} shows that $(2\Delta - 1)$-edge coloring is no harder than MIS.
Consequently, the $\widetilde{\mathcal{O}}(\log^{5/3} n)$ deterministic algorithm for MIS \cite{GreedyAlgorithm} also solves the $(2\Delta - 1)$-edge coloring problem. 
On the other hand, there is a $\mathcal{O}(\log^{12} \Delta + \log^{\ast} n)$-round $(2\Delta - 1)$-edge coloring algorithm from \cite{EdgeColoringPolylogarithmicInDelta}, which provably cannot be achieved by any MIS algorithm due to the $\Omega(\min(\Delta, \log n / \log \log n))$ lower bound established in \cite{LowerBoundMIS}.
We obtain the following corollary.

\begin{restatable}{corollary}{detAlgoCor} \label{cor:detAlgo}
There are deterministic \LOCAL algorithms computing a $(2\Delta - 2)$-edge coloring in any $n$-vertex graph with maximum degree $\Delta$ within
    \begin{enumerate}
        \item either $\mathcal{O}(\log^{12}\Delta + \log n)$ rounds,
        \item or $\widetilde{\mathcal{O}}(\log^{5/3} n)$ rounds.
    \end{enumerate}
\end{restatable}

For the class of graphs with maximum degree $\Delta$ as large as $2^{\mathcal{O}(\log^{1/12}n)}$, the first point of this corollary yields an optimal runtime of $\mathcal{O}(\log n)$ rounds. 
Previously, an optimal runtime was only known for the class of bounded-degree graphs \cite{HSO}.
The second point significantly improves the upper bound on the complexity of $(2\Delta - 2)$-edge coloring on general graphs, which previously stood at $\widetilde{\mathcal{O}}(\log^3 n)$ rounds \cite{HSO}.
Additionally, we present an exponentially faster randomized reduction to the $(2\Delta-1)$-edge coloring problem. 

\begin{restatable}{theorem}{randAlgo} \label{thm:randAlgo}
    There is a randomized \LOCAL algorithm computing a $(2\Delta-2)$-edge coloring in any $n$-vertex graph with maximum degree $\Delta$ in $\mathcal{O}(\log \log n) + T_{2\Delta - 1}^{\dice{6}}(n,\Delta-1)$ rounds.
\end{restatable}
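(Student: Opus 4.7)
The plan is to follow the same four-phase skeleton as the deterministic reduction of Theorem \ref{thm:detAlgo} --- clustering, partial $(2\Delta-3)$-edge coloring of the non-deferred subgraph, matching-based assignment of two exclusive edges per cluster, and cluster-local extension --- and to replace every deterministic subroutine by its best known randomized counterpart. As already noted in the paper's overview, the technical core from Section \ref{sec:technical} is purely structural and, once the exclusive-access edges have been assigned, contributes only $\mathcal{O}(\log^{\ast} n)$ rounds. The total randomized runtime is therefore dominated by the cost of the subroutines.

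The two subroutines driving the $\mathcal{O}(\log n)$ deterministic overhead in Theorem \ref{thm:detAlgo} are (i) the ruling-set computation used to build the cluster family $\mathcal{C}$, and (ii) the maximal-matching computation underlying the modification phase that enforces property~(1) from the abstract, namely that each cluster receives exclusive access to two edges in its $1$-hop neighborhood. Both admit $\mathcal{O}(\log \log n)$-round randomized algorithms in \LOCAL. Plugging these directly into the template from Theorem \ref{thm:detAlgo} yields a reduction that costs $\mathcal{O}(\log \log n)$ rounds plus a single call to a randomized $(2\Delta-1)$-edge coloring routine, contributing $T_{2\Delta - 1}^{\scriptsizedice{6}}(n,\Delta)$. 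All combinatorial arguments --- the existence of a good matching modification, the ability of each cluster to extend the coloring by recoloring two assigned edges, and the correctness of the final extension --- carry over verbatim, since they are oblivious to whether the ruling set and matching were produced deterministically or randomly.

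The main obstacle is bookkeeping the failure probabilities rather than any new algorithmic idea. Each randomized subroutine succeeds only with high probability, and several of them are invoked, partly in parallel across clusters on disjoint $2$-hop neighborhoods. We will tune each call so that its failure probability is at most $1/\mathrm{poly}(n)$ and then take a union bound over all invocations; this only costs a constant multiplicative overhead per subroutine. With this accounting the total round complexity is $\mathcal{O}(\log \log n) + T_{2\Delta - 1}^{\scriptsizedice{6}}(n,\Delta)$, matching the claim.
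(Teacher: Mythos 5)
There is a genuine gap: you have misidentified the bottleneck of the deterministic reduction, and as a result the step that actually needs a randomized replacement is missing from your plan. In \Cref{thm:detAlgo} the $\mathcal{O}(\log n)$ overhead is \emph{not} driven by a maximal-matching computation --- the matching was already relaxed to a $2$-edge ruling set computable deterministically in $\mathcal{O}(\log^{\ast} n)$ rounds (\Cref{thm:edge-ruling}), and the paper keeps this deterministic subroutine unchanged in the randomized algorithm. The dominant cost is the hypergraph sinkless orientation used to assign two exclusive edges to each cluster: deterministically it takes $\mathcal{O}(\log_{\Delta} n)$ rounds on the cluster hypergraph, times an $\mathcal{O}(\log \Delta)$ simulation overhead, i.e.\ $\mathcal{O}(\log n)$. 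Your proposal never mentions HSO, so even after swapping in randomized ruling-set and matching routines you would still be left with an $\mathcal{O}(\log n)$-round deterministic HSO call, and the claimed $\mathcal{O}(\log \log n)$ bound would not follow. The paper's proof instead invokes the randomized HSO algorithm (\Cref{thm:hso_randomized}), which runs in $\mathcal{O}(\log_{\delta/r}\delta + \log_{\delta/r}\log n) = \mathcal{O}(\log_{\Delta}\log n)$ rounds on the instance at hand.

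A second, related omission is the choice of ruling set. It is not enough that the clustering be computable in $\mathcal{O}(\log \log n)$ rounds; the cluster radius enters multiplicatively as the simulation overhead for the HSO, so the paper needs a $(2,\mathcal{O}(\log \Delta)\cap\mathcal{O}(\log\log n))$-ruling set (\Cref{cor:rand_ruling_set}): the $\mathcal{O}(\log\Delta)$ radius bound makes the HSO cost $\mathcal{O}(\log\Delta\cdot\log_{\Delta}\log n)=\mathcal{O}(\log\log n)$, while the $\mathcal{O}(\log\log n)$ bound controls the final cluster-internal extension (which, contrary to your claim of $\mathcal{O}(\log^{\ast} n)$, takes time proportional to the cluster diameter). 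Your failure-probability bookkeeping is fine but beside the point; the missing ideas are the randomized HSO replacement and the two-sided radius guarantee on the clustering.
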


\Cref{thm:randAlgo} yields the following corollary by plugging in the algorithms of \cite{EdgeColoringPolylogarithmicInDelta} and \cite{GreedyAlgorithm}.

\begin{restatable}{corollary}{randAlgoCor}
    There are randomized distributed algorithms computing a $(2\Delta-2)$-edge coloring in any $n$-vertex graph with maximum degree $\Delta$ within
    \begin{enumerate}
        \item either $\mathcal{O}(\log^{12} \Delta + \log \log n)$ rounds,
        \item or $\widetilde{\mathcal{O}}(\log^{5/3}\log n)$ rounds of the \LOCAL model.
    \end{enumerate}
\end{restatable}

For the class of graphs with maximum degree $\Delta$ as large as $2^{\mathcal{O}(\log^{1/12}\log n)}$, the first point of this corollary yields an optimal runtime of $\mathcal{O}(\log \log n)$ rounds. 
The second point beats the (very recent) previous state-of-the-art for randomized $(2\Delta - 2)$-edge coloring on general graphs, which previously stood at $\widetilde{\mathcal{O}}(\log^{8/3}\log n)$ rounds \cite{FasterDeltaColoring}.

\subsection{Our technique in a nutshell}

To illustrate the difficulty of the $(2\Delta-2)$-edge coloring problem, let us first try to design an $\mathcal{O}(1)$-round reduction to the MIS and the $(2\Delta-1)$-edge coloring problem; recall that such a reduction provably cannot exist. 
One approach is to use the MIS algorithm to compute a clustering of the graph, e.g., by computing an MIS of the power graph $G^2$ and letting each vertex join the cluster of the closest node in the MIS; see \Cref{fig:clusters} for an example. 
As a result, we can partition the edge set of $G$ into edges $E_{\mathrm{intra}}$ (black) within the clusters and inter-cluster edges $E_{\mathrm{inter}}$ (turquoise) that go between the different clusters. 
We could simply color the edges in $E_{\mathrm{intra}}$ independently and in parallel for each cluster, as the edges of different clusters are non-adjacent. 
Due to the constant diameter of each cluster, this can be done in $O(1)$ rounds and even using the existentially optimal number of colors, i.e., $\Delta+1$ colors. 
However, now the problem of coloring $E_{\mathrm{intra}}$ is no easier than the original $(2\Delta-2)$-edge coloring problem as it essentially corresponds to coloring a general graph but with some colors---the ones used by adjacent edges in $E_{\mathrm{intra}}$---being  forbidden for certain edges. 
In fact, it might not even be possible to complete the coloring at all. 

\begin{figure}[ht!]
    \captionsetup{justification=centering}
    \centering
    \includegraphics{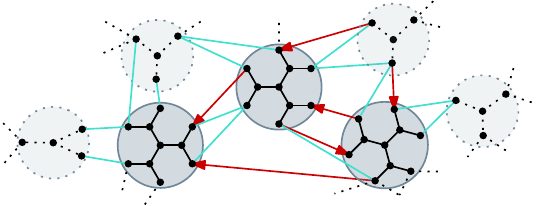}
    \caption{A matching of the intercluster edges that assigns two edges to each clusters. The matching edges are marked red. Arrows point towards the cluster the edge belongs to.}
    \label{fig:clusters}
\end{figure}

Instead, in our algorithm, we first color the edges between the clusters and then try to complete it within the clusters. 
First, observe that the graph induced by $E_{\mathrm{inter}}$ has maximum degree $\Delta'=\Delta-1$, as each node has at least one adjacent edge within a cluster. 
Thus, we can color $E_{\mathrm{inter}}$ with one instance of greedy-type edge coloring, using only $2\Delta'-1=2\Delta-3$ colors. 
The main challenge that we address in this paper is how to ensure that the coloring can be extended to the edges within the clusters. 

For the rest of the this technical overview we assume that the graph is $\Delta$-regular with a sufficiently high girth such that each cluster looks like a tree; our full proof does not require this assumption.
In addition, let us assume for now that each cluster is a perfect $\Delta$-ary tree, that is, all leaves are at the same level.
Then, a simplified version of our main technical contribution gives a useful condition under which we can complete the coloring.

\begin{lemma}[Simplified version of \Cref{lem:colorful_leaves}]
    Consider a $(2\Delta - 2)$-edge colored graph except for a perfect $\Delta$-ary subtree $T \subseteq G$.
    If the number of distinct colors appearing on the incident edges of any leaf in $T$ combined is at least $\Delta$, then the coloring can be extended to $T$.
\end{lemma}

For the remainder of this paper we will refer to the above condition as the \emph{colorful condition}.
To build some intuition for it, consider the simplest possible case, where the set of uncolored edges forms a star graph $S \subseteq G$. 
The only case, in which $\varphi$ cannot be extended to $S$ is if each edge in $S$ has the exact same set of $\Delta - 1$ colors appearing among its adjacent edges, which we illustrate in \Cref{fig:star_graph}. 
By setting up a bipartite graph with available colors on one side and edges on the opposite side, an application of Hall's marriage theorem shows that this is indeed the only bad case.
In the proof of \Cref{lem:colorful_leaves}, we generalize this condition from uncolored stars to  uncolored trees. 

\begin{figure}[ht!]
    \captionsetup{justification=centering}
    \centering
    \begin{subfigure}[t]{0.49 \textwidth}
        \centering
        \includegraphics{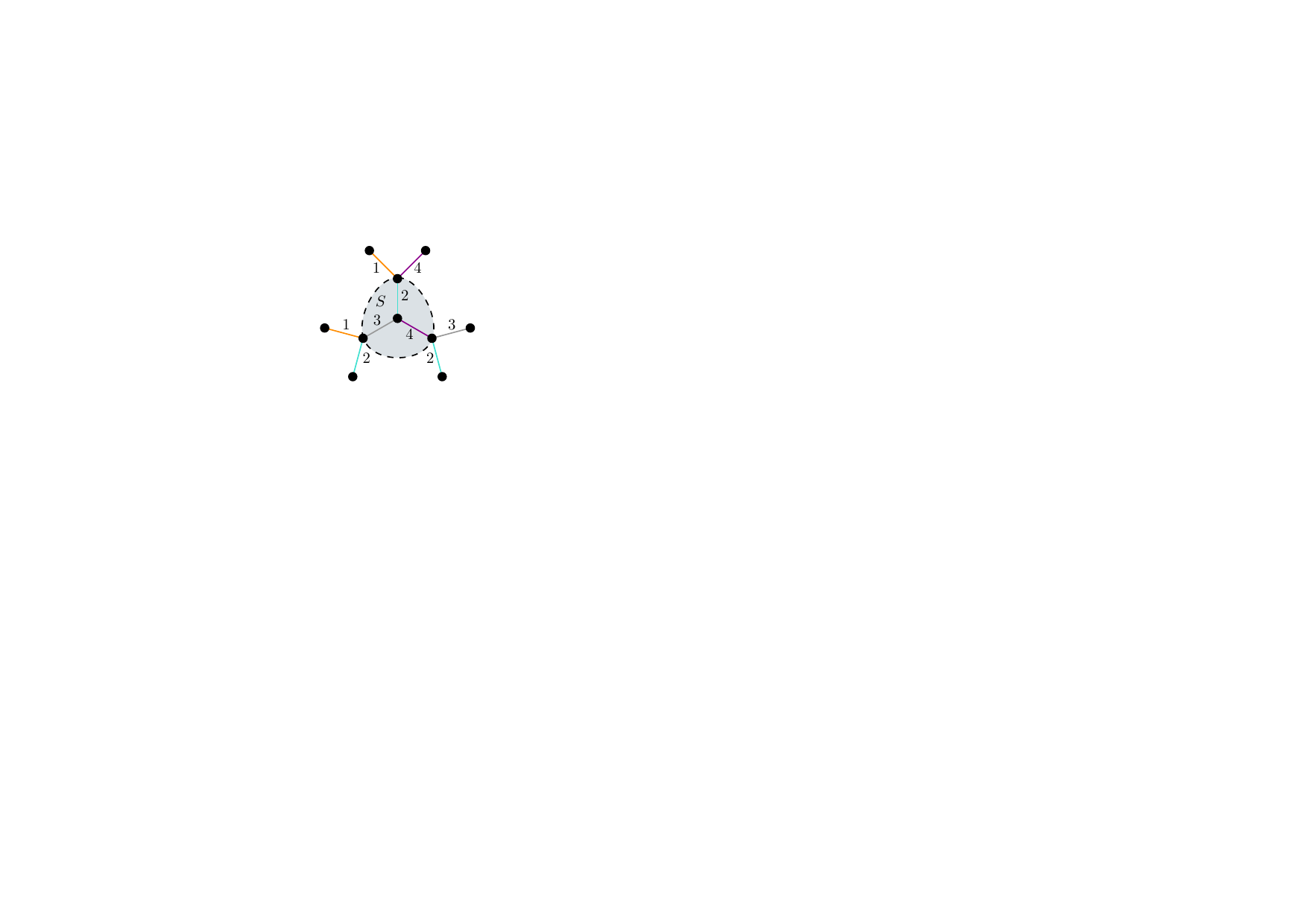}
        \caption{This coloring can be extended.}
    \end{subfigure}
    \begin{subfigure}[t]{0.49 \textwidth}
        \centering
        \includegraphics{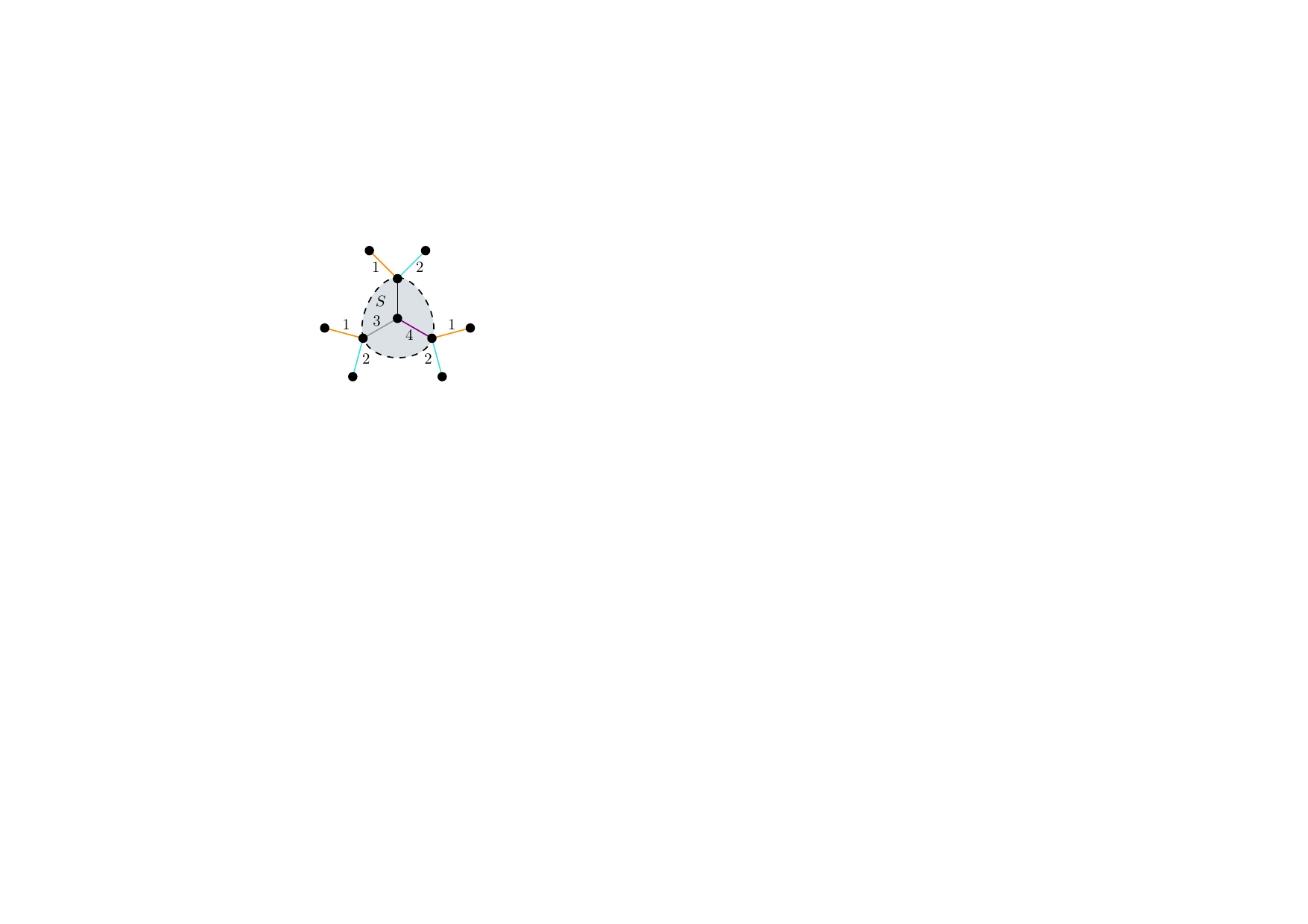}
        \caption{This coloring cannot be extended.}
        \label{fig:forbidden_configuration}
    \end{subfigure}
    \caption{Extendability of $(2\Delta - 2)$-colorings for $\Delta = 3$.}
    \label{fig:star_graph}
\end{figure}

Next, we turn our attention to how the partial edge coloring of $E_{\mathrm{inter}}$  can be modified in order to satisfy the \emph{colorful condition} for every cluster of uncolored edges.
The core observation is that each leaf in the cluster already has $\Delta - 1$ incident edges in $E_{\mathrm{inter}}$, whose colors must all be distinct. 
Therefore, if we can manage to keep the set of colors fixed for one leaf and modify at most one color at another leaf, the colorful condition is already satisfied.
Notably, our partial coloring of $E_{\mathrm{inter}}$ only uses $2\Delta - 3$ of the $2\Delta - 2$ available colors.
As a result, every edge already has at least one free color it can switch to if needed.

The main difficulty now is to construct a set of edges such that each cluster can perform these changes in parallel and without causing any conflicts.
Hence, our desired set of edges needs to satisfy two key properties: 
(1) Each cluster needs exclusive access to two edges in its $1$-hop neighborhood. 
(2) The set of all such edges needs to be a matching.
Computing a greedy maximal matching satisfies (2), but not necessarily (1). 
Therefore, we modify a maximal matching $M \subseteq E_{\mathrm{inter}}$ by letting each cluster send proposals to matching edges in its $2$-hop neighborhood.

\begin{figure}[htbp]
    \centering
    \begin{subfigure}[c]{0.31\textwidth}
        \centering
        \includegraphics{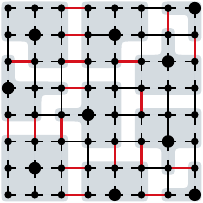}
        \caption{The intracluster edges $E_{\mathrm{intra}}$ are dashed. We first compute a maximal matching $M \subseteq E_{\mathrm{inter}}$, which is highlighted red.}
        \label{fig:maximal_matching}
    \end{subfigure}
    \hfill 
    \begin{subfigure}[c]{0.31\textwidth}
        \centering
        \includegraphics{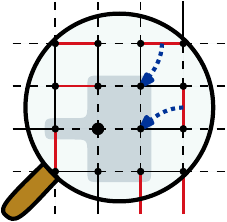}
        \caption{Next, we assign two edges in $M$ to each cluster. This is the only ``non-greedy'' part of our reduction.}
    \end{subfigure}
    \hfill
    \begin{subfigure}[c]{0.31\textwidth}
        \centering
        \includegraphics{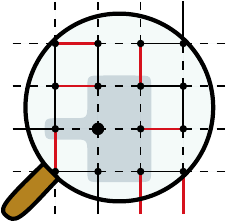}
        \caption{Finally, we show how each cluster can move those edges into its neighborhood while maintaining the matching property.}
    \end{subfigure}
    \caption{In order to complete the coloring on the edges within the clusters, our algorithm ensures that each cluster can independently control the colors of two edges adjacent to the cluster. For illustration purposes this example has many cycles that are dealt with separately in our algorithm.}
    \label{fig:matching-edges-clusters}
\end{figure}

To solve this non-greedy problem of assigning matching edges to clusters, we set up a hypergraph $H$.
Each cluster in $G$ corresponds to a vertex in $H$ and each edge $e \in M$ defines a hyperedge containing all clusters that sent at least one proposal to $e$.
Note that a hypergraph is in fact necessary here, since each edge in $M$ may receive proposals from up to $2\Delta$ different clusters. 
Now our assignment problem can then be viewed as a hypergraph orientation problem, where each hyperedge is said to be outgoing for exactly one of its vertices and ingoing for all others.
In this setting, the problem has already been studied in the literature under the name \emph{hypergraph sinkless orientation}.
As long as each cluster sends out sufficiently more proposals than any edge in $M$ receives, it can be solved in $\mathcal{O}(\log_\Delta n)$ rounds deterministically and $\mathcal{O}(\log_\Delta \log n)$ rounds randomized \cite{HSO}.
This is the only ``non-greedy'' part of our reduction and it dominates the runtime.
Every other step runs in $\mathcal{O}(\log^{\ast} n)$ rounds for constant-degree graphs, which shows that this is an essential part of our reduction in light of the $\Omega(\log_\Delta n)$ lower bound for edge-coloring with less than $2\Delta - 1$ colors.

Now as a final technical remark, while each cluster may own two assigned edges, these may not be directly adjacent to the cluster and controlling their color would still be useless in terms of obtaining solvability of the cluster.
To solve this issue we will show that cluster $C$ can move its assigned edges into its $1$-hop neighborhood along a shortest path to $C$, satisfying (1), while maintaining the matching property (2). 
Finally, the colorful condition is satisfied for every cluster and we can complete the $(2\Delta - 2)$-edge coloring for each cluster in parallel.

\subsection{Related work}
\label{subsec:related_work}

Distributed graph coloring is one of the most extensively studied problems in the field of distributed computing. Many foundational results are covered in the excellent book by Barenboim and Elkin \cite{barenboimelkin_book}. In this section, we survey both, results for greedy and non-greedy  edge coloring, highlighting a range of deterministic and randomized algorithms.

\subsubsection*{Greedy distributed edge coloring}

\subparagraph*{Deterministic algorithms}

In a seminal paper from 1987, Linial showed that any deterministic $3$-coloring algorithm on the $n$-cycle takes at least $1/2 \cdot \log^{\ast} n$ rounds in the \LOCAL model \cite{Linial1}. 
Naor later extended this lower bound to apply to randomized algorithms as well \cite{RandomizedLowerBound}.
On the upper bound side, Linial also presented an $\mathcal{O}(\log^{\ast} n)$-round $\mathcal{O}(\Delta^2)$-coloring algorithm \cite{Linial1}.
Combined with the classical one-color-per-round reduction this directly yields a $(2\Delta - 1)$-edge coloring algorithm in $\mathcal{O}(\Delta^2 + \log^{\ast} n)$ rounds.
With some clever tricks, subsequent works \cite{VertexColoringSublinear, LocallyIterative, VertexColoringLinear, SimpleAlgorithms, OnTheComplexity, LocalConflictColoring, LocalConflictColoringRevisited} have improved on this simple color reduction scheme, culminating in the current state-of-the-art for $(2\Delta - 1)$-edge coloring using $\mathcal{O}(\log^{12} \Delta + \log^{\ast} n)$ rounds \cite{EdgeColoringPolylogarithmicInDelta}.
Another line of research focuses on the complexity of graph coloring as a function of $n$, independent of $\Delta$.
Early works by Panconesi \& Srinivasan \cite{FirstDeltaColoringSTOC, NetworkDecompositionJournal} show that using network decompositions, one can find a $(2\Delta - 1)$-edge coloring in $\exp(\mathcal{O}(\sqrt{ \log n }))$ rounds. 
A long standing open question in the field asked for $(2\Delta-1)$-edge coloring algorithm in polylogarithmic time.
Barenboim and Elkin made progress by showing that the $\Delta^{1+\varepsilon}$-coloring problem can be solved in $\mathcal{O}(\log^2 n)$ rounds \cite{BoundedNeighborhoodIndependence}. 
Ghaffari \& Su later gave a $\mathrm{poly}(\log n)$ algorithm for $(2\Delta - 1)(1 + o(1))$-edge coloring \cite{DegreeSplitting}, which in turn was improved by \cite{DegreeSplittingJournal}.
Fischer, Ghaffari, and Kuhn \cite{EdgeColorHyperMaxMatch} fully resolved the question with a breakthrough $\mathcal{O}(\log^9 n)$-round algorithm based on a novel hypergraph maximal matching procedure.
Since then, the field has picked up the pace with a lot of new algorithms being developed in the last decade, e.g. \cite{EdgeColoringPolylogarithmicInDelta, Rounding, FasterMIS, GreedyAlgorithm, Derandomizing, ColoringWithoutNetDComp, 3Delta/2HMM, HarrisJournal}.
The fastest algorithm for general graphs currently comes from the MIS algorithm in \cite{GreedyAlgorithm} and runs in $\widetilde{\mathcal{O}}(\log^{5/3} n)$ rounds.

\subparagraph*{Randomized algorithms}

The first randomized algorithm for greedy edge coloring follows from the seminal $\mathcal{O}(\log n)$-round MIS algorithm independently developed by Luby \cite{LubySTOC} and Alon, Babai and Itai \cite{ABI}. This upper bound stood for almost three decades until it was surpassed by the $\mathcal{O}(\log \Delta) + 2^{\mathcal{O}(\sqrt{ \log \log n })}$-round algorithm introduced in \cite{LocalitySymmetryBreaking}, pioneering the now ubiquitous shattering framework for randomized algorithms. This algorithm was later improved to $\mathcal{O}(\log^* \Delta) + T^{\mathrm{det}}_{\mathrm{deg+1}}(\poly\log n)$ by Chang, Li, and Pettie \cite{OptimalColoring}. Here, $T^{\mathrm{det}}_{\mathrm{deg+1}}(n')$ is the deterministic complexity of the $(\mathrm{deg}+1)$-list coloring problem on instances of size $n'$ \cite{CLP20}. 
A major breakthrough came with the first polylogarithmic network decomposition algorithm \cite{NDCompBreakthrough}, reducing the complexity of the second term in both results and enabling $(2\Delta-1)$-edge coloring in $\mathrm{poly}(\log\log n)$ rounds.
Subsequent works \cite{Rounding,FasterMIS,GreedyAlgorithm,ImprovedNDComp,LowDiameterNDComp} continued to refine the runtime with the current state-of-the-art standing at $\widetilde{\mathcal{O}}(\log^{5/3} \log n)$ rounds \cite{GreedyAlgorithm}.

\subsubsection*{Non-greedy distributed edge coloring}

\begin{table}[H]
    \label{tab:edge-coloring}
    \centering
    \begin{tabular}{lcccc}
        \toprule
        \textbf{Citation} & \textbf{Year} & \textbf{Runtime} & \textbf{\# Colors} & \textbf{Random?} \\
        \midrule
        \cite{FirstDeltaColoringSTOC} & 1992 & $2^{\mathcal{O}(\sqrt{ \log n })}$ & $2\Delta - 2$ & \\
        \cite{3Delta/2} & 2018 & $\mathrm{poly}(\Delta, \log n)$ & $\Delta + \mathrm{poly}(\log n)$  & \\
        \cite{3Delta/2} & 2018 & $\mathcal{O}(\Delta^{9} \cdot \mathrm{poly} \log n)$ & $3\Delta/2$  & \\
        \cite{3Delta/2HMM} & 2019 & $\widetilde{\mathcal{O}}(\Delta^4 \cdot \log^6 n)$ & $3\Delta/2$  & \\
        \cite{VizingLocality} & 2019 & $\mathrm{poly}(\Delta, \log n)$ & $\Delta + \mathcal{O}(\log n / \log \log n)$  &  \\
        \cite{Bernshteyn, VizingBoundedDegree} & 2022 & $\widetilde{\mathcal{O}}(\Delta^{84} \cdot \log^5 n)$ & $\Delta + 1$ & \\
        \cite{HSO} & 2025 & $\mathcal{O}(\Delta^2 \cdot \log n)$ & $3\Delta/2$ & \\
        \cite{HSO} & 2025 & $\widetilde{\mathcal{O}}(\varepsilon^{-2} \cdot \log^2 \Delta \cdot \log n)$ & $(3/2 + \varepsilon)\Delta$ & \\
        \cite{FasterDeltaColoring} & 2025 & $\mathcal{O}(\log^4 \Delta + \log^2 \Delta \cdot \log n \log^\ast n)$ & $2\Delta - 2$ & \\
        \bottomrule \\
        \cite{NibbleMethod} & 1998 & $\mathcal{O}(\log n)$ & $(1+\varepsilon)\Delta$ & \dice{6} \\
        \cite{MuchEasier} & 2015 & $\mathcal{O}(\log^{\ast} \Delta \cdot \lceil \frac{\log n}{\Delta^{1 - o(1)}}\rceil)$ & $(1+\varepsilon)\Delta$ & \dice{6} \\
        \cite{DeltaColoring,NDCompBreakthrough} & 2018 & $\mathcal{O}(\log \Delta) + \poly(\log \log n)$ & $2\Delta - 2$ & \dice{6} \\
        \cite{SmallPalettes} & 2018 & $\mathrm{poly}(\log n)$ & $\Delta + \widetilde{\Theta}\left(\sqrt{ \Delta }\right)$ & \dice{6} \\
        \cite{VizingLocality} & 2019 & $\mathrm{poly}(\Delta, \log n)$ & $\Delta + 2$  & \dice{6} \\
        \cite{VertexSplitting} & 2022 & $\mathrm{poly}(\log\log n)$ & $(1+\varepsilon)\Delta$ & \dice{6} \\
        \cite{Bernshteyn, VizingBoundedDegree} & 2023 & $\mathcal{O}(\mathrm{poly}(\Delta) \cdot \log^2 n)$ & $\Delta + 1$ & \dice{6} \\
        \cite{Davies} & 2023 & $\mathrm{poly}(\varepsilon^{-1} \cdot \log \log n)$ & $(1+\varepsilon)\Delta$ & \dice{6} \\
        \cite{FasterDeltaColoring} & 2025 & $\widetilde{\mathcal{O}}(\log^{8/3}\log n)$ & $2\Delta - 2$ & \dice{6} \\
        \bottomrule
    \end{tabular}
    \caption{Edge-coloring algorithms in the LOCAL model using fewer than $2\Delta - 1$ colors.}
    \label{tab:edge_coloring}
\end{table}

Despite the more challenging nature of the problem, there has also been a lot of progress in the non-greedy regime of coloring with $k < 2\Delta - 1$ colors.
On the lower bound side, the authors of \cite{SmallPalettes} established that coloring the edges of a graph with at most $2\Delta - 2$ colors takes at least $\Omega(\log_\Delta n)$ rounds deterministically and $\Omega(\log_\Delta \log n)$ rounds randomized.
While most of the algorithms breaking the greedy color barrier were only developed in the last decade, there are two remarkable results before the turn of the millennium. 

\subparagraph*{$\Delta$-vertex coloring algorithms}

In 1992, Panconesi \& Srinivasan \cite{FirstDeltaColoringSTOC,FirstDeltaColoringJournal} presented the first distributed $\Delta$-coloring algorithm running in $\exp(\mathcal{O}(\sqrt{ \log n }))$ rounds. This deterministic result also implies the first distributed $(2\Delta - 2)$-edge coloring algorithm.
After two decades without improvements, the authors of \cite{DeltaColoring} finally decreased the upper bound for the randomized complexity of $\Delta$-coloring to $\mathcal{O}(\log \Delta) + 2^{\mathcal{O}(\sqrt{ \log \log n })}$.
Combining these seminal results with the new polylogarithmic algorithms for network decomposition \cite{NDCompBreakthrough,ImprovedNDComp,FasterMIS,Rounding,LowDiameterNDComp,GreedyAlgorithm} yields runtimes of $\mathrm{poly}(\log n)$ and $\mathcal{O}(\log \Delta) + \mathrm{poly}(\log \log n)$, respectively.
Recently, Bourreau, Brandt and Nolin \cite{FasterDeltaColoring} showed that $\Delta$-coloring can be solved in $\mathcal{O}(\log^4 \Delta + \log^2 \Delta \cdot \log n \log^\ast n)$ deterministic time and $\widetilde{\mathcal{O}}(\log^{8/3} \log n)$ randomized time. 
The randomized version of this result is also the current state-of-the-art for $(2\Delta - 2)$-edge coloring. 
Moreover, for $\Delta = \omega(\log^{21} n)$ the randomized complexity of $\Delta$-coloring collapses to $\mathcal{O}(\log^{\ast} n)$ due to \cite{DistributedBrooksTheorem}.
Finally, a very recent result \cite{NewDeltaColoring} gives an optimal $\Delta$-coloring algorithm for a restricted graph class.
However, since line graphs are not included in this class, it does not yield any new edge coloring results.

\subparagraph*{Randomized algorithms via the distributed Lovász Local Lemma}

In 1998, Dubhashi, Grable \& Panconesi \cite{NibbleMethod} presented an $(1+\varepsilon)\Delta$-edge coloring algorithm using the \emph{Rödl nibble method}, which runs in $\mathcal{O}(\log n)$ rounds, assuming $\Delta > (\log n)^{1+\gamma}$.
After the turn of the millennium, there has been no progress on the problem, until Elkin, Pettie \& Su \cite{MuchEasier} improved upon \cite{NibbleMethod} in 2015. 
Their randomized $(1+\varepsilon)\Delta$-edge coloring algorithm reduces the problem to the distributed Lovász Local Lemma and runs in $\mathcal{O}(\log^{\ast} \Delta +  \frac{\log n}{\Delta^{1 - o(1)}})$ rounds. 
In both cases, $\varepsilon$ need not be constant, but it is not clear how small it can be made as a function of $\Delta$. 
Focusing on palette size over speed, the authors of \cite{SmallPalettes} give a randomized $(\Delta + \widetilde{\Theta}(\sqrt{ \Delta }))$-edge coloring algorithm that runs in $\mathrm{poly}(\log n)$ rounds---nearly matching the natural threshold $\Delta + \Theta(\sqrt{ \Delta })$, the smallest palette size for which there still is a constant probability of being able to color any edge given a random feasible coloring of its neighborhood.
Recently, new advances in variants of the distributed Lovász Local Lemma \cite{VertexSplitting, Davies} have reduced the complexity of $(1+\varepsilon)\Delta$-edge coloring further to $\mathrm{poly}( \log \log n)$.

\subparagraph*{Deterministic algorithms}

In a 2018 paper \cite{3Delta/2}, the authors present a $\mathcal{O}(\Delta^9 \cdot \mathrm{poly}(\log n))$-time algorithm for $(3\Delta/2)$-edge coloring based on maximum matchings in bipartite graphs. 
Their result has since seen a series of improvements. 
First, Harris \cite{HarrisJournal} presented improved algorithms for hypergraph maximal matchings, reducing the runtime to $\widetilde{\mathcal{O}}(\Delta^4 \cdot \log^6 n)$ rounds. 
Next, a novel algorithm for hypergraph sinkless orientation \cite{HSO} further reduced the complexity to $\mathcal{O}(\Delta^2 \cdot \log n)$ rounds.
Using a splitting technique from \cite{DegreeSplittingJournal, DegreeSplittingConference} this algorithm can also be adapted to a $(3/2 + \varepsilon)\Delta$-edge coloring algorithm with runtime $\widetilde{\mathcal{O}}(\varepsilon^{-2} \cdot \log^2 \Delta \cdot \log n)$. 
Notably, this implies a $\mathcal{O}(\log^3 n)$-round algorithm for $(2\Delta - 2)$-edge coloring in general graphs.

\subparagraph*{Vizing chain algorithms}

A Vizing chain is an alternating path used to swap colors and resolve conflicts when trying to assign a new color to an edge in a graph that cannot be colored greedily. Vizing chains form a central role in Vizing's celebrated result stating that every graph is $(\Delta + 1)$-edge colorable \cite{Vizing}.
Recent years have brought significant progress on distributed versions of Vizing's theorem.
In \cite{VizingLocality}, the authors give the first $\mathrm{poly}(\Delta, \log n)$ algorithms for randomized $(\Delta + 2)$-edge coloring and deterministic $\Delta + \mathcal{O}(\log n / \log \log n)$ edge coloring.
Further, Bernshteyn and Dhawan \cite{Bernshteyn, VizingBoundedDegree} recently gave a deterministic $(\Delta + 1)$-edge coloring algorithm in $\widetilde{\mathcal{O}}(\Delta^{84} \cdot \log^5 n)$ rounds, together with a faster randomized version running in $\mathcal{O}(\mathrm{poly}(\Delta) \cdot \log^2 n)$ rounds.
While the exponent in $\Delta$ in the runtime is certainly not yet optimized, there is a fundamental barrier to this approach.
The authors of \cite{SmallPalettes} have shown that it requires recoloring subgraphs of diameter $\Omega(\Delta \cdot \log n)$ in the worst case.

\subsection{Organization of the paper}

In \Cref{sec:technical} we present our main technical contribution: a sufficient condition under which partial $(2\Delta - 2)$-edge colorings can be extended.
Building on this, we introduce our new algorithms in \Cref{sec:alg}.
We start by presenting our reduction from $(2\Delta - 2)$-edge coloring to MIS. 
Later on, in \Cref{subsec:optimized}, we replace the MIS and maximal matching subroutines to achieve a reduction to just $(2\Delta - 1)$-edge coloring.
Then, in \Cref{sec:randomized} we adapt the reduction to obtain a faster randomized algorithm.
For convenience, we collect all subroutines from the literature that we use in our algorithms in the appendix, to serve as a point of reference.

\section{Preliminaries}

We will use the notation $[k] := \{ 1,\dots, k \}$.
Let $G = (V,E)$ be an undirected graph.
The \emph{neighborhood} of a vertex $v \in V$ in $G$ is $N_G(v) = \{ w \in V: vw \in E \}$ 
and the \emph{edge-neighborhood} of $v$ is $N_E(v) = \{ e \in E: \lvert e \cap \{ v \} \rvert = 1 \}$.
Further, for a subset of vertices $W \subseteq V$ we write $N_E(W) = \{ e \in E: \lvert e \cap W \rvert = 1\}$.
The \emph{distance} between two vertices $u,v \in V$ is defined as the number of edges on a shortest path connecting $u$ and $v$ in $G$ and is denoted by $\mathrm{dist}(u,v)$. We define the distance between an edge $e = uv \in E$ and a vertex $w \in V$ as 
$$
\mathrm{dist}(e,w) := \min(\mathrm{dist}(u,w),\mathrm{dist}(v,w)).
$$
For any positive integer $k > 0$ we define the \emph{power graph} $G^k = (V,E^k)$ by adding an edge for every pair of vertices $u, v \in V$ with $\mathrm{dist}(u,v) \leq k$ to $E^k$.
For any positive integer $k > 0$ and any vertex $v \in V$ we define the \emph{$k$-hop neighborhood} of $v$ to be $N^k(v) = N_{G^k}(v)$.
For a subset of edges $F \subseteq E$ we write $G - F$ to denote the graph obtained by removing all edges in $F$ from $G$.
Further, for a subset $V' \subseteq V$, we denote by $G[V'] = (V',E')$ the subgraph of $G$ \emph{induced} by $V'$. 
The edge set $E'$ is given by the subset of edges with both endpoints in $V'$. 
Similarly, for a subset of edges $E' \subseteq E$ we define the \emph{edge-induced subgraph} $G[E'] = (V',E')$,
where $V'$ is given by the subset of vertices that are incident to at least one edge in $E'$.
We define the \emph{degree} $\mathrm{deg}(e)$ of an edge $e \in E$ as the number of edges adjacent to $e$.
A \emph{hypergraph} is a generalization of a graph in which each edge can contain more than two vertices. The \emph{rank} of a hyperedge $e$ is the number of vertices contained in $e$.
A \emph{proper $c$-edge coloring} is a function $\varphi: E \to [c]$ that assigns different colors to adjacent edges.
In this paper we will work a lot with \emph{partial edge colorings} that assign colors only to a subset of edges.

\begin{definition}
    Let $G = (V,E)$ be a graph, $F \subseteq E$ and $\varphi: E \setminus F \to [2\Delta - 2]$ a partial edge coloring of $G$. Then we call $\varphi^\star$ an \emph{extension} of $\varphi$ to $F$ if $\varphi^\star$ is a proper $(2\Delta - 2)$-edge coloring of $G$ and $\varphi^\star(e) = \varphi(e)$ for all $e \in E \setminus F$. Further, we say a color $c$ is \emph{available} for an edge $e \in F$, if no adjacent edge already has color $c$.
\end{definition}

\begin{definition}[$(\alpha,\beta)$-ruling set]
    For two integers $\alpha, \beta \geq 1$, an \emph{$(\alpha,\beta)$-ruling set} for a subset $W \subseteq V$ is a set of nodes $S \subseteq V$ such that the distance between any two nodes in $S$ is at least $\alpha$ and any node $w \in W$ has a distance of at most $\beta$ to the closest node in $S$. If $V = W$, we call $S$ an $(\alpha,\beta)$-ruling set.
\end{definition}

Ruling sets generalize maximal independent sets, which are $(2,1)$-ruling sets. 
In our algorithm we will use ruling sets as a tool to partition the vertex sets into clusters with some specific properties. 
Most importantly, we will need that for each cluster $C$ there is a central vertex $r \in C$ such that the whole $k$-hop neighborhood around $r$ is contained in $C$, for some parameter $k > 0$. 
This ensures that each cluster has sufficiently many vertices, which will become important when we try to assign two exclusive matching edges to each cluster.

\begin{definition}[$(\alpha,\beta)$-clustering]
    Let $G = (V,E)$ be a graph and $\mathcal{R} \subseteq V$ a subset of vertices. Then, we call a partition $\bigcup_{r \in \mathcal{R}} C(r) = V$ of the vertex set into pairwise disjoint subsets a \emph{$(\alpha,\beta)$-clustering} with respect to $\mathcal{R}$ if $N^{\alpha}(r) \subseteq C(r)$ and $\mathrm{diam}(C(r)) \leq \beta$ for all $r \in \mathcal{R}$.
\end{definition}

\begin{remark}
    Computing a $(2,\beta)$-ruling set on the power graph $G^k$ and letting each vertex join the cluster of one of its closest ruling set nodes leads to a $(\lfloor \frac{k}{2} \rfloor , k \cdot \beta)$-clustering. 
\end{remark}

In the third step of our algorithm, our goal is to assign two exclusive edges to each cluster, such that the set of all such edges forms a matching in $G$. 
Starting with a maximal matching, we construct an auxiliary hypergraph $H$ that lets us modify the matching in order to satisfy this property.
This can be seen as an instance of the \emph{hypergraph sinkless orientation} problem, which is a natural generalization of the fundamental \emph{sinkless orientation} problem.

\begin{definition}[Hypergraph sinkless orientation (HSO)]
    Let $H = (V,E)$ be a hypergraph. The objective of the \emph{hypergraph sinkless orientation} problem is to orient the hyperedges of $H$ such that every vertex $v \in V$ has at least one outgoing hyperedge. We define an oriented hyperedge to be outgoing for exactly one of its incident vertices and incoming for all others.
\end{definition}

If the minimum degree $\delta = \min_{v \in V} \mathrm{deg}(v)$ and the maximum rank $r = \max_{e \in E} \mathrm{rank}(e)$ of $H$ satisfy $\delta > r$, then this problem can be solved efficiently using the algorithms in \Cref{thm:hso,thm:hso_randomized}.
We follow the standard assumption that each vertex in the \LOCAL model knows the number of nodes $n = \lvert V \rvert $ of the graph, as well as the maximum degree $\Delta$.

\section{Under which condition can we extend partial \texorpdfstring{$(2\Delta - 2)$}{(2Δ-2)}-colorings?}
\label{sec:technical}

The goal of this section is to establish conditions under which we can extend $\varphi$ to a proper $(2\Delta - 2)$-edge coloring of the entire graph $G$.
As a first step, we examine the case where the subgraph induced by the uncolored edges is a simple star.
In this setting, the only forbidden configuration occurs when all the leaves of the star graph share the same set of $\Delta - 1$ incident colors, which can be seen in \Cref{fig:forbidden_configuration}.

\begin{lemma}[Reaching for the stars]
    \label{lem:stars}
    Let $G = (V,E)$ be a graph and let $\varphi$ be a partial $(2\Delta-2)$-edge coloring of $G$
    that only leaves the edges of a star graph $T = (V_T,E_T) \subseteq G$ uncolored. 
    If $\lvert \varphi(N_E(V_T)) \rvert \geq \Delta$, then there is an extension $\varphi^{\star}$ of $\varphi$ to $E_T$.
\end{lemma}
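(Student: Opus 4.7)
The plan is to reformulate extending $\varphi$ to $E_T$ as finding a system of distinct representatives (SDR) of color lists and then invoke Hall's marriage theorem. Denote the center of the star $T$ by $z$ and its leaves by $\ell_1, \dots, \ell_k$, so that $E_T = \{e_i := z\ell_i : i \in [k]\}$. For each $i$, let $L(e_i) \subseteq [2\Delta-2]$ be the set of colors unused by any already-colored edge adjacent to $e_i$. Because the $e_i$ share the endpoint $z$, an extension $\varphi^{\star}$ corresponds exactly to a choice of distinct $\varphi^{\star}(e_i) \in L(e_i)$, i.e., an SDR of the family $(L(e_i))_{i \in [k]}$; distinctness captures properness at $z$ and membership captures properness at $\ell_i$.

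A simple degree count bounds the list sizes: $z$ has at most $\Delta - k$ already-colored incident edges and each $\ell_i$ has at most $\Delta - 1$, so $|L(e_i)| \geq (2\Delta - 2) - (\Delta - k) - (\Delta - 1) = k - 1$. Hall's condition $\lvert\bigcup_{i \in S} L(e_i)\rvert \geq |S|$ is therefore automatic whenever $|S| \leq k - 1$, and only $S = [k]$ remains. Writing $A_z := \varphi(N_E(\{z\}) \setminus E_T)$ and $A_{\ell_i} := \varphi(N_E(\{\ell_i\}) \setminus \{e_i\})$, we have $\bigcup_{i=1}^k L(e_i) = [2\Delta - 2] \setminus (A_z \cup \bigcap_i A_{\ell_i})$. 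Arguing by contradiction, if this union had fewer than $k$ elements, the trivial upper bounds $|A_z| \leq \Delta - k$ and $|A_{\ell_i}| \leq \Delta - 1$ would have to be tight simultaneously, and moreover $A_{\ell_1} = \cdots = A_{\ell_k} =: A$ of size $\Delta - 1$, disjoint from $A_z$. This is exactly the symmetric configuration of \Cref{fig:forbidden_configuration}.

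The main obstacle — and the sole place where the hypothesis $|\varphi(N_E(V_T))| \geq \Delta$ enters — is to rule out this symmetric configuration. In it, every colored edge at a leaf carries a color in $A$ while every colored edge at $z$ carries a color in $A_z$, so $\varphi(N_E(V_T)) \subseteq A_z \sqcup A$. The delicate step is to inspect, for each $\gamma \in A$, the matching formed by the $\gamma$-colored edges of $G$: since $\gamma \in A_{\ell_i}$ for every $i$, this matching must saturate all $k$ leaves, decomposing their incidences into $2p_\gamma$ from leaf-to-leaf edges and $q_\gamma$ from leaf-to-outside edges with $2p_\gamma + q_\gamma = k$. Double-counting these matchings across the $\Delta - 1$ colors of $A$ shows that the colors of $A$ which actually appear on leaf-to-outside edges — the only leaf contribution to $\varphi(N_E(V_T))$ — are strictly fewer than $k$, giving $|\varphi(N_E(V_T))| \leq (\Delta - k) + (k - 1) = \Delta - 1$ and contradicting the hypothesis. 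This closes Hall's condition at $S = [k]$; assigning each $e_i$ the color chosen by the SDR produces the desired extension $\varphi^{\star}$.
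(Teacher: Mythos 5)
Your setup is the same as the paper's: encode availability as a bipartite graph/list system, invoke Hall's theorem, note that $|L(e_i)|\geq k-1$ settles every $S\subsetneq[k]$, and characterize the only possible violation at $S=[k]$ as the symmetric configuration $A_{\ell_1}=\cdots=A_{\ell_k}=:A$ with $|A|=\Delta-1$, $|A_z|=\Delta-k$ and $A\cap A_z=\emptyset$. The divergence is in how that configuration is refuted, and there your argument has a genuine gap: the ``double-counting'' claim that strictly fewer than $k$ colors of $A$ appear on leaf-to-outside edges is only asserted, and it is false whenever $k<\Delta$. If no two leaves are adjacent in $G$, then \emph{every} color of $A$ appears on a leaf-to-outside edge, giving $\Delta-1\geq k$ such colors, not at most $k-1$. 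Worse, in this regime the literal statement itself fails, so no counting can close the gap. Concretely, take $\Delta=3$, palette $\{1,2,3,4\}$, center $z$ with one colored edge of color $3$ to a vertex outside $V_T$ and uncolored edges to leaves $\ell_1,\ell_2$, where each leaf has two colored edges to distinct outside vertices with colors $1$ and $2$. Then $\varphi(N_E(V_T))=\{1,2,3\}$ has size $\Delta$, yet both uncolored edges have the single available color $4$ and share the endpoint $z$, so no extension exists. (The same construction works for any $2\leq k\leq\Delta-1$: give the center $\Delta-k$ colored edges with colors $B$, and every leaf the same $(\Delta-1)$-set $A$ disjoint from $B$ on leaf-to-outside edges; then $|\varphi(N_E(V_T))|=2\Delta-k-1\geq\Delta$ while all $k$ mutually adjacent star edges have the same $k-1$ available colors.)

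The lemma is really only true, and only used (\Cref{lem:colorful_leaves}), in the case where \emph{all} edges incident to the center are uncolored, i.e.\ $A_z=\emptyset$; in the algorithm this holds because the root's entire neighborhood lies inside its cluster's BFS tree. The paper's proof tacitly assumes this case as well (e.g.\ when it claims every star edge has at least $\Delta-1$ available colors). Under that assumption your machinery collapses to the paper's one-line contradiction: Hall's condition is automatic for $k\leq\Delta-1$, and for $k=\Delta$ every edge of $N_E(V_T)$ is a leaf-to-outside edge, so in the bad configuration $\varphi(N_E(V_T))\subseteq A$ and $|\varphi(N_E(V_T))|\leq\Delta-1<\Delta$, contradicting the hypothesis. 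So: restrict to the full-star case (or add that hypothesis explicitly) and drop the double-counting step; as written, that step is the flaw.
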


To prove this lemma we invoke the following classical combinatorial result.

\begin{theorem}[Hall's theorem \cite{Hall}]
    \label{thm:hall}
    Let $G = (V \,\dot\cup\, U, E)$ be a bipartite graph. For any subset $S \subseteq U$, let $N(S)$ denote the neighborhood of $S$ in $G$. Then $G$ has a $U$-saturating matching, if and only if $|N(S)| \geq |S|$ for all $S \subseteq U$.
\end{theorem}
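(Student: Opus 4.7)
The plan is to recast the extension problem as a system of distinct representatives and verify the Hall condition by combining simple degree counting with the colorful hypothesis.

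Let $c$ be the center of $T$ and $v_1, \ldots, v_k$ its leaves, so $E_T = \{cv_1, \ldots, cv_k\}$. Write $A_c := \varphi(N_E(\{c\}) \cap (E \setminus E_T))$ and $A_{v_i} := \varphi(N_E(\{v_i\}) \cap (E \setminus E_T))$ for the palettes already used on the boundary edges around $c$ and around each leaf $v_i$, and set $L_i := [2\Delta-2] \setminus (A_c \cup A_{v_i})$ for the list of colors still available at the star edge $cv_i$. An extension $\varphi^\star$ of $\varphi$ to $E_T$ exists if and only if the star edges can be given pairwise distinct colors from their lists, i.e.\ if a system of distinct representatives exists. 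Applying Hall's theorem (\Cref{thm:hall}) to the bipartite graph with left side $E_T$, right side $[2\Delta-2]$, and an edge from $cv_i$ to every $\gamma \in L_i$, this reduces to verifying
\[
\Big|\bigcup_{i\in I} L_i\Big| \;\geq\; |I| \qquad \text{for every } I \subseteq [k].
\]

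For the easy range I would use only degree counting. Since the $k$ star edges at $c$ are uncolored, $|A_c| \leq \Delta - k$, and since $cv_i$ itself is uncolored at $v_i$, $|A_{v_i}| \leq \Delta - 1$. Therefore $|L_i| \geq (2\Delta - 2) - |A_c| - |A_{v_i}| \geq k-1$ for every $i$, so for any $I \subseteq [k]$ with $|I| \leq k-1$ we get $|\bigcup_{i\in I}L_i| \geq |L_{i_0}| \geq k-1 \geq |I|$ by picking any $i_0\in I$.

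The only remaining case is $I = [k]$, where Hall's condition becomes $|A_c \cup \bigcap_i A_{v_i}| \leq 2\Delta - 2 - k$. This is exactly where the colorful hypothesis enters. The plan is to let $B := \bigcap_i A_{v_i}$ and $U := \bigcup_i A_{v_i}$ and observe that a violation of Hall forces $B$ to be almost as large as $U$ (so the leaf palettes $A_{v_i}$ essentially coincide), while the degree bound $|A_{v_i}| \leq \Delta - 1$ caps how large they can be; combining these bounds on $|A_c| + |B|$ with $|A_c \cup U| \geq \Delta$ yields a contradiction. Intuitively, a Hall violation at $I = [k]$ is exactly the configuration of \Cref{fig:forbidden_configuration} (all leaves forbidden by a common set of $\Delta - 1$ colors), and the colorful condition is tailored to rule this out. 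The main obstacle is formulating the counting so that the interaction between $A_c$ and the overlap structure $B \subseteq A_{v_i} \subseteq U$ is tight — once this single-inequality calculation is carried out, Hall's theorem immediately delivers the distinct representatives and hence the desired extension $\varphi^\star$.
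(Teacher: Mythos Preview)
Your proposal does not prove the stated theorem. The statement you were asked about is Hall's marriage theorem (\Cref{thm:hall}), which the paper simply quotes with a citation to \cite{Hall} and does not attempt to prove. What you have written is instead an argument for \Cref{lem:stars}, the ``Reaching for the stars'' lemma, which \emph{applies} Hall's theorem as a black box. So as a proof of the given statement there is a genuine gap: nothing in your write-up establishes the existence of a $U$-saturating matching in an arbitrary bipartite graph satisfying the Hall condition.

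If your intent was actually \Cref{lem:stars}, then your approach coincides with the paper's: build the bipartite incidence graph between star edges and available colors, invoke \Cref{thm:hall}, dispose of all proper subsets $I\subsetneq[k]$ via the degree bound $|L_i|\ge k-1$, and reserve the colorful hypothesis for the full set $I=[k]$. The paper finishes that last case by a direct contradiction: if Hall fails at $E_T$, then every leaf $v$ is forced to have $\varphi(N_E(v)\cap N_E(V_T))$ equal to one common $(\Delta-1)$-set $C$, which contradicts $|\varphi(N_E(V_T))|\ge\Delta$. You sketch the same contradiction through $B=\bigcap_i A_{v_i}$ and $U=\bigcup_i A_{v_i}$ but explicitly leave the inequality uncomputed (``once this single-inequality calculation is carried out\ldots''), so even viewed as a proof of \Cref{lem:stars} your write-up stops short at precisely the one non-automatic step.
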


\begin{proof}[Proof of \Cref{lem:stars}]
    We construct the auxiliary bipartite graph $B = (E_T \cup [2\Delta-2], F)$ as follows: 
    For every edge $e \in E_T$ and every color $c \in [2\Delta -2]$ we add the edge $(e,c)$ to $F$ if and only if $c$ is available for $e$ under $\varphi$. 
    Observe that any $E_T$-saturating matching in $B$ corresponds to an extension of $\varphi$ to $E_T$.
    By \Cref{thm:hall} such a matching exists if and only if for any subset $S \subseteq E_T$ it holds that $\lvert N_B(S) \rvert \geq \lvert S \rvert$.
    Since any edge in $T$ has at least $\Delta - 1$ colors available and $\lvert E_T \rvert  \leq \Delta$, the only set left to check is $S = E_T$.
    For contradiction sake, assume that $\lvert N_B(E_T) \rvert = \Delta - 1$ and let $r \in V_T$ be the center of the star. 
    This implies that for every $e = rv \in E_T$ all colors in $C := [2\Delta - 2] \setminus N_B(E_T)$ are already occupied. 
    Since $\lvert C \rvert = \Delta - 1 \geq \lvert N_E(v) \cap N_E(V_T)\rvert $, this is only possible if $\varphi(N_E(v) \cap N_E(V_T)) = C$ for all $e \in E_T$ and therefore $\varphi(N_E(V_T)) = C$.
    But $\lvert \varphi(N_E(V_T)) \rvert \geq \Delta$, a contradiction!
\end{proof}

As the next step we show that this condition generalizes to trees as well. 
More precisely, we will show that if $\varphi$ assigns at least $\Delta$ distinct colors to the edges adjacent to at least one vertex at distance $k$ from the root, then there is always an extension of $\varphi$ that assigns at least $\Delta$ distinct colors to the edges adjacent to any vertex at distance $k - 1$ from the root. 

\begin{lemma}[Colorful leaves make any tree happy]
    \label{lem:colorful_leaves}
    Let $G = (V,E)$ be a $\Delta$-regular graph, $T = (V_T,E_T) \subseteq G$ a tree, $r \in V_T$ and $\varphi: E \setminus E_T \to [2\Delta - 2]$ a partial edge coloring of $G$. Let 
    $$
    V_{k} := \{ v \in V_T: \mathrm{dist}_T(v,r) = k \} \qquad \text{and} \qquad E_k = \{ e \in E_T: \mathrm{dist}_T(e,r) = k \}.
    $$
    If there is a $k \in \mathbb{N}$ such that $|\varphi(N_E(V_k))| \geq \Delta$, then there exists an extension $\varphi^{\star}$ of $\varphi$ to $E_T$.
\end{lemma}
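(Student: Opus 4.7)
The plan is to color the edges of $T$ in two stages, using \Cref{lem:stars} as the base case at the root.

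\emph{Stage 1: coloring below the colorful layer.} I would first color all the edges in $\bigcup_{\ell \geq k} E_\ell$, processing layers from the deepest one inward toward $\ell = k$. After all deeper layers have been handled, every $u \in V_{\ell+1}$ has exactly one uncolored incident edge (the one to its parent in $V_\ell$), so its $\Delta - 1$ already-colored edges carry $\Delta - 1$ distinct colors. For each $v \in V_\ell$ with $d_v$ children, a short counting argument in the same spirit as the Hall computation inside the proof of \Cref{lem:stars} shows each of the $d_v$ star-edges at $v$ admits at least $d_v$ available colors; Hall's theorem then produces a valid assignment for the star, and different stars at different $v \in V_\ell$ are vertex-disjoint so their colorings do not interact. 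Crucially, every Stage~1 edge that lies in $N_E(V_k)$ belongs to $E_k$, so Stage~1 can only add colors to $\varphi(N_E(V_k))$, and the colorful condition at layer $k$ is preserved throughout.

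\emph{Stage 2: coloring above the colorful layer.} After Stage~1 the uncolored edges form a subtree $T' \subseteq T$ rooted at $r$ whose leaves lie precisely in $V_k$. I would then prove by induction on $k$ the auxiliary claim that whenever the uncolored part is such a $T'$ and the colorful condition holds at its leaf layer, the partial coloring extends to $E_{T'}$. The base case $k = 1$ is \Cref{lem:stars} applied to the star $E_0$; the only thing to check is $\varphi(N_E(V_1)) = \varphi(N_E(\{r\} \cup V_1))$, which follows because in the $\Delta$-regular graph $G$ all neighbors of $r$ lie in $V_1$. For the inductive step I would color the leaf layer $E_{k-1}$ --- one star per vertex of $V_{k-1}$, each again solvable by Hall's theorem exactly as in Stage~1 --- in a way that re-establishes the colorful condition at $V_{k-1}$, after which the inductive hypothesis finishes the job on $T' - E_{k-1}$.

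The main obstacle lies in the inductive step of Stage~2: arranging the coloring of $E_{k-1}$ so that $\varphi(N_E(V_{k-1}))$ spans at least $\Delta$ distinct colors. After coloring, each $v \in V_{k-1}$ carries exactly $\Delta - 1$ distinct colors on its $\Delta - 1$ colored incident edges, and the dangerous scenario is that all vertices of $V_{k-1}$ end up with the same $(\Delta - 1)$-element color palette, yielding only $\Delta - 1$ colors in total. To rule this out I plan to leverage the colorful condition at $V_k$ --- which supplies at least $\Delta$ distinct colors across the children layer --- together with the freedom in which SDR Hall's theorem returns: among the valid colorings of each star at $v \in V_{k-1}$ one can choose one that ``imports'' extra color diversity from the children's side into $V_{k-1}$. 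Making this import argument go through uniformly, in particular in degenerate configurations where individual stars are small or where many children share identical forbidden palettes, is where I expect the proof to demand the most delicate work.
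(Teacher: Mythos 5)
Your overall scaffolding matches the paper's proof: color the layers $E_\ell$, $\ell \geq k$, greedily/star-by-star from the deepest layer inward (each such edge keeps an available color because its parent edge is still uncolored, which is also why your Hall-based star argument goes through), observe this can only enlarge $\varphi(N_E(V_k))$, then propagate the colorful condition upward layer by layer and finish at the root with \Cref{lem:stars}. The problem is that the inductive step of your Stage~2 --- choosing the coloring of $E_{k-1}$ so that $\lvert\varphi(N_E(V_{k-1}))\rvert \geq \Delta$ --- is exactly the nontrivial content of the lemma, and you leave it open: you say you ``plan to leverage'' the colorful condition at $V_k$ via the freedom in Hall's SDR and that the degenerate cases are ``where I expect the proof to demand the most delicate work.'' No concrete argument is given, so the proposal has a genuine gap at the decisive step. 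Note also that per-star SDR freedom alone cannot suffice: a star with a single down-edge has no freedom at all, and if every vertex of $V_{k-1}$ faces the same constraint structure, independent per-star choices can all land on the same $(\Delta-1)$-palette; some coordination across the layer is needed.

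The paper closes this gap with a short pigeonhole-plus-counting argument that you may want to compare against: for each uncolored edge $e$ of the layer let $N^+(e)$ be its $\Delta-1$ colored neighbors toward the deeper layer; since each such palette has size $\Delta-1$ while the whole layer sees at least $\Delta$ colors, there are two edges $e,e'$ with $\varphi(N^+(e)) \neq \varphi(N^+(e'))$. First color all other edges of the layer (a full star at some vertex not incident to $e$ or $e'$ already contributes $\Delta-1$ distinct colors, using $\Delta$ large); if only $\Delta-1$ colors are present, then $\lvert \varphi(N_E(V_{k-1})) \cup \varphi(N^+(e)\cap N^+(e'))\rvert \leq 2\Delta-3$, so one of $e,e'$ still has an available color outside the current $(\Delta-1)$-set, and giving it that color supplies the missing $\Delta$-th color; the rest of the layer is finished greedily. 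This is precisely the cross-star ``import'' you gesture at, and it is also what disposes of your worry about many children sharing identical forbidden palettes. Separately, your base-case identity $\varphi(N_E(V_1)) = \varphi(N_E(\{r\}\cup V_1))$ is not literally true in general (e.g.\ $G$ may contain non-tree edges from $r$ into $V_1$ or to vertices outside $V_T$), so the hand-off to \Cref{lem:stars} needs a slightly more careful statement, though this is a minor wrinkle compared with the missing propagation argument.
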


\begin{figure}
    \captionsetup{justification=centering}
    \centering
    \begin{subfigure}[t]{0.46 \textwidth}
        \centering
        \includegraphics{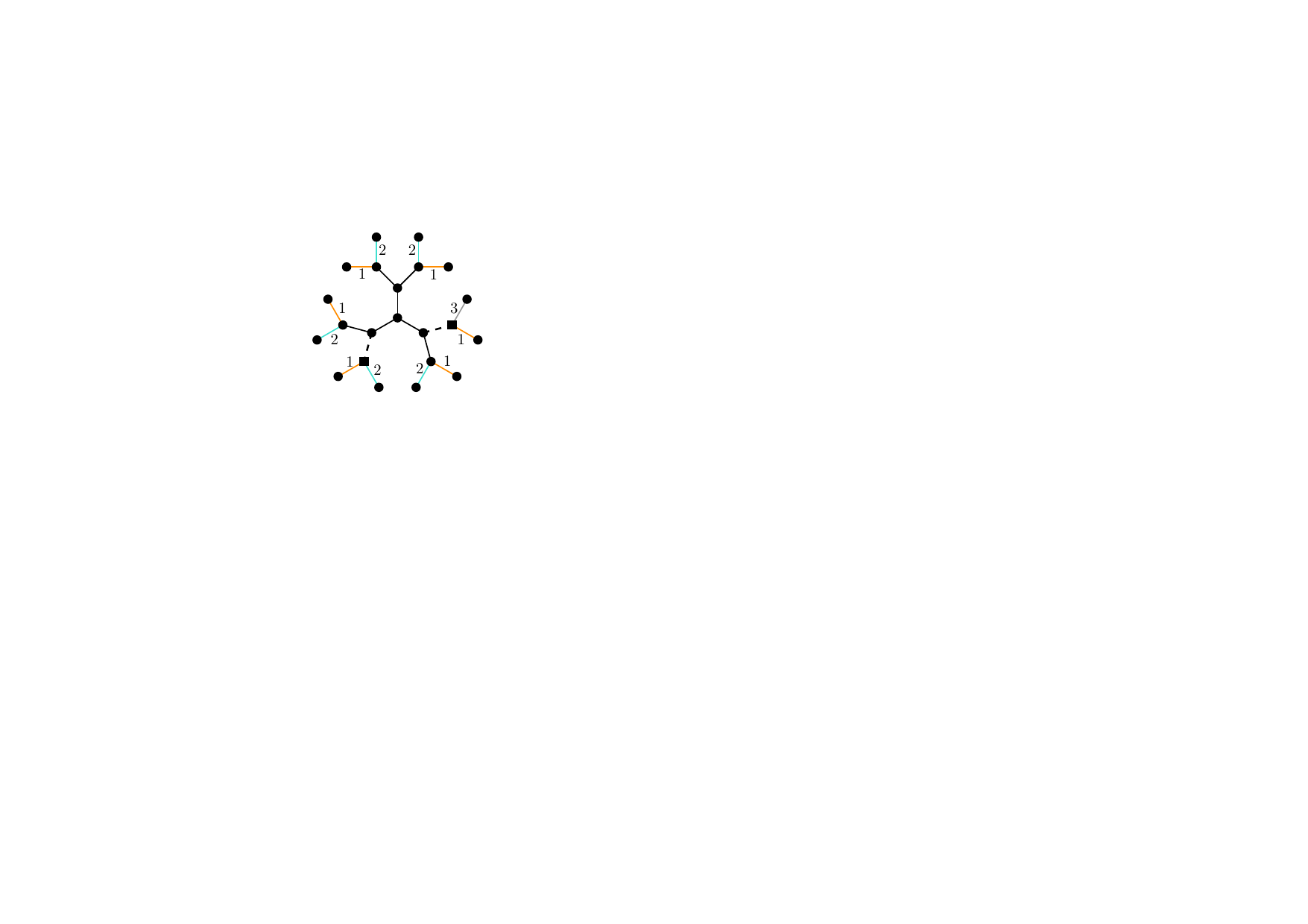}
        \caption{The edges with different colors palettes available are dashed.}
    \end{subfigure}
    \hfill
    \begin{subfigure}[t]{0.46 \textwidth}
        \centering
        \includegraphics{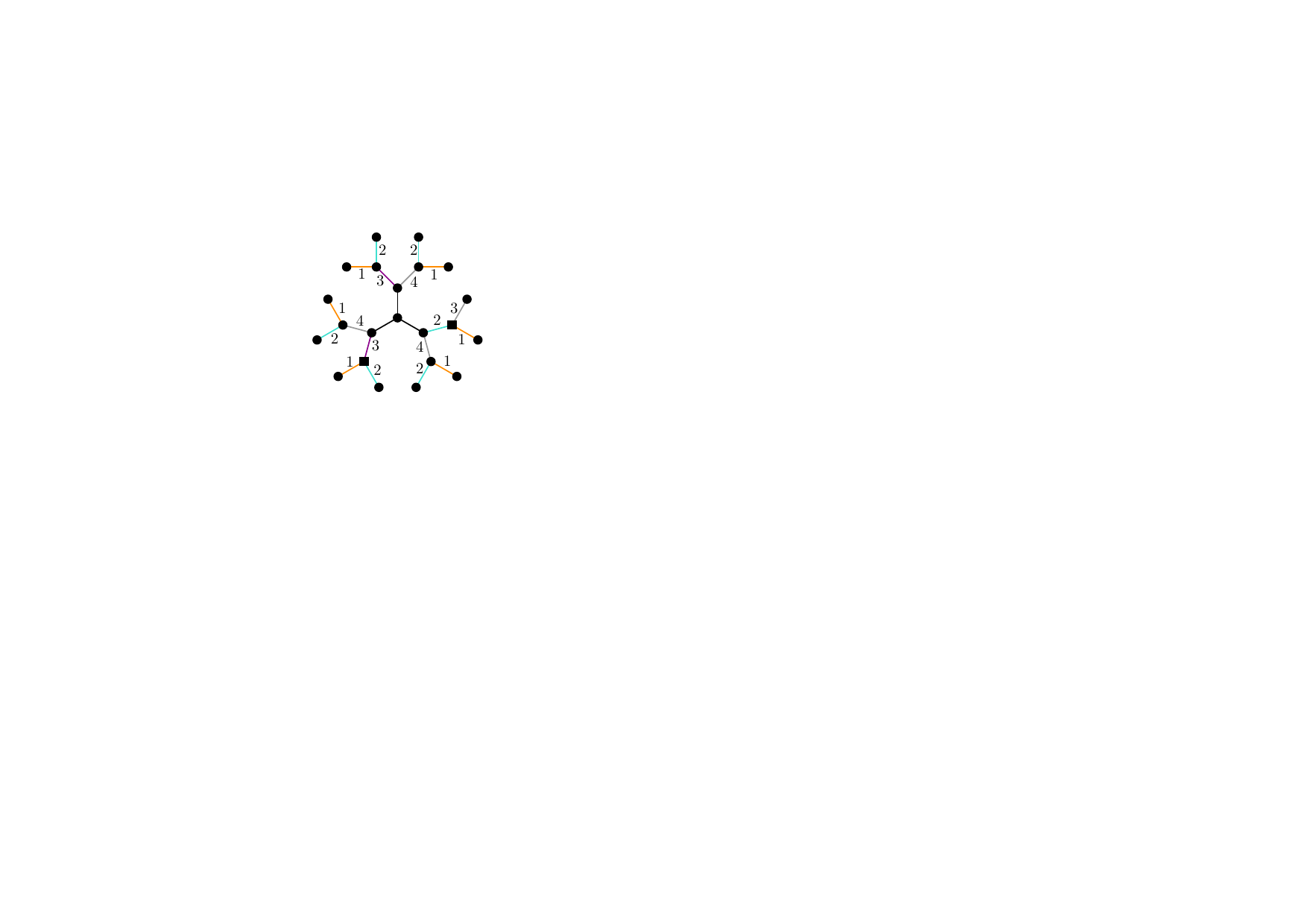}
        \caption{The coloring can be extended while using at least three colors for the newly colored edges.}
    \end{subfigure}
    \caption{Extending the coloring to a tree.}
    \label{fig:tree_graph}
\end{figure}

\begin{proof}
    Let $\ell \in \mathbb{N}$ such that $|\varphi(N_E(V_{\ell}))| \geq \Delta$.
    First we observe that for every $k > 0$, each edge $e \in E_k$ has exactly one neighboring edge in $E_{k-1}$.
    Therefore, we can always find at least one available color for every such edge, as long as the set $E_{k-1}$ remains uncolored. 
    Using this observation we can assign arbitrary available colors to all edges in $\bigcup_{k\geq \ell} E_{k}$ by processing these edge sets in sequence, starting with the set with highest index and going through each of these sets in an arbitrary order.
    However, after we finish coloring $E_{\ell}$ we need to start being more deliberate with our color choices. 
    For any $k < \ell$ we need to ensure that $\lvert \varphi(N_E(V_k)) \rvert \geq \Delta$ holds after we colored all edges of $E_{k}$. This is due to the fact that we need $\lvert \varphi(N_E(V_1))\rvert \geq \Delta$ in order to apply \Cref{lem:stars}, which guarantees that we can complete the coloring on the final layer of edges $E_0$ connected to the root. 
    Now we show that given $\lvert \varphi(N_E(V_{k+1})) \rvert  \geq \Delta$ we can color the edges in $E_k$ in such a way that $\lvert \varphi(N_E(V_{k})) \rvert \geq \Delta$ for any $1 \leq k < \ell$.
    For each $e \in E_k$ let $N^+(e) := N_E(e) \cap N_E(V_{k+1})$. Since $\lvert N^+(e) \rvert = \Delta - 1$ we can always find two edges $e,e' \in E_{k}$ such that $\varphi(N^+(e)) \neq \varphi(N^+(e'))$. 
    We first assign colors to all edges in $E_{k}$ that are not adjacent to $e$ or $e'$. 
    Since we assume $\Delta$ to be sufficiently large, there always exists a vertex $v \in V_k$ that is not incident to either $e$ or $e'$. 
    Thus, after coloring all $\Delta - 1$ edges in $N_E(v) \setminus E_{k-1}$ incident at $v$ we already have $\lvert \varphi(N_E(V_{k})) \rvert \geq \Delta - 1$. 
    If $\lvert \varphi(N_E(V_{k})) \rvert \geq \Delta$, we are already done. 
    Otherwise, if $\lvert \varphi(N_E(V_{k})) \rvert = \Delta - 1$, then the set $\varphi(N_E(V_{k}) \cup (N^+(e) \cap N^+(e')))$ contains at most $2\Delta - 3$ colors. 
    Therefore, either $e$ or $e'$ must have a color available in $[2\Delta - 2] \setminus \varphi(N_E(V_{k}))$.
    We pick that color for $e$ or $e'$, which ensures that $\lvert \varphi(N_E(V_{k})) \rvert \geq \Delta$. 
    The remaining edges in $E_k$ can then be assigned arbitrary available colors again. 
    For an illustration of this procedure we refer to \Cref{fig:tree_graph}.
    Finally, to color the last layer $E_{0}$ containing the edges incident to the root, we can apply \Cref{lem:stars}, since the number of distinct adjacent colors is at least $\Delta$.
\end{proof}

Now in order to fulfill this condition, our algorithm assigns two exclusive edges to each cluster. 
Each cluster may then decide independently, whether it wants to keep the current color or change it to the (yet unused) color $2\Delta - 2$ for each assigned edge. 
Hence, the set of all those edges needs to be independent. 
Therefore we initially compute a maximal matching on $G - E_T$ and modify it in a way such that each cluster receives at least two edges.
We model this adjustment of $M$ as an instance of hypergraph sinkless orientation, where each vertex proposes to a suitable matching edge in its $2$-hop neighborhood. 
The core requirement for this problem is that the minimum degree on the vertex side must exceed the maximum rank on the hyperedge side. 
In other words, each cluster must send more proposals to distinct edges than the number of proposals that any edge receives. 
The details of this procedure will be explained in \Cref{subsec:HSO}.
Therefore we need that the cluster expand quickly. 
This is only the case if the are not a lot of cycles present.
Fortunately, this case is easy, since as soon as a connected component contains one cycle of even length or a vertex of degree less than $\Delta$, we can always extend any partial $(2\Delta - 2)$ edge coloring to the edges of this connected component. 

\begin{restatable}{lemma}{degreechoosable}
    Let $G = (V,E)$ and $\varphi: E \to [2\Delta -2] \cup \{ \bot \}$ be a partial $(2\Delta-2)$-edge coloring of $G$.
    Assume that the subgraph $H \subseteq G$ induced by the uncolored edges is connected and contains either a cycle of even length or a vertex of degree less than $\Delta$.
    Then there exists an extension $\varphi^{\star}$ of $\varphi$, which is a proper $(2\Delta-2)$-edge coloring of the entire graph $G$.
\end{restatable}

\begin{proof}
    Let $C$ be a cycle of even length and assume that all edges in $N(C)$ have been colored already.
    Our task is now to find a proper edge-coloring of $C$, where every edge has a list of available colors of size two.
    If all edges have the same two colors available, then we can color the cycle alternately with these two colors.
    Otherwise, there exist two adjacent edges $e_{1}, e_{2}$ with different sets of available colors. 
    Color $e_{1}$ with a color that is not available for $e_{2}$. 
    Then we traverse the cycle starting at $e_{1}$ and going in the opposite direction of $e_{2}$ and let each edge pick one of its available colors. 
    Since $e_{1}$ picked a color that is not available at $e_{2}$, there is also a color remaining for $e_{2}$ at the end.
    Since $H$ is connected, every edge lies on a path towards an even cycle. 
    Fix such an even cycle $C$ and define
    $$
        E_{k} = \{ e \in E(H): \mathrm{dist}(e,C) = k \}.
    $$
    Since every edge in $E_{k}$ has at least one neighbor in $E_{k-1}$ for $k > 1$, we can greedily assign colors to these edges, starting from the highest layer. 
    Finally, every edge in $C$ has a list of two available colors and can therefore be properly colored as well.
\end{proof}

On the other hand, if a cluster does not contain a cycle of even length, we need to ensure that it can satisfy the colorful condition from \Cref{lem:colorful_leaves}. 
To this end, it is crucial that the cluster contains a sufficient number of leaves.
We apply a similar argument to the one used in the analysis of the $\Delta$-vertex coloring algorithm in \cite{DeltaColoring} to show that a connected component that does not contain an even cycle must be expanding exponentially.

\begin{restatable}{lemma}{clusterexpansion}
    \label{lem:cluster_expansion}
    Let $G = (V,E)$ be a $\Delta$-regular graph, $k > 0, v \in V$ and $C \subseteq G$ be a connected subgraph of $G$ such that $N^k(v) \subseteq C$.
    Further assume that $C$ does not contain an cycle of even length. 
    Then, any BFS-tree $T$ of $C$ rooted at $v$ contains at least $(\Delta - 2)^{k}$ leaves.
\end{restatable}

\begin{proof}
    For two nodes $u, u' \in V(C)$ let $P(u,u')$ denote the unique path in $T$ between $u$ and $u'$.
    We claim that each node $u \in V(C)$ is incident to at most one edge that is not part of $E(T)$ and thus the degree of all internal nodes of $T$ is at least $\Delta - 1$.
    First we observe that any edge in $E(C) \setminus E(T)$ can only connect vertices which are at most one level apart in $T$.
    Any additional edge in $E(C) \setminus E(T)$ from a vertex $v_{\ell}$ in layer $\ell$ to a vertex $v_{\ell+1}$ in layer $\ell + 1$ in $T$ immediately induces an even cycle. 
    To wit, let $u$ be the least common ancestor of $v_{\ell}$ and $v_{\ell+1}$ and observe that the cycle $P(u,v_{\ell}) \cup \{ v_{\ell}, v_{\ell+1} \} \cup P(u,v_{\ell+1})$ has length $2 \cdot \lvert P(u,v_{\ell+1}) \rvert$.
    Next, let $u \in V(C)$ be a vertex in layer $\ell$ with two neighbors $w,w'$ in the same layer $\ell$. Then, $P(v,w) \cup \{ w,u \} \cup \{ u, w' \} \cup P(v,w')$ is a cycle of length $2(\ell+1)$. 
    Hence each vertex $u \in V(C)$ is adjacent to at most one edge from $E(C) \setminus E(T)$.
    Since $C$ contains the complete $k$-hop neighborhood of $v$ this implies that there are at least $(\Delta - 2)^k$ vertices in layer $k$ of $T$. 
    Every such vertex either has a leaf-descendant in $T$ or is a leaf itself.
\end{proof}

\section{Deterministic \texorpdfstring{$(2\Delta - 2)$}{(2Δ-2)}-edge coloring algorithm}
\label{sec:alg}

We first present the simpler version of our algorithm, which reduces the $(2\Delta - 2)$-edge coloring problem deterministically to MIS in $\mathcal{O}(\log_\Delta n)$ rounds of the \LOCAL model. 

\subsection{High level overview of our reduction to MIS}
\label{subsec:simple_alg}
Recall that HSO (\Cref{thm:hso}) aims to orient the hyperedges of a hypergraph such that each vertex has at least one outgoing hyperedge. 
In Phase 2 of \Cref{alg:coloring}, it is used to rearrange the existing matching, ensuring that each cluster is assigned two exclusive edges.

\begin{algorithm}[ht!]
	\caption{Reduction from $(2\Delta-2)$-edge coloring to MIS (high level overview)}
    \label{alg:coloring}
	\begin{algorithmic}[1]
	\Statex \textbf{Input:} a graph $G = (V,E)$
    \Statex \textbf{Output:} $(2\Delta-2)$-edge coloring $\varphi$ of $G$.
    \Statex \hspace{-0.5cm}\textbf{Phase 1: Partition vertices into clusters (\Cref{subsec:clustering})}
    \State Compute an MIS $\mathcal{I} \subseteq V$ on the power graph $G^8$
    \Comment \Cref{alg:greedy}
    \State $\mathcal{C}, E_T \gets$ \textsc{Cluster}$(G, \mathcal{I})$
    \Comment \Cref{alg:clustering}
    \State Compute a $(2\Delta - 3)$-edge coloring $\varphi$ of $G - E_T$
    \Comment{\Cref{lem:2delta-3}} 
    \Statex \hspace{-0.5cm}\textbf{Phase 2: Assign two exclusive edges to each cluster (\Cref{subsec:HSO})}
    \State Compute a maximal matching $M$ of $G - E_T$
    \Comment{\Cref{lem:max_match}}
    \State Set up auxiliary hypergraph $H = (\mathcal{C}, M)$
    \Comment \Cref{def:hypergraph}
    \State Assign two matching edges to each cluster via HSO on $H$
    \Comment \Cref{thm:hso}
    \State Move the assigned edges into the neighborhood of the respective clusters
    \Comment \Cref{lem:proper_matching}
    \Statex \hspace{-0.5cm}\textbf{Phase 3: Switch colors in order to complete the coloring (\Cref{subsec:color_switch})}
    \For{each cluster in parallel}
        \State Change the colors of the assigned edges to fulfil the colorful condition
        \Comment{\Cref{lem:color_switching}}
    \EndFor
    \State Extend $\varphi$ to a $(2\Delta - 2)$-edge coloring $\varphi^\star$ of $G$
    \Comment \Cref{lem:colorful_leaves}
	\end{algorithmic}
\end{algorithm}

\subparagraph*{Phase 1: Partition vertices into clusters (\Cref{subsec:clustering}).} 

We start by computing a maximal independent set $\mathcal{I}$ on the power graph $G^8$. 
Next we partition the vertex set into disjoint clusters.
Each vertex joins the cluster of one of its closest MIS-nodes, breaking ties arbitrarily, but consistently, which leads to a $(4,8)$-clustering $\mathcal{C}$. 
In addition, we store the BFS-tree $(V,E_T)$ which arises naturally from this clustering.
Computing the MIS on $G^8$ ensures that the individual clusters all contain sufficiently many vertices, which will become important later.
Next, we remove the union of all BFS-trees from $G$ to obtain a subgraph $G - E_T \subseteq G$ of maximum degree at most $\Delta - 1$. 
Hence, it suffices to invoke a greedy $(2\Delta - 1)$-edge coloring on $G - E_T$ in order to get a $(2\Delta  - 3)$-edge coloring $\varphi$ of $G - E_T$.

\subparagraph*{Phase 2: Assign two exclusive edges to each cluster (\Cref{subsec:HSO}).}

Finally we need to modify $\varphi$ such that the colorful condition from \Cref{lem:colorful_leaves} is satisfied for every cluster. 
We will show that this can be achieved by giving each cluster control over the colors of two edges adjacent to the cluster. 
However, in order for the clusters to switch the colors of their respective assigned edges, the union of all those edge sets needs to be independent.
Therefore we initially compute a maximal matching $M$ of $G - E_T$ to use as a starting point. 
Note that the maximality of the matching does not guarantee that any fixed cluster receives any adjacent matching edges, but only that there are sufficiently many matching edges in the $2$-hop neighborhood around that cluster.
Hence, we then let each cluster propose to matching edges in its $2$-hop neighborhood.
This process naturally defines a hypergraph, where each matching edges collects the clusters which sent a proposal to it as vertices in a hyperedge.
Then, the problem of assigning matching edges to clusters can be modeled as an instance of hypergraph sinkless orientation.
In order to be able to solve HSO efficiently we need to ensure that the number of votes each matching edge receives is dominated by the number of votes each cluster casts in total. 
Here we require that each cluster contains enough vertices.

\subparagraph*{Phase 3: Switch colors in order to complete the coloring (\Cref{subsec:color_switch}).}

Finally, we show that each cluster can flip the colors of their assigned matching edges in parallel in order to fulfil the colorful condition from \Cref{lem:colorful_leaves}. 
Thus, this new coloring can now be extended to the remaining uncolored edges by iteratively coloring the layers of each BFS-tree in parallel. 
Since each cluster has diameter of at most $8$ this last step can be executed in constant time.

\MISalgo*

\subsection{Phase 1: Partition vertices into clusters}
\label{subsec:clustering}

\begin{algorithm}
	\caption{\textsc{Cluster}$(G,\mathcal{I})$}
    \label{alg:clustering}
	\begin{algorithmic}[1]
	\Statex \textbf{Input:} a graph $G = (V,E)$ and an independent set $\mathcal{I} \subseteq V$
    \Statex \textbf{Output:} a $(4,8)$-clustering $\mathcal{C} = \{ C^{-1}(r): r \in \mathcal{I} \} $ and a forest $T = (V,E_T)$
    \State $E_T \gets \emptyset$
	\For{each $v \in V$ in parallel} 
		\State $C(v) \gets \bot$
		\If{$v \in \mathcal{I}$}
			\State $C(v) \gets v$
	    \EndIf
	    \While{$C(v) = \bot$}
            \State receive messages from neighbors
		    \If{$\exists\, u \in N(v): C(u) \neq \bot$}
			    \State $C(v) \gets C(u)$ 
                \State $E_T \gets E_T \cup \{ uv \} $
                \Comment{break ties arbitrarily}
                \State send message $C(v)$ to neighbors
            \EndIf
        \EndWhile
    \EndFor
	\end{algorithmic}
\end{algorithm}

\begin{lemma}[Clustering algorithm]
    The set $\mathcal{C} = \{ C^{-1}(r): r \in \mathcal{I} \} $ computed by \Cref{alg:clustering} is a $(4,8)$-clustering with respect to $\mathcal{I}$.
\end{lemma}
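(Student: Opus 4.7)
The plan is to verify the two defining conditions of a $(4,8)$-clustering, namely (i) $N^4(r) \subseteq C(r)$ and (ii) the diameter bound $\mathrm{diam}(C(r)) \le 8$, and to observe along the way that $\{C^{-1}(r) : r \in \mathcal{I}\}$ really is a partition of $V$. The starting observation is that \Cref{alg:clustering} is simply a parallel BFS initiated simultaneously from every MIS node: each non-MIS vertex $v$ sets $C(v)$ in the first synchronous round in which some neighbor $u$ has $C(u) \neq \bot$, and this assignment is never overwritten. Hence for every $v$, the cluster center $r = C(v)$ is, up to tie-breaking, an MIS node minimizing $\mathrm{dist}_G(v,\cdot)$ over $\mathcal{I}$, and the round in which $v$ is assigned equals $\mathrm{dist}_G(v, r)$.

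The whole analysis hinges on two consequences of the fact that $\mathcal{I}$ is an MIS on $G^{8}$: (a) any two distinct $r, r' \in \mathcal{I}$ are non-adjacent in $G^{8}$, so $\mathrm{dist}_G(r, r') \ge 9$; and (b) by maximality, every $v \in V$ has some $r \in \mathcal{I}$ with $\mathrm{dist}_G(v, r) \le 8$. Property (b) simultaneously certifies that the while-loop terminates within $8$ rounds, so every vertex is ultimately assigned and the sets $C^{-1}(r)$ partition $V$.

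For (i) I would argue by contradiction. Suppose $v \in N^{4}(r)$ but $C(v) = r'$ for some $r' \neq r$. Since $v$ joins one of its closest MIS nodes, $\mathrm{dist}_G(v, r') \le \mathrm{dist}_G(v, r) \le 4$. The triangle inequality then gives $\mathrm{dist}_G(r, r') \le 4 + 4 = 8$, contradicting (a). For (ii), the BFS characterisation together with (b) yields $\mathrm{dist}_G(v, r) \le 8$ for every $v \in C^{-1}(r)$, so the cluster has radius at most $8$ around $r$, and the stated diameter bound follows in the weak sense via shortest paths through $r$.

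The only subtle step I expect to slow down on is the tie-breaking argument used to justify the "closest MIS node" interpretation of the parallel BFS; this is what makes the triangle-inequality step in (i) go through, since if $v$ could be captured by a strictly further MIS node we would lose the crucial bound $\mathrm{dist}_G(v,r') \le \mathrm{dist}_G(v,r)$. Once it is made precise that the synchronous BFS rounds exactly correspond to $G$-distance from the nearest MIS node, everything else reduces to bookkeeping with (a) and (b).
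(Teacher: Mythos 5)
Your proof is correct and follows essentially the same route as the paper: property (i) via the triangle inequality contradicting independence of $\mathcal{I}$ in $G^{8}$, and the distance bound via maximality of $\mathcal{I}$ together with the fact that the parallel BFS assigns each vertex to a closest MIS node (the paper phrases this as a contrapositive, you argue it directly, which is the same content). Note that, just like the paper's own proof, you really establish radius at most $8$ around $r$ rather than diameter $8$, so this is not a gap relative to the paper.
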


\begin{proof}
    First we show that $N^4(r) \subseteq C^{-1}(r)$ for all $r \in \mathcal{I}$. 
    Let $r \in \mathcal{I}, v \in N^4(r)$ and assume for contradiction sake that $C(v) = r' \neq r$. 
    Then, $\mathrm{dist}(v,r') \leq \mathrm{dist}(v,r) \leq 4$ and therefore $\mathrm{dist}(r,r') \leq 8$, contradicting the fact that $r$ and $r'$ are independent in $G^8$.
    For the second property assume that there exist $r \in \mathcal{I}$ and $v \in C^{-1}(r)$ such that $\mathrm{dist}(v,r) > 8$. 
    This implies that $\mathrm{dist}(v,r') \geq \mathrm{dist}(v,r) > 8$ for all $r' \in \mathcal{I}$, because otherwise $v$ would have joined the cluster of $r'$ instead of $r$.
    But then $v$ would be independent of all $r' \in \mathcal{I}$ on $G^8$, which contradicts the maximality of $\mathcal{I}$.
\end{proof}
For each $r \in \mathcal{I}$, let $T(r) = T[C^{-1}(r)]$ denote the BFS-tree of the cluster $C^{-1}(r)$ rooted at $r$.

\begin{lemma}
    \label{lem:2delta-3}
    The subgraph $G - E_T$ has maximum degree of at most $\Delta - 1$ and can be $(2\Delta - 3)$-edge colored in $T_{\mathrm{MIS}}(\Delta^2 \cdot n,4\Delta)$ rounds of the \LOCAL model.
\end{lemma}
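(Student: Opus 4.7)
My plan is to handle the two claims of the lemma separately. For the degree bound I would make the following structural observation about \Cref{alg:clustering}: every non-root vertex $v \in V \setminus \mathcal{I}$ contributes exactly one edge to $E_T$, namely the edge to the neighbor whose cluster it adopts. This already gives every non-root vertex at least one incident edge in $E_T$, which drops its degree in $G - E_T$ to at most $\Delta - 1$. For a root $r \in \mathcal{I}$ the situation is even stronger: since $\mathcal{I}$ is independent in $G^8$ and hence in $G$, each neighbor $u$ of $r$ has all other MIS-nodes at distance at least $8$, so during the first pass of the while-loop it sees $r$ as its unique adjacent cluster, joins $C^{-1}(r)$ through $r$, and adds the edge $ur$ to $E_T$; in particular $r$ becomes isolated in $G - E_T$.

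For the coloring runtime I would chain together the standard reductions from edge coloring to MIS. Since $\Delta(G - E_T) \leq \Delta - 1$, the line graph $L = L(G - E_T)$ has maximum degree at most $2\Delta - 4$, so a proper vertex coloring with $\Delta(L) + 1 \leq 2\Delta - 3$ colors exists and translates directly back into a $(2\Delta - 3)$-edge coloring of $G - E_T$. To reduce this vertex-coloring task to MIS, I would use the classical product construction and consider the auxiliary graph on $V(L) \times [2\Delta - 3]$ in which $(v,c)$ is adjacent to $(v,c')$ for every $c' \neq c$ and to $(u,c)$ whenever $u$ and $v$ are adjacent in $L$. Any MIS of this auxiliary graph selects exactly one color per line-graph vertex, which encodes the desired coloring.

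What is left is a routine bookkeeping step to verify that the parameters of the auxiliary graph fit into the budget advertised in the statement. The auxiliary graph has at most $|E(G)| \cdot (2\Delta - 3) \leq \Delta^2 \cdot n$ vertices, and its maximum degree is bounded by $(2\Delta - 4) + (2\Delta - 3 - 1) \leq 4\Delta$, which matches the stated parameters. Neither part presents a genuine obstacle: the degree bound is an immediate consequence of the tree structure that \Cref{alg:clustering} imposes on $E_T$, and the coloring step is a black-box application of well-known reductions, so the whole argument really amounts to combining a small structural observation with standard bookkeeping.
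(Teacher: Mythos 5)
Your proof is correct and follows essentially the same route as the paper: the clustering's spanning trees give every vertex an incident edge in $E_T$ (hence maximum degree at most $\Delta - 1$ in $G - E_T$), and the coloring is obtained via the standard line-graph-plus-color-classes reduction to MIS, whose parameters you check to be at most $\Delta^2 \cdot n$ vertices and maximum degree at most $4\Delta$, matching $T_{\mathrm{MIS}}(\Delta^2 \cdot n, 4\Delta)$. The only differences are presentational: you spell out the product construction and treat roots and non-roots separately, whereas the paper notes that every vertex lies in a cluster spanned by its BFS-tree and then invokes the standard reduction as a black box.
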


\begin{proof}
    We observe that the edge set $E_T$ returned by \Cref{alg:clustering} satisfies $E_T = \bigcup_{r \in \mathcal{I}} E(T(r))$.
    For every vertex $v \in V$ there is an $r \in \mathcal{R}$ such that $v \in C^{-1}(r)$. Since $T(r)$ is a spanning tree of $C^{-1}(r)$, we get that $v$ is incident to at least one edge in $E(T(r))$. Thus, the maximum degree of $G - E_T$ is at most $\Delta - 1$.
    Using the standard reduction from vertex coloring to MIS \cite{LubySTOC} we can compute a $(2\Delta - 3)$-edge coloring of $G - E_T$ in $T_{\mathrm{MIS}}(\Delta^2 \cdot n,4\Delta)$ rounds.
\end{proof}

\begin{lemma} \label{lem:max_match}
    We can compute a maximal matching $M$ of the edges in $G - E_T \subseteq G$ deterministically in $T_{\mathrm{MIS}}(\Delta \cdot n,2\Delta - 2)$ rounds of the \LOCAL model.
\end{lemma}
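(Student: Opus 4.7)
The plan is to apply the standard reduction from maximal matching to MIS on the line graph $L(G - E_T)$. Recall that an independent set in $L(G-E_T)$ corresponds exactly to a matching in $G - E_T$, and an MIS in $L(G-E_T)$ corresponds to a maximal matching in $G - E_T$. Hence it suffices to bound the size and maximum degree of $L(G-E_T)$ and to argue that any round of an MIS algorithm on $L(G-E_T)$ can be simulated in $\mathcal{O}(1)$ rounds of the \LOCAL model on $G$.

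For the size bound, the number of vertices of $L(G-E_T)$ equals $|E(G-E_T)| \leq |E(G)| \leq \Delta \cdot n / 2 \leq \Delta \cdot n$. For the degree bound, note that by \Cref{lem:2delta-3} the subgraph $G - E_T$ has maximum degree at most $\Delta - 1$, so any edge $uv \in E(G - E_T)$ is adjacent (in $L(G - E_T)$) to at most $(\Delta - 1) - 1 + (\Delta - 1) - 1 = 2\Delta - 4 \leq 2\Delta - 2$ other edges of $G - E_T$. Thus $L(G - E_T)$ is an $n'$-vertex graph with $n' \leq \Delta \cdot n$ and maximum degree at most $2\Delta - 2$.

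Finally, the simulation is routine: each edge $e = uv \in E(G - E_T)$ is represented by its two endpoints $u,v \in V(G)$, and two edges of $G - E_T$ are adjacent in $L(G-E_T)$ if and only if they share an endpoint in $G$. Therefore a single communication round in $L(G - E_T)$ between adjacent ``edge-vertices'' can be realized by one round in $G$: each node $v$ gathers the messages of all incident edges in $G - E_T$ and forwards them to the relevant endpoints. Running the deterministic MIS algorithm on $L(G-E_T)$ thus takes $T_{\mathrm{MIS}}(\Delta \cdot n,\, 2\Delta - 2)$ rounds on $G$, and the resulting MIS yields the desired maximal matching $M$ of $G - E_T$. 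There is no real obstacle here; the only point requiring a sanity check is the degree bound, which is handled by the fact that $G - E_T$ has maximum degree strictly less than $\Delta$.
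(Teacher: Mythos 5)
Your proof is correct and follows the same route as the paper, which simply observes that a maximal matching of $G - E_T$ is an MIS on the line graph $L(G - E_T)$; your version just spells out the vertex-count, degree bound (using that $G - E_T$ has maximum degree at most $\Delta - 1$), and the constant-overhead simulation that the paper leaves implicit.
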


\begin{proof}
    A maximal matching of $G - E_T$ is simply an MIS on the line graph $L(G - E_T)$.
\end{proof}

\subsection{Phase 2: Assign two exclusive edges to each cluster}
\label{subsec:HSO}

In this phase, we rearrange the edges in the maximal matching $M$ in order to assign two exclusive edges to each cluster, while preserving the matching property. 
These exclusive edges are essential for adjusting the coloring of the intercluster edges, as laid out in \Cref{lem:color_switching}. 
This step relies on the HSO subroutine from \Cref{thm:hso}, and we now detail the specific instance used in our algorithm.
The high level idea is that each cluster sends proposals to nearby edges in the matching $M$. 
Each edge accepts exactly one proposal and we swap the edge towards the cluster that \emph{won} it. 
This process can be modeled as an HSO instance where each matching edge in $M$ represents an hyperedge that contains as vertices all clusters that sent a proposal to that edge. 
Since a directed hyperedge $e$ is outgoing for exactly one of its vertices $v \in e$, we will say that $v$ \emph{won} the proposal for $e$. 

However, the hypergraph sinkless orientation problem only guarantees at least one outgoing hyperedge for each vertex. 
Therefore, in order to achieve two exclusive hyperedges for each cluster, we simply split each cluster into two virtual nodes and let both participate in the HSO independently. 
In the following we formalize this idea and prove that this provides the desired properties as well as a deterministic runtime of $\mathcal{O}(\log n)$ rounds. 
First, we let each cluster $C^{-1}(r) \in \mathcal{C}$ that is not already edge-degree choosable select a subset of vertices $S \subseteq V(C^{-1}(r))$ to send proposals to a matching edges in their neighborhood. 
Let $\mathcal{C}'$ denote the set of clusters, which are not already edge-degree choosable. 
Crucially, in order to satisfy the colorful condition from \Cref{lem:colorful_leaves} we require that the vertices in $S$ are all leaves in the spanning tree $T(r) \subseteq C^{-1}(r)$ at the same distance from the root $r$.

\begin{lemma}
    \label{lem:enough_vertices}
    For each cluster $C^{-1}(r) \in \mathcal{C}'$, there exists an integer $k > 0$ and a subset $S \subseteq V(C^{-1}(r))$ of size at least $4 \cdot \Delta^3$ such that every vertex $v \in S$ is a leaf in the tree $T(r)$ that satisfies $\mathrm{dist}(r,v) = k$.
\end{lemma}

\begin{proof}
    Since $\mathcal{C}$ is a $(4,8)$-clustering we have that $C^{-1}(r)$ necessarily contains all vertices in $N^4(r)$. 
    Thus, according to \Cref{lem:cluster_expansion} the corresponding BFS-tree contains at least $(\Delta - 2)^4$ leaves. 
    Therefore, by the pigeonhole principle there exists a layer $V_k$ for $k = 1,\dots,8$ such that $V_k$ contains at least $(\Delta - 2)^4 / 8 > 4\cdot \Delta^3$ leaves for $\Delta$ sufficiently large.
\end{proof}

For each cluster we select an arbitrary set $S$ of at least $4 \cdot \Delta^3$ leaves which are all at the same distance from the center. 
The subset $S$ is the subset of leaves that send requests. 
Each vertex in $v \in S$ sends exactly one proposal. 
If $v$ is incident to a matching edge $e \in M$, then it proposes to $e$. 
Otherwise, $v$ sends a proposal to an arbitrary matching edge that is one hop away from $e$. Such an edge must exist, due to the maximality of $M$.

\begin{definition}[Auxiliary hypergraph] \label{def:hypergraph}
    We define the \emph{auxiliary hypergraph} $H = (V_H, E_H)$:
    \begin{itemize}
        \item For each cluster $C \in \mathcal{C}'$ we add a vertex $v_C$ to $V_H$.
        \item For each matching edge $e \in M$ we add a hyperedge $e_H$ to $E_H$ that contains all vertices $v_C$ for which there exists at least one vertex $v \in V(C)$ that proposed to $e$.
    \end{itemize}
\end{definition}

\begin{lemma} \label{lem:hsorankdeg}
    The minimum degree $\delta_H$ and the maximum rank $r_H$ of $H$ satisfy $\delta_H > \Delta \cdot r_H$.
\end{lemma}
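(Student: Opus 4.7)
The plan is to bound $r_H$ from above and $\delta_H$ from below separately using only $\Delta$-regularity and \Cref{lem:enough_vertices}, and then to combine the two estimates.

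For the rank, a hyperedge of $H$ is determined by a matching edge $e = uv \in M$, and its rank equals the number of clusters in $\mathcal{C}'$ whose selected leaf set meets $N[u] \cup N[v]$, since any vertex that can propose to $e$ is either an endpoint of $e$ or incident to one. By $\Delta$-regularity, together with $u \in N[v]$ and $v \in N[u]$, we have $|N[u] \cup N[v]| \leq 2(\Delta+1) - 2 = 2\Delta$. Hence at most $2\Delta$ vertices in total can propose to $e$, so in particular they belong to at most $2\Delta$ distinct clusters, giving $r_H \leq 2\Delta$.

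For the minimum degree, fix a cluster $C \in \mathcal{C}'$ with its selected leaf set $S$, where $|S| \geq 4\Delta^3$ by \Cref{lem:enough_vertices}. Every leaf in $S$ sends exactly one proposal to some matching edge within its $1$-hop neighborhood. By the same neighborhood bound as above, at most $2\Delta$ leaves of $S$ can propose to any single matching edge $e$. A pigeonhole argument then shows that the proposals of $S$ are spread over at least $|S|/(2\Delta) \geq 2\Delta^2$ distinct matching edges, which is exactly $\deg_H(v_C)$. Thus $\delta_H \geq 2\Delta^2$.

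Combining the two bounds gives $\delta_H \geq 2\Delta^2 \geq \Delta \cdot 2\Delta \geq \Delta \cdot r_H$, which already captures the core of the statement. The main obstacle is upgrading this to the claimed strict inequality $\delta_H > \Delta \cdot r_H$, since both worst-case bounds used above are individually tight. I expect to close this gap by exploiting the slack that is already present in the proof of \Cref{lem:enough_vertices}, which actually yields at least $(\Delta-2)^4/8$ leaves in $S$ and thus strictly more than $4\Delta^3$ in the ``$\Delta$ sufficiently large'' regime already assumed there, or alternatively by refining the rank bound using that the two endpoints of $e$ themselves lie in $N[u] \cup N[v]$ and reduce the maximum number of distinct contributing clusters by a constant. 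Either refinement suffices; everything else is a straightforward counting argument in $\Delta$-regular neighborhoods.
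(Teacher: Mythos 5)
Your argument is essentially the paper's proof: the same rank bound $r_H \le 2\Delta$ obtained by counting possible proposers among the endpoints of $e=uv$ and their neighbors, and the same pigeonhole bound $\delta_H \ge 4\Delta^3/(2\Delta) = 2\Delta^2$ using the $4\Delta^3$ proposing leaves from \Cref{lem:enough_vertices}. The strictness concern you raise is not addressed in the paper's proof either (it likewise only establishes $\delta_H \ge 2\Delta^2 = \Delta \cdot r_H$), and it is immaterial for how the lemma is used, since \Cref{thm:hso} only requires $\delta > r$, which these bounds give with a factor-$\Delta$ margin.
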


\begin{proof}
    First we observe that a single matching edge $e = uv$ can get proposals only from $u, v$ and vertices that are exactly one hop away from either $u$ or $v$. 
    Hence, there may be two proposals coming from $u$ and $v$ plus $2 \cdot (\Delta - 1)$ proposals coming from neighbors of $u$ or $v$. 
    Thus, the maximum rank of $H$ is at most $r_H = 2\Delta$.
    Next, each cluster contains at least $4 \cdot \Delta^3$ vertices sending proposals. 
    Since each vertex sends exactly one proposal and each edge receives at most $2\Delta$ proposals the vertices of each cluster must propose to at least $2\Delta^2$ edges in total. 
    Hence, the minimum degree of $H$ can be lower bounded by $\delta = 2\Delta^2$.
\end{proof}

\begin{lemma} \label{lem:runtime_hso} 
    We can compute an orientation of $E_H$ such that each vertex in $V_H$ has at least two outgoing hyperedges in $\mathcal{O}(\log_\Delta n)$ rounds deterministically and $\mathcal{O}(\log_\Delta \log n)$ rounds randomized.
\end{lemma}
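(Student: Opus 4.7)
The plan is to reduce the task of producing two outgoing hyperedges per vertex to the standard single-outgoing HSO problem by a vertex-splitting trick, and then invoke~\Cref{thm:hso}.

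First, I construct an auxiliary hypergraph $H'$ from $H$ as follows. For every vertex $v \in V_H$, I introduce two virtual copies $v^{(1)}$ and $v^{(2)}$, and partition the set of hyperedges incident to $v$ as evenly as possible between them. Each hyperedge in $E_H$ is retained, but for every occurrence of an original vertex $v$ in a hyperedge $e_H$, that occurrence is replaced by the copy $v^{(i)}$ to which $e_H$ was assigned. Since each hyperedge $e_H$ loses $v$ and gains exactly one of $v^{(1)}, v^{(2)}$ per incident original vertex, the ranks are unchanged: $r_{H'} = r_H \leq 2\Delta$. At the same time, the degree of every virtual vertex is at least $\lfloor \delta_H / 2 \rfloor \geq \Delta^2$ by~\Cref{lem:hsorankdeg}. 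Hence $H'$ satisfies $\delta_{H'} / r_{H'} \geq \Delta / 2$, which keeps us comfortably within the regime where the HSO algorithm from~\Cref{thm:hso} is efficient.

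Next, I invoke \Cref{thm:hso} on $H'$ to obtain an orientation in which every virtual vertex has at least one outgoing hyperedge. Because the two virtual copies of any $v \in V_H$ are contained in disjoint sets of hyperedges by construction, the outgoing hyperedges of $v^{(1)}$ and $v^{(2)}$ correspond to two \emph{distinct} hyperedges of $H$. Translating the orientation back to $H$ (by identifying both copies with $v$ again) therefore yields the desired property that every vertex of $V_H$ has at least two outgoing hyperedges.

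The only quantitative point to verify is the runtime. I rely on the fact that~\Cref{thm:hso} exploits a large gap between minimum degree and maximum rank: when $\delta_{H'}/r_{H'} = \Theta(\Delta)$, the round complexity improves from $\mathcal{O}(\log n)$ / $\mathcal{O}(\log \log n)$ to $\mathcal{O}(\log_\Delta n)$ deterministic and $\mathcal{O}(\log_\Delta \log n)$ randomized. This is the step where I expect the main work lies, as the splitting construction itself is local and carries no overhead; once the quantitative dependence on the gap in~\Cref{thm:hso} is extracted, the claimed runtimes follow immediately. Note that since $H$ is a hypergraph defined on the matching edges of $G$ with at most $2$-hop dependencies, a single round of communication in $H$ can be simulated in $\mathcal{O}(1)$ rounds of the \LOCAL model on $G$, so the stated round complexities transfer to our setting without loss.
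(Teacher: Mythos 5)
Your proposal is correct and follows essentially the same route as the paper: split each vertex of $H$ into two virtual copies with (roughly) half the degree each, note that the minimum degree stays at least $\Delta^2$ while the rank stays at most $2\Delta$, and then apply \Cref{thm:hso} and \Cref{thm:hso_randomized}, whose stated bounds $\mathcal{O}(\log_{\delta/r} n)$ and $\mathcal{O}(\log_{\delta/r}\delta + \log_{\delta/r}\log n)$ already give the claimed $\mathcal{O}(\log_\Delta n)$ and $\mathcal{O}(\log_\Delta \log n)$ directly (no further ``extraction'' needed, since $\delta \leq \mathrm{poly}(\Delta)$).
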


\begin{proof}
    To compute an orientation with two outgoing edges per cluster we simply split each vertex into two virtual nodes with half the degree each. 
    Thus, the minimum degree of the resulting hypergraph $H'$ is still at least $\Delta^2$. 
    Therefore, we can solve HSO in $\mathcal{O}(\log_{\delta / r} n) = \mathcal{O}(\log_\Delta n)$ rounds deterministically and 
    in $\mathcal{O}(\log_{\delta / r} \delta + \log_{\delta / r} \log n) = \mathcal{O}(\log_\Delta \log n)$ rounds randomized using the algorithms from \Cref{thm:hso} and \Cref{thm:hso_randomized}.
\end{proof}

The orientation of $H$ yields an assignment of matching edges to clusters such that each cluster receives exclusive access to two matching edges. However, the matching edges are not necessarily adjacent to their respective clusters yet. Therefore we modify the matching $M$ by moving each edge closer to the vertex which won the proposal for it.

\begin{lemma} \label{lem:proper_matching}
    There is a matching $M' = \chi(M)$ such that for all $e \in M$ it holds that $\chi(e)$ is adjacent to the vertex that won the proposal for $e$. 
\end{lemma}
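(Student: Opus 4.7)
The plan is to define $\chi$ explicitly from the winning proposals and then verify via a short case analysis that $M' = \{\chi(e) : e \in M\}$ is a matching. For each $e = uv \in M$, I will let $w(e)$ denote a vertex of the winning cluster that proposed to $e$ (breaking ties arbitrarily if several did). By the definition of the proposal step, either $w(e) \in \{u,v\}$ or $w(e)$ is adjacent to an endpoint of $e$. In the first case I set $\chi(e) = e$; in the second case I set $\chi(e)$ to be the edge from $w(e)$ to any of its neighbors in $\{u,v\}$ (arbitrary tie-break). By construction $w(e)$ is an endpoint of $\chi(e)$, which is exactly the required adjacency property.

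The heart of the matching argument is a single observation: if $w(e) \notin e$, then $w(e)$ is incident to no edge of $M$ at all. This follows from the proposal rule, which forces every vertex incident to some matching edge $f$ to propose to $f$; hence if $w(e)$ were incident to any matching edge, it could not have proposed to the nearby edge $e$. Once this is in hand, the matching property of $M'$ drops out of a routine case analysis.

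Suppose for contradiction that $\chi(e_1), \chi(e_2) \in M'$ share an endpoint $z$ for two distinct $e_1, e_2 \in M$. If neither edge was shifted, then $e_1$ and $e_2$ share $z$ in $M$, contradicting that $M$ is a matching. If only $\chi(e_2) = w(e_2) u_2$ is shifted, then $z = u_2$ forces $e_1, e_2$ to share an endpoint in $M$, while $z = w(e_2)$ would make $w(e_2) \notin e_2$ incident to the matching edge $e_1$, contradicting the observation. If both edges are shifted, each of the four sub-cases $w(e_1) = w(e_2)$, $u_1 = u_2$, $w(e_1) = u_2$, $u_1 = w(e_2)$ is ruled out by either the fact that every vertex proposes at most once, by $M$ being a matching, or by the observation applied to the corresponding shifted winner. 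The main obstacle is simply keeping this bookkeeping tidy; the single observation about shifted winners collapses every nontrivial case.
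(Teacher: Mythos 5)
Your proof is correct and follows essentially the same route as the paper: you define $\chi$ by shifting each won edge toward its winning vertex, and the matching property rests on the same two facts the paper uses, namely that each vertex sends (hence wins) only one proposal and that a winner not incident to its won edge is incident to no edge of $M$ at all, since the proposal rule would have forced it to propose to an incident matching edge. Your explicit endpoint-collision case analysis is just a more verbose organization of the paper's observation that only such winners gain an incident edge while no other vertex does.
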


\begin{proof}
    For each $e \in M$, let $v$ be the vertex (in the original graph $G$), which sent the winning proposal to $e$.
    If $v$ is incident to $e$, then we simply set $\chi(e) = e$. Otherwise, let $u$ be the endpoint of $e$ which is adjacent to $v$ and set $\chi(e) = uv$.
    With this procedure, we ensure that each cluster is adjacent to at least two edges in $M'$ since each of the two partitions of the cluster receives at least one edge based on the HSO property. 
    If a cluster is assigned more than two edges in $M'$, it discards any edges beyond the required two. 
    In order to show that the edge set $M'$ is still a proper matching, we recall that every vertex $v \in S \subseteq V$ sends out exactly one proposal. 
    Hence, there is no vertex that wins more than one edge in $M$.
    Assume that $v$ won the proposal for the edge $e = uw$. If $v$ is incident to $e$ there is no change from $M$ to $M'$. 
    Otherwise, let $u$ be adjacent to $v$. 
    In this case, the edge $uw \in M$ is flipped to $vw \in M'$.
    Hence, $v$ is the only vertex that gains an incident edge in $M'$ compared to $M$. 
    Since $e$ is one of the closest edges to $v$ in $M$ there cannot be an edge $e' \in M$ that is already incident to $v$.
    Therefore every vertex that wins its only proposal is incident to exactly one edge in $M'$.
    On the other hand, every vertex that does not win its proposal cannot gain any additional adjacent edges in $M'$. 
    Therefore $M'$ is still a matching. 
\end{proof}

\subsection{Phase 3: Switch colors in order to complete the coloring}
\label{subsec:color_switch}

Now that each cluster has been assigned two exclusive edges located at the same distance from its central vertex, we show how to modify their colors to ensure that the colorful condition from \Cref{lem:colorful_leaves} is satisfied for every cluster. To prove that this procedure can be applied to all clusters in parallel, we argue that the colorful condition can still be fulfilled even if the colors of all edges not adjacent to the two exclusive edges are modified adversarially.

\begin{lemma}[Color switching]
    \label{lem:color_switching}
    Let $G = (V,E)$ be a $\Delta$-regular graph, $T = (V_T,E_T) \subseteq G$ a tree, $r \in V_T$ and $\varphi: E \setminus E_T \to [2\Delta - 3]$ a partial edge coloring of $G$. 
    Then, for any two independent edges $uv, u'v' \in E_k$, where $u$ and $u'$ are leaves in $T$, there exists a coloring $\varphi': E \setminus E_T \to [2\Delta - 2]$ such that $|\varphi'(N_E(V_k))| \geq \Delta$ and $\varphi(e) = \varphi'(e)$ for all $e \not\in \{ uv, u'v' \} $.
\end{lemma}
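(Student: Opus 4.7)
The plan is to exploit the fact that $\varphi$ only uses colors in $[2\Delta - 3]$, so the extra color $2\Delta - 2$ appears nowhere in the graph and is therefore available at every edge. I will keep $\varphi'(u'v') = \varphi(u'v')$ untouched and only decide the new color of $uv$; the second edge will enter only the analysis, not the construction.

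As a preliminary, I would establish a free lower bound on the current palette $P := \varphi(N_E(V_k))$. Since $u$ is a leaf in $T$, the only tree edge at $u$ is the parent edge, so the remaining $\Delta - 1$ edges incident to $u$ all belong to $E \setminus E_T$; being pairwise adjacent at $u$, they carry $\Delta - 1$ distinct colors, and since $u \in V_k$, all of them lie in $N_E(V_k)$ and contribute to $P$. Hence $|P| \geq \Delta - 1$ without any modification, and if $|P| \geq \Delta$ I am already done with $\varphi' = \varphi$.

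The interesting case is $|P| = \Delta - 1$. Here the $\Delta - 1$ distinct colors at $u$'s non-tree edges must coincide exactly with $P$, and by the same argument applied to $u'$, the $\Delta - 1$ colors at $u'$'s non-tree edges must also coincide exactly with $P$. In particular, the color $\varphi(uv)$ is reused on some non-tree edge $f$ incident to $u'$ (possibly $f = u'v'$, which does no harm). I then set $\varphi'(uv) := 2\Delta - 2$. Properness is immediate because $2\Delta - 2$ is globally unused by $\varphi$ and therefore does not clash at $u$ or $v$; on the palette side, $N_E(V_k)$ both gains the fresh color $2\Delta - 2$ through $uv$ and retains the old color $\varphi(uv)$ through $f$, so $|\varphi'(N_E(V_k))| \geq (\Delta - 1) + 1 = \Delta$.

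The main obstacle is precisely the worry that recoloring $uv$ might silently erase its old color from the palette, which would leave us stuck at size $\Delta - 1$. This is why the existence of a second leaf $u'$ at the same layer of $T$ is essential: the forced coincidence of colors between the non-tree edges at $u$ and at $u'$ in the critical case $|P| = \Delta - 1$ guarantees the required duplicate of $\varphi(uv)$, which is exactly the reason the algorithm assigns two exclusive matching edges per cluster instead of only one.
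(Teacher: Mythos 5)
Your proof is correct and matches the paper's argument essentially step for step: both establish the free bound $|P|\geq \Delta-1$ from the $\Delta-1$ pairwise-adjacent non-tree edges at the leaf $u$, and in the critical case $|P|=\Delta-1$ both recolor only $uv$ with the globally unused color $2\Delta-2$, using the forced equality of the color sets at $u$ and $u'$ to guarantee that $\varphi(uv)$ survives on an edge incident to $u'$. The role you assign to the second edge (analysis only, construction untouched) is exactly the paper's as well.
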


\begin{figure}
    \label{fig:color_flip}
    \centering
    \begin{subfigure}[t]{0.49 \textwidth}
        \centering
        \includegraphics{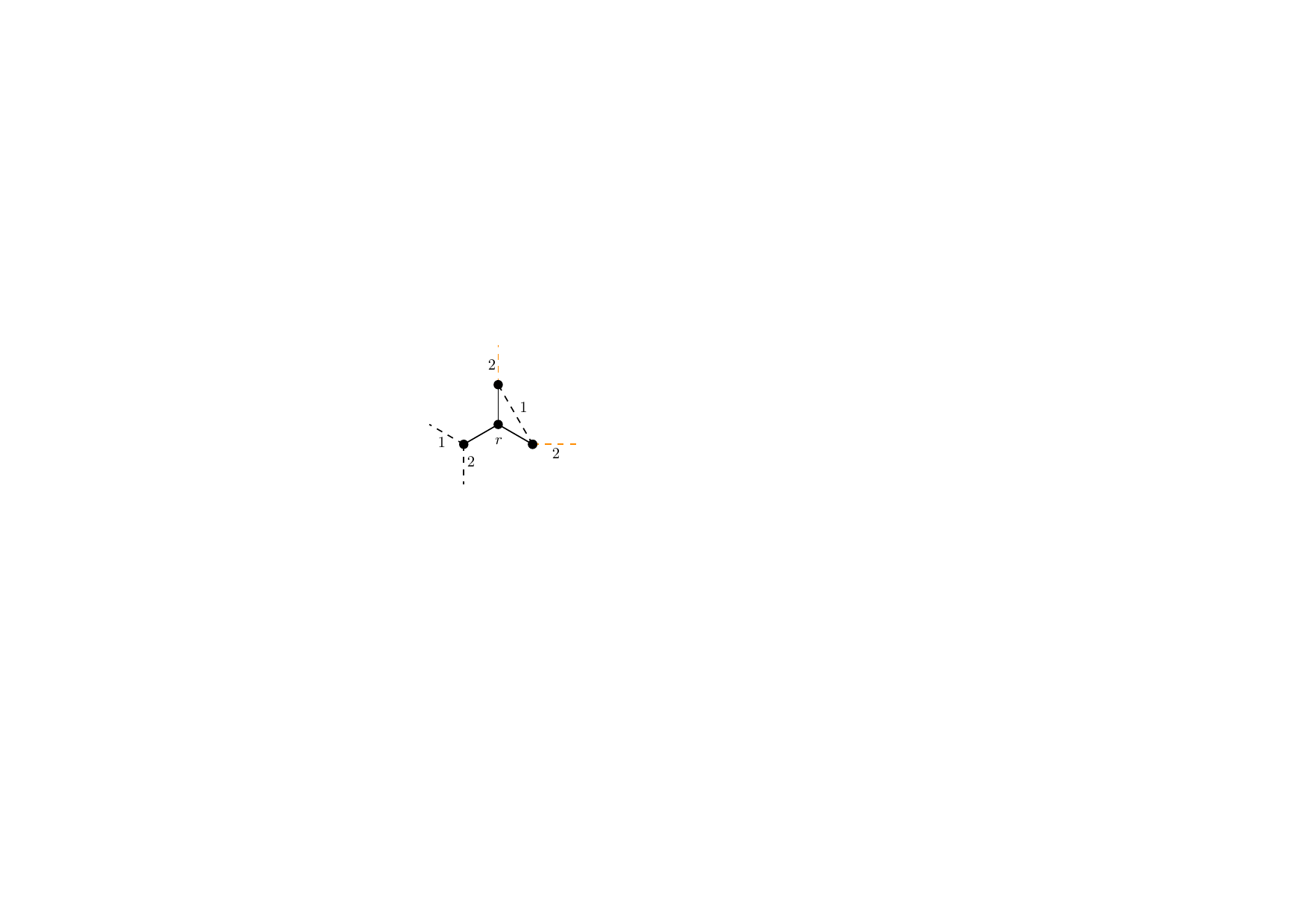}
        \caption{Before switching colors: $\varphi(N_E(V_1)) = \{ 1,2 \} $.}
    \end{subfigure}
    \hfill
    \begin{subfigure}[t]{0.49 \textwidth}
        \centering
        \includegraphics{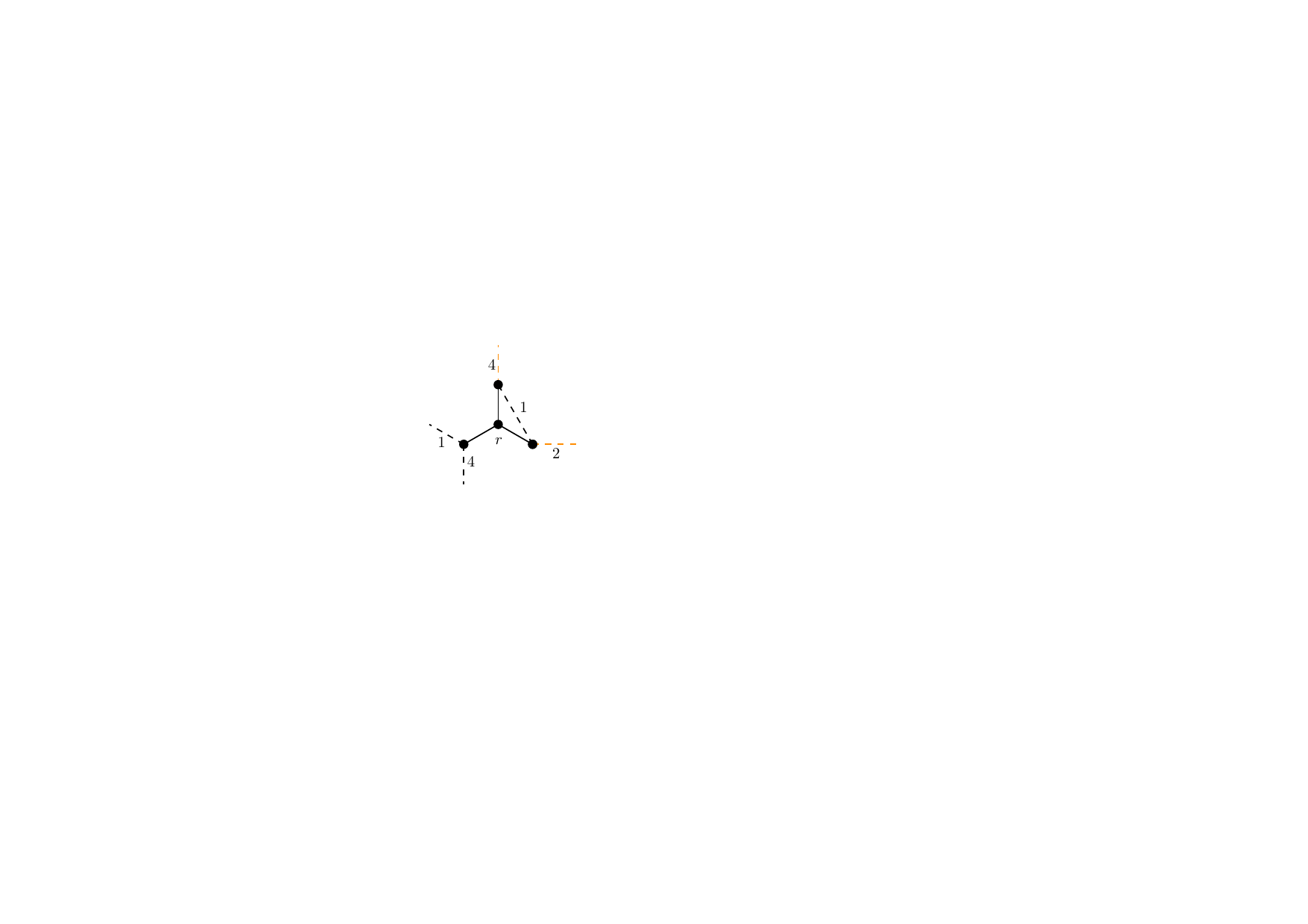}
        \caption{After switching colors: $\varphi(N_E(V_1)) = \{ 1,2,4 \} $.}
    \end{subfigure}
    \caption{
        A simplified example for the color switching procedure, where $\Delta = 3$. 
        The edges in the BFS-tree $T_r$ are represented by solid lines, while edges outside of $T_r$ are dashed. 
        The assigned matching edges $e_1, e_2$ are marked orange.
    }
\end{figure}

\begin{proof}
    Recall that $V_{k} := \{ v \in V_T: \mathrm{dist}_T(v,r) = k \}$. 
    Since $u$ and $u'$ are leaves in $T$, they are incident to exactly $\Delta - 1$ edges in $E \setminus E_T$. 
    Hence, $\varphi(N_E(V_k))$ already contains at least $\Delta - 1$ distinct colors. 
    If $\lvert \varphi(N_E(V_k)) \rvert \geq \Delta $ we simply set $\varphi' = \varphi$.
    Otherwise, if $\lvert \varphi(N_E(V_k)) \rvert = \Delta - 1$, we set $\varphi'(uv) = 2\Delta - 2$. 
    Since $\varphi(e) < 2\Delta - 2$ for all $e \in E_k$ we only need to argue that the color $\varphi(uv)$ still remains in $\varphi'(E_k)$. 
    Consider the sets $N := \{ uw: uw \in E \setminus E_T \} $ and $N' := \{ u'w: u'w \in E \setminus E_T \}$. 
    Since $u, u'$ are both leaves in $T$, both $N$ and $N'$ have cardinality $\Delta - 1$. 
    Moreover, since $\lvert \varphi(E_k) \rvert = \Delta - 1$ and any two edges in $N$ (or $N'$) are adjacent, we have that $\varphi(N) = \varphi(N')$. 
    Hence, there exists an edge $u'v''$ such that $\varphi(uv) = \varphi(u'v'') = \varphi'(u'v'') \in \varphi'(E_k)$. 
    Thus, we have $\varphi'(E_k) = \Delta$. 
    Note that this argument holds even if the colors of the edges not adjacent to $u$ or $v$ are modified adversarially.
\end{proof}

\subsection{Proof of Theorem~\ref{thm:simple_algo}}

\MISalgo*

\begin{proof}
    For $\Delta = \mathcal{O}(1)$ we apply the $(3\Delta/2)$-edge coloring algorithm from \Cref{alg:3Delta/2} to get a $(2\Delta - 2)$-edge coloring in $\mathcal{O}(\log n)$ rounds. 
    Hence, for the remainder of the proof we assume $\Delta \geq \Delta_0 $ for a sufficiently large constant $\Delta_0$. 
    Since the power graph $G^8$ has $n$ vertices and maximum degree polynomial in $\Delta$, an MIS can be computed on it in time $T_\mathrm{MIS}(n, \mathrm{poly}(\Delta))$. 
    Next, the clustering is possible in a constant number of rounds, since the radius of each cluster is at most 8. 
    Also, both computing the $(2\Delta - 3)$-edge coloring $\varphi$ of $G - E_T$ and finding a maximal matching in $E_T$ can be reduced to MIS, and thus can be solved in time $T_\mathrm{MIS}(\Delta^2 \cdot n, \mathrm{poly}(\Delta))$.
    Then, according to \Cref{subsec:HSO} we can modify the matching in a way that each cluster receives two exclusive edges. 
    The rearrangement can be modeled as an instance of HSO with minimum degree $\frac{\Delta^3}{4}$ and maximum rank $\Delta^2$. 
    Since each cluster has constant diameter, simulating this hypergraph incurs just a constant overhead factor.
    Thus, the deterministic HSO algorithm from \Cref{thm:hso} runs in $\mathcal{O}(\log_{\Delta} n)$ rounds. 
    Next, we apply \Cref{lem:color_switching} to ensure that for each cluster there is a $k \leq 8$ such that $\lvert \varphi(N_E(V_k)) \rvert \geq \Delta$.
    By \Cref{lem:colorful_leaves} we can extend $\varphi$ to a proper $(2\Delta - 2)$-edge coloring of $G$.
\end{proof}

\subsection{Replacing MIS and maximal matching (Proof of Theorem~\ref{thm:detAlgo})}
\label{subsec:optimized}

In this subsection we show that computing an MIS is not strictly necessary for our reduction. 
Indeed, using a $(2,k)$-ruling set changes only the diameter of our clusters.
Since increasing the cluster diameter increases the overhead for simulating the auxiliary hypergraph, this induces a tradeoff between $k$ and the runtime of computing a $(2,k)$-ruling set. 
Setting $k = \mathcal{O}(\log \Delta)$ turns out to be the best choice for our purposes.
Secondly, we show how to substitute the maximal matching by a $2$-edge ruling set.
This change is technically more involved, as it requires us to generalize the definition of our auxiliary hypergraph. 
The necessary changes to \Cref{alg:coloring} are also highlighted red in \Cref{alg:optimized}.

\begin{algorithm}
    \caption{Reduction from $(2\Delta-2)$-edge coloring to $(2\Delta - 1)$-edge coloring (high level overview)}
    \label{alg:optimized}
    \begin{algorithmic}[1]
        \State Compute a $(2,\mathcal{O}(\log\Delta))$-ruling set $\cR$ on $G^8$
        {\color{red} \Comment \Cref{cor:det_ruling_set}}
        \State $\mathcal{C}, E_T \gets$ \textsc{Cluster}$(G, \mathcal{R})$
        \Comment \Cref{alg:clustering}
        \State Compute a $(2\Delta-3)$-edge coloring $\varphi$ of $G - E_T$
        \Comment \Cref{lem:2delta-3}
        \State Compute a $2$-edge ruling set $E_\mathcal{R}$ of $T$
        {\color{red} \Comment \Cref{thm:edge-ruling}}
        \State Compute $2$-HSO $\psi: \mathcal{C} \to E_\mathcal{R}$ in $H = (\mathcal{C}, E_\mathcal{R})$
        \Comment \Cref{thm:hso}
        \State Rearrange the matching edges via $\chi: E_\mathcal{R} \to E_\mathcal{R}'$
        \Comment \Cref{lem:proper_matching}
        \For{each cluster $\mathfrak{C} \in \mathcal{C}$ in parallel}
            \State Change the colors of the edges in $\chi(\psi(\mathfrak{C}))$ if necessary
            \Comment{\Cref{lem:color_switching}}
        \EndFor
        \State Extend $\varphi$ to a $(2\Delta-2)$-edge coloring $\varphi^\star$ of $G$
        \Comment \Cref{lem:colorful_leaves}
    \end{algorithmic}
\end{algorithm}

Secondly, we show that we can replace the maximal matching $M$ by a cheaper $2$-edge ruling set $E_\mathcal{R}$ of $G - E_T$. 
In this case, every leaf-vertex of a cluster sees at least one ruling set edge in its $3$-hop neighborhood. 
Here we need to be a little bit more careful in how we set up our hypergraph. 
Firstly, each vertex may only send one proposal to one of its closest ruling set edges. 
Secondly, each vertex may only propose to a ruling set edge $e \in E_\mathcal{R}$, if all the vertices on a shortest path towards this have proposed to $e$ as well. 
For an illustration we refer to \Cref{fig:edge-ruling-set}.

\begin{figure}[ht!]
    \centering
    \begin{subfigure}[t]{0.49 \textwidth}
        \label{fig:edge-ruling-set-before}
        \includegraphics[scale=0.8]{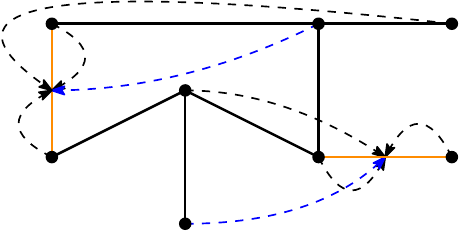}
        \caption{Each vertex proposes to one of the nearest matching edges (orange) within its two-hop neighborhood. Note that vertices lying on the shortest path to a matching edge will propose to the same edge. These proposals are represented by dashed arrows. According to the HSO result, each orange matching edge accepts exactly one proposal, which is indicated by a blue arrow.}
    \end{subfigure}
    \hfill
    \begin{subfigure}[t]{0.49 \textwidth}
        \label{fig:edge-ruling-set-after}
        \includegraphics[scale=0.8]{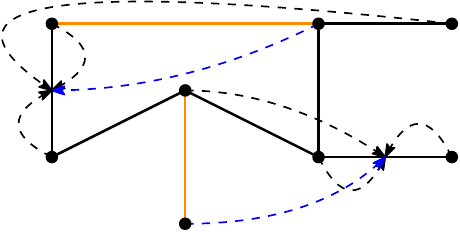}
        \caption{Based on the HSO result (blue arrows), the matching is shifted toward the vertices that won the matching edge. The first edge along the shortest path to these vertices (orange) becomes the new matching edge.}
    \end{subfigure}
    \caption{Illustration of a simplified example showing how the HSO is modeled as a proposal instance for assigning clusters two exclusive edges.}
    \label{fig:edge-ruling-set}
\end{figure}

For each cluster $C \in \mathcal{C}$ we select an arbitrary set $S(C) \subseteq V(C)$ of at least $4 \cdot \Delta^3$ leaves which are all at the same distance from the center.

\begin{definition}[Proposal scheme]
    Each cluster $C \in \mathcal{C}$ sends proposals to matching edges according to the following rules:
    \begin{itemize}
        \item Each vertex $v \in S(C)$ proposes to exactly one matching edge $e \in E_\mathcal{R}$.
        \item If $v$ proposes to $e$, there may not be an edge $e' \in E_\mathcal{R}$ that is closer to $v$ than $e$.
        \item If $v$ proposes to $e$, every vertex on the shortest path from $v$ to $e$ must also propose to $e$.
    \end{itemize}
\end{definition}

\begin{definition}
    Let $e \in E_\mathcal{R}$ and $v \in V$ be the vertex which sent the winning proposal to $e$.
    \begin{itemize} 
        \item If $v$ is incident to $e$, then we simply add $e$ to $M'$.
        \item If $v$ is one hop away from $e$, let $u$ be the endpoint of $e$ which is adjacent to $v$ and add the edge $uv$ to $M'$.
        \item If $v$ is two hops away from $e$, let $u$ be the next vertex on a shortest path from $v$ to $e$ and add the edge $uv$ to $M'$.
    \end{itemize} 
\end{definition}

Now we prove an analogous result to \Cref{lem:proper_matching}.

\begin{lemma}
    The edge set $M'$ is a proper matching in $G - E_T$ and every edge $e \in M'$ is adjacent to the vertex that won the proposal for $e$.
\end{lemma}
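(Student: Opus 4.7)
The adjacency part of the statement is immediate from the construction of $M'$: in each of the three cases (winner $v$ at distance $0$, $1$, or $2$ from the chosen ruling set edge), the edge added to $M'$ is explicitly chosen to contain $v$ as one of its endpoints. So the real content of the lemma is showing that $M'$ is a matching, which I plan to prove by contradiction: assume that two distinct edges $e_1', e_2' \in M'$ share an endpoint, where $e_i'$ arises from $f_i \in E_\mathcal{R}$ and winning proposer $v_i$, and derive a violation of one of the following three facts: (i) distinct edges of $E_\mathcal{R}$ share no vertex, since $E_\mathcal{R}$ is a $2$-edge ruling set; (ii) $v_i$ proposes to $f_i$ only if no ruling set edge is strictly closer to $v_i$ than $f_i$; (iii) every vertex on the shortest path from $v_i$ to $f_i$ also proposes to $f_i$, and hence (by the one-proposal-per-vertex rule) cannot propose to any other ruling set edge.

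Because every vertex sends at most one proposal and therefore wins at most one, the case $v_1=v_2$ already forces $f_1=f_2$ and $e_1'=e_2'$, so I may assume $v_1 \neq v_2$. Denote the other endpoint of $e_i'$ by $u_i$; this is an endpoint of $f_i$ when $v_i$ is at distance $0$ or $1$ from $f_i$, and is the intermediate vertex on the shortest $v_i$-to-$f_i$ path when $v_i$ is at distance $2$. The remaining coincidences to rule out are $v_1=u_2$, $v_2=u_1$, and $u_1=u_2$. A collision $v_1=u_2$ with $u_2 \in f_2$ gives $v_1 \in f_2$, and then (ii) forces $f_1$ to also lie at distance $0$ from $v_1$, i.e.\ $v_1 \in f_1$, contradicting (i). A collision $v_1=u_2$ with $u_2$ being the intermediate vertex of a distance-$2$ path makes $v_1$ propose to $f_1$ (as its winning proposer) and to $f_2$ (by (iii)), contradicting the one-proposal-per-vertex rule since $f_1\neq f_2$. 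The collisions $u_1=u_2$ split into symmetric subcases handled by exactly the same principles: when both $u_i$ lie in their respective $f_i$ one gets a shared vertex of $f_1, f_2$, and as soon as at least one of them is an intermediate vertex (iii) produces a vertex that proposes to two distinct ruling set edges.

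The main obstacle is purely organisational: there are three possible cases for each of $e_1'$ and $e_2'$, yielding nine combinations each with several endpoint-coincidence subcases. To keep the argument readable, I would not enumerate them directly but instead structure the proof around the three high-level principles (i)--(iii), so that every potential collision is dispatched by citing exactly one of them rather than by running through a nine-way table.
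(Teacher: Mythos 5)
Your proposal is correct in substance and follows essentially the same route as the paper: the paper argues per vertex that nobody can become incident to two edges of $M'$, using exactly your three facts (unique proposal and unique winner, the closest-edge rule, the path-propagation rule) plus the fact that $E_\mathcal{R}$ is a matching; you phrase it as a contradiction on a pair of intersecting $M'$-edges, which is just the contrapositive organization of the same argument. One subcase dispatch is mis-stated, though: in the collision $u_1=u_2=:u$ where $u$ is the intermediate vertex of a distance-$2$ path to $f_1$ but is an \emph{endpoint} of $f_2$, principle (iii) does not ``produce a vertex that proposes to two distinct ruling set edges'' --- an endpoint of $f_2$ need not have proposed to anything. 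The contradiction in this mixed subcase is instead that, by (iii), $u$ proposed to $f_1$, which is at distance $1$ from $u$, while $f_2$ is incident to $u$ and hence strictly closer, violating your closest-edge rule (ii); this is precisely the combination the paper uses (``$u$ must also have proposed to $e$\dots therefore $u$ cannot be incident to an edge in $E_\mathcal{R}$''). With that one-line repair (and the implicit observation, which you do use, that distinct winners have distinct won edges because each edge accepts exactly one proposal), your case analysis is complete and matches the paper's proof.
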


\begin{proof}
    Recall that every vertex $v \in V$ sends out at most one proposal. 
    Hence, there is no vertex that wins more than one edge in $E_\mathcal{R}$.
    Assume that $v$ won the proposal for the edge $e = uw \in E_\mathcal{R}$. 
    If $v$ is incident to $e$ there is no change between $E_\mathcal{R}$ and $M'$.
    If $v$ is one hop away from $e$, then $v$ is the only vertex that gains an incident edge in $M'$ compared to $E_\mathcal{R}$. 
    Since $e$ is one of the closest edges to $v$ in $E_\mathcal{R}$ there cannot be an edge $e' \in E_\mathcal{R}$ that is already incident to $v$.
    In the final case that $v$ is two hops away from $e$, both $v$ and the next vertex $u$ on a shortest path from $v$ to $e$ gain an incident edge in $E_\mathcal{R}$. 
    Note that every vertex is at most two hops away from its closest edge in $E_\mathcal{R}$, since $E_\mathcal{R}$ is a 2-edge ruling set.
    By our proposal scheme, $u$ must also have proposed to $v$. 
    Therefore $u$ cannot have won its proposal and $u$ cannot be incident to an edge in $E_\mathcal{R}$. 
    Hence, both $v$ and $u$ are incident to only one edge in $M'$.
    Now assume that there is another vertex $v' = v$ that adds the edge $v'u$ to $M'$. 
    This means that $v'$ must have proposed to the same edge as $u$. 
    But $v$ already won the proposal for this edge, a contradiction!
    Therefore every vertex is incident to at most one edge in $M'$.
\end{proof}

\detAlgo*

\begin{proof}
    The $(2,\mathcal{O}(\log\Delta))$-ruling set can be computed in $\mathcal{O}(\log \Delta + \log^{\ast} n)$ rounds according to \Cref{cor:det_ruling_set}. 
    As an immediate consequence, the diameter of our clusters now increases to $\mathcal{O}(\log \Delta)$. 
    This implies that simulating the hypergraph now incurs a $\mathcal{O}(\log \Delta)$ factor overhead. 
    Therefore the complexity of computing a HSO increases to $\mathcal{O}(\log_\Delta n \cdot \log \Delta) = \mathcal{O}(\log n)$ rounds. 
    The $(2\Delta - 3)$-edge coloring of $G - E_T$ remains unchanged. 
    Further, we can also relax the requirements for our matching $M$ to $2$-edge ruling set, which can be computed in $\mathcal{O}(\log^{\ast} n)$ rounds.
    Therefore the total runtime of our reduction works out to be 
    \begin{equation*}
        \mathcal{O}(\log \Delta + \log^{\ast} n) + \mathcal{O}(\log \Delta) + T_{2\Delta - 1}(n) + \mathcal{O}(\log^{\ast} n) + \mathcal{O}(\log \Delta \cdot \log n) = \mathcal{O}(\log n) + T_{2\Delta - 1}(n). \qedhere
    \end{equation*}
\end{proof}

\detAlgoCor*

\begin{proof}
    For the first claim we plug in the $\mathcal{O}(\log^{12} \Delta + \log^\ast n)$-round algorithm from \Cref{alg:polylog} into \Cref{thm:detAlgo}.
    For the latter we plug in the $\widetilde{\mathcal{O}}(\log^{5/3} n)$-round MIS algorithm from \Cref{alg:greedy} into \Cref{thm:simple_algo} instead.
\end{proof}

\section{Randomized \texorpdfstring{$(2\Delta - 2)$}{(2Δ-2)}-edge coloring algorithm}
\label{sec:randomized}

In order to achieve a $\mathcal{O}(\log \log n)$-round randomized reduction, we only need to replace the ruling set algorithm by a faster randomized variant in \Cref{cor:rand_ruling_set} and the deterministic HSO algorithm by its randomized counterpart in \Cref{thm:hso_randomized}. The changes to the optimized deterministic version are also highlighted red in \Cref{alg:randomized}.

\begin{algorithm}
    \caption{Randomized $(2\Delta-2)$-edge coloring (high level overview)}
    \label{alg:randomized}
    \begin{algorithmic}[1]
        \State Compute a $(2, \mathcal{O}(\log \Delta) \cap \mathcal{O}(\log \log n))$-ruling set $\cR$ on $G^8$
        {\color{red}\Comment \Cref{cor:rand_ruling_set}}
        \State $\mathcal{C}, E_T \gets$ \textsc{Cluster}$(G, \mathcal{R})$
        \Comment \Cref{alg:clustering}
        \State Compute a $(2\Delta-3)$-edge coloring $\varphi$ of $G - E_T$
        \Comment \Cref{lem:2delta-3}
        \State Compute a $2$-edge ruling set $E_\mathcal{R}$ of $T$
        \Comment \Cref{thm:edge-ruling}
        \State Compute $2$-HSO $\psi: \mathcal{C} \to E_\mathcal{R}$ in $H = (\mathcal{C}, E_\mathcal{R})$ 
        {\color{red}\Comment \Cref{thm:hso_randomized}}
        \State Rearrange the matching edges via $\chi: E_\mathcal{R} \to E_\mathcal{R}'$
        \Comment \Cref{lem:proper_matching}
        \For{each cluster $\mathfrak{C} \in \mathcal{C}$ in parallel}
            \State Change the colors of the edges in $\chi(\psi(\mathfrak{C}))$ if necessary
            \Comment{\Cref{lem:color_switching}}
        \EndFor
        \State Extend $\varphi$ to a $(2\Delta-2)$-edge coloring of $G$
        \Comment \Cref{lem:colorful_leaves}
    \end{algorithmic}
\end{algorithm}

\randAlgo*

\begin{proof}
    The $(2, \mathcal{O}(\log \Delta) \cap \mathcal{O}(\log \log n))$-ruling set can be computed in $\mathcal{O}(\log \log n)$ rounds according to \Cref{cor:rand_ruling_set}. 
    Hence, each cluster has diameter at most $\mathcal{O}(\log \log n)$ and can be computed in $\mathcal{O}(\log \log n)$ rounds as well. 
    For the $2$-edge ruling set the deterministic algorithm from \Cref{thm:edge-ruling} running in $\mathcal{O}(\log^{\ast} n)$ rounds is still fast enough for our purposes here. 
    According to \Cref{thm:hso_randomized} and \Cref{subsec:HSO} the hypergraph sinkless orientation can be computed in $\mathcal{O}(\log_\Delta \log n)$ rounds on the cluster graph, where we incur a $\mathcal{O}(\log \Delta)$-round simulation overhead. 
    Finally, the extension to the edges inside each cluster can be computed in $\mathcal{O}(\log \log n)$ rounds. 
    Hence, the total runtime works out to be
    \begin{equation*}
        \mathcal{O}(\log \log n) + T_{2\Delta - 1}^{\dice{6}}(n) + \mathcal{O}(\log^\ast n) + \mathcal{O}(\log \Delta \cdot \log_\Delta \log n) = \mathcal{O}(\log \log n) + T_{2\Delta - 1}^{\dice{6}}(n). \qedhere
    \end{equation*}
\end{proof}

\randAlgoCor*

\begin{proof}
    To prove the first claim we plug in the deterministic $\mathcal{O}(\log^{12}\Delta + \log^{\ast} n)$-round algorithm from \Cref{alg:polylog} into \Cref{thm:randAlgo}, while the second claim uses the randomized $\widetilde{\mathcal{O}}(\log^{5/3} \log n)$-round $(2\Delta - 1)$-edge coloring algorithm from \Cref{alg:greedy_randomized}.
\end{proof}

Similar to the deterministic case we could also get a $\mathcal{O}(\log_\Delta \log n)$-round reduction to MIS via replacing the $(2, \mathcal{O}(\log \Delta) \cap \mathcal{O}(\log \log n))$-ruling set by an MIS. 
However, applying the shattering framework of \cite{ShatteringMIS} to the deterministic MIS algorithm from \cite{GreedyAlgorithm} still takes $\mathcal{O}(\log \Delta) + \widetilde{\mathcal{O}}(\log^{8/3} \log n)$ rounds. 
Hence, the additional cost of computing an MIS currently outweighs the time saved from the faster reduction.

\appendix

\section{Complementary subroutines}

All subroutines from the existing literature used throughout our algorithms are stated here explicitly.

\subsection{Ruling sets}

In this subsection we collect and combine various ruling set algorithms from the distributed literature that we will use as subroutines for our edge coloring algorithm.

\begin{theorem}[{ruling set algorithm \cite[Theorem 8]{RulingSetBoundedGrowth}}]
    \label{thm:ruling_set_induced_degree}
    There is a randomized distributed algorithm that computes a $(1,\mathcal{O}(\log\log\Delta))$-ruling set with induced degree at most $\mathcal{O}(\log^{5} n)$ in any $n$-vertex graph with maximum degree $\Delta$ in $\mathcal{O}(\log\log\Delta)$ rounds
\end{theorem}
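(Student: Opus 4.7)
The plan is to build the ruling set through $T = \mathcal{O}(\log\log\Delta)$ rounds of random sampling with doubly-exponentially increasing probabilities, in the spirit of the classical Kothapalli--Pemmaraju / Schneider--Elkin--Hegeman style ruling-set constructions. Maintain an ``active'' set $A_0 = V$ of vertices that are not yet dominated. In round $i$, each $v \in A_{i-1}$ tentatively joins $S$ independently with probability $p_i := \Delta^{-1/2^i}$; a vertex is \emph{kept} in $S$ only if it has no tentative neighbour in its current $(\mathcal{O}(1))$-hop ball, and we then remove from $A_{i-1}$ all vertices whose $\mathcal{O}(1)$-ball around them now contains an $S$-vertex. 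After $T$ rounds output $S$.

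For the distance guarantee, one shows that conditional on $v \in A_{i-1}$, its induced degree in the graph restricted to $A_{i-1}$ is at most $\Delta^{1/2^{i-1}}$ (because earlier rounds already knocked out higher-degree neighbourhoods); hence with probability at least $1 - 1/\mathrm{poly}(n)$ some vertex in $v$'s neighbourhood joins $S$ during round $i$, so $v$ either leaves $A_i$ or becomes ``close'' to $S$ within $\mathcal{O}(1)$ extra hops. Telescoping gives a distance of $\mathcal{O}(T) = \mathcal{O}(\log\log\Delta)$ from every vertex to $S$. A union bound over the $n$ vertices and $T$ rounds gives the desired high-probability success.

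For the induced-degree bound $\mathcal{O}(\log^5 n)$, the key step is a Chernoff-plus-union-bound argument. For any $v \in S$, its eventual neighbours in $S$ (in the induced subgraph $G[S]$) must also have been sampled in some round; since a vertex is only kept if no close neighbour was simultaneously sampled, any two kept vertices are forced to have been sampled in different rounds or far apart, and the expected number that survive in $v$'s neighbourhood is $\mathrm{polylog}(n)$. A Chernoff bound with exponent $\Theta(\log n)$ then yields the deviation bound $\mathcal{O}(\log^5 n)$ with probability $1-1/\mathrm{poly}(n)$, after which a union bound over all $v$ closes it.

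The main obstacle I anticipate is calibrating the sampling probabilities $p_i$ and the thinning radius so that the two competing objectives are simultaneously achieved: we need $p_i$ large enough that \emph{every} undominated vertex is likely to acquire a sampled neighbour in round $i$ (to make the domination argument telescope to $\mathcal{O}(\log\log\Delta)$ rounds), yet small enough that the number of surviving $S$-vertices near any given vertex stays polylogarithmic. The doubly-exponential schedule $p_i = \Delta^{-1/2^i}$ is chosen precisely to thread this needle, but proving both bounds rigorously requires carefully tracking the evolving ``effective degree'' of active vertices across rounds and invoking the right concentration inequality at each step.
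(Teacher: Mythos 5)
First, note that the paper does not prove this statement at all: it is imported verbatim as a black-box subroutine from \cite[Theorem 8]{RulingSetBoundedGrowth}, so there is no internal proof to compare against and your sketch has to stand on its own. As written, it does not. The central gap is that your two rules are mutually inconsistent with the two bounds you claim. (1) With $p_i=\Delta^{-1/2^i}$ and no $\Theta(\log n)$ factor, a vertex whose active degree is about $\Delta^{1/2^{i-1}}$ has only constantly many tentative neighbours in expectation, so ``some neighbour joins $S$ in round $i$'' holds with constant probability, not $1-1/\mathrm{poly}(n)$; the union bound over $n$ vertices and all rounds therefore does not go through, and the telescoping degree-reduction claim (``active degree at most $\Delta^{1/2^{i-1}}$'') is exactly the statement that needs, but does not get, a high-probability per-round guarantee. (2) The thinning rule ``keep $v$ only if no tentative neighbour lies in its $\mathcal{O}(1)$-ball'' makes things worse rather than better: in a dense active neighbourhood (say a clique of size $\Delta^{1/2^{i-1}}$), whenever two or more vertices are tentatively sampled they \emph{all} discard themselves, so the probability that the neighbourhood acquires a kept vertex in that round is bounded away from $1$ no matter how you tune $p_i$; domination within $\mathcal{O}(\log\log\Delta)$ rounds w.h.p.\ fails. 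Standard constructions avoid this either by Luby-style priorities (keep local maxima among sampled vertices, which guarantees progress near every sampled vertex) or by keeping all sampled vertices and controlling the induced degree through degree bands; your mutual-annihilation rule has neither property.

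The induced-degree part of your argument is also not engaging with what actually needs to be bounded. With your removal rule (deactivate every vertex whose $\mathcal{O}(1)$-ball meets $S$) and your keep rule, same-round kept vertices are pairwise non-adjacent and later-round vertices adjacent to $S$ can never join, so $G[S]$ would in fact be edgeless --- the $\mathcal{O}(\log^5 n)$ bound is vacuous under your rules, which is a strong signal that the rules cannot simultaneously deliver the domination claim (and indeed they do not, by the clique example above). Also recall that a $(1,\beta)$-ruling set in this paper's terminology carries \emph{no} independence requirement; the whole difficulty of the cited theorem is to keep $S$ large enough to dominate everything within $\mathcal{O}(\log\log\Delta)$ hops while keeping $G[S]$ only polylogarithmically dense, and that trade-off is precisely where your sketch defers to ``calibrating $p_i$'' without an argument. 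A correct proof needs (i) sampling probabilities of the form $\min\{1,\Theta(\log n)\cdot\Delta^{-1/2^i}\}$ (or conditioning on active degree above the current threshold) so that each round's degree reduction holds w.h.p., (ii) a keep rule under which progress near every high-degree active vertex is guaranteed w.h.p., and (iii) a separate argument, e.g.\ via degree bands and the fact that a vertex only joins $S$ while its active degree is below the previous threshold, that bounds the number of $S$-neighbours any vertex can accumulate across all $\mathcal{O}(\log\log\Delta)$ phases by $\mathrm{polylog}(n)$; none of these three steps is supplied by the proposal.
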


\begin{theorem}[{ruling set algorithm \cite[Theorem 3]{RulingSetColoring}}]
    \label{thm:ruling_set_coloring}
    Let $G = (V,E)$ and $W \subseteq V$. Given a $d$-coloring of $G$, there is a deterministic distributed algorithm that computes a $(2,c)$-ruling set for $W$ in time $\mathcal{O}(c \cdot d^{1/c})$.
\end{theorem}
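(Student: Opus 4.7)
The plan is to prove this by induction on $c$. Let $T(c,d)$ denote the round complexity of computing a $(2,c)$-ruling set of $W$ from a $d$-coloring of $G$; the target is $T(c,d) = O(c \cdot d^{1/c})$. For the base case $c = 1$, a $(2,1)$-ruling set of $W$ is a maximal independent set of $W$, which can be obtained by the classical color-class reduction: in round $i \in \{1,\dots,d\}$, every $v \in W$ with color $i$ joins the ruling set $S$ if no neighbor of $v$ is already in $S$. Same-color vertices are independent, so they can be added in parallel, yielding $T(1,d) = O(d) = O(1 \cdot d^{1/1})$.

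For the inductive step, partition the $d$ color classes into $k := \lceil d^{1/c}\rceil$ blocks $B_1,\dots,B_k$, each containing at most $\lceil d^{1-1/c}\rceil$ colors, and process the blocks in sequence. In phase $i$, let $W_i$ be the set of vertices in $W$ whose color lies in $B_i$ and that are not yet within distance $c$ of the current ruling set $S$; using the induced $d^{1-1/c}$-coloring on vertices with colors from $B_i$, recursively invoke the algorithm to compute a $(2,c-1)$-ruling set $S_i$ of $W_i$ and add $S_i$ to $S$. Correctness has two parts. First, $S$ is independent: any vertex added in phase $i$ has, at the moment of its addition, no ruling-set vertex within distance $c \geq 2$, so it is non-adjacent to every earlier member of $S$, and within a phase the recursive call already returns an independent set. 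Second, $S$ distance-$c$ dominates $W$: any vertex in $W$ not added in its own phase is either already within distance $c$ of $S$ at that time, or lies within distance $c-1 < c$ of some $u \in S_i$ by the inductive guarantee.

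The main obstacle is matching the claimed $O(c \cdot d^{1/c})$ running time, because a naive sequential processing of the $k$ blocks costs $k \cdot T(c-1, d^{1-1/c}) = d^{1/c} \cdot O((c-1) \cdot d^{1/c}) = O((c-1) \cdot d^{2/c})$, which is a factor of $d^{1/c}$ too slow. The resolution, which is where the real technical content of the cited paper lies, exploits the fact that distinct blocks use disjoint color classes, so their recursive invocations interact only through the global ``within distance $c$ of $S$'' flag, which can be refreshed in $O(1)$ extra rounds per phase. Carefully pipelining the $k$ phases — so that the $b$-coordinate work of different blocks is executed concurrently rather than sequentialized — collapses the total cost to $T(c-1, d^{1-1/c}) + O(k) = O((c-1) \cdot d^{1/c}) + O(d^{1/c}) = O(c \cdot d^{1/c})$, closing the induction. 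Verifying that this interleaving preserves both the independence invariant and the inductive hypothesis on each sub-call is the only step that requires genuine care; everything else is routine bookkeeping.
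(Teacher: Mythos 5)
There is a genuine gap, and it sits exactly where you flagged it: the runtime. (For context, the paper does not prove this statement at all---it imports it as a black box from \cite{RulingSetColoring}---so the comparison is with the standard proof of that theorem.) Your decomposition makes one recursive $(2,c-1)$-ruling-set call per color block and processes the $k = \lceil d^{1/c}\rceil$ blocks sequentially, which, as you compute yourself, costs $O((c-1)\cdot d^{2/c})$. The proposed rescue---``pipelining'' the blocks so that their recursive calls run concurrently, interacting only through a distance-$c$ flag refreshed in $O(1)$ rounds per phase---is not routine bookkeeping; it is false as stated. The set $W_i$ is defined by the \emph{complete} output of phases $1,\dots,i-1$, and each such output takes $\Theta((c-1)\cdot d^{1/c})$ rounds to produce, so the dependency chain has length $k$ with per-link latency far exceeding $O(1)$; no interleaving removes it. Worse, if the per-block calls actually run concurrently, two adjacent vertices of $W$ lying in different blocks can both be selected, since neither filter can see the other call's not-yet-existing output---this destroys the independence of $S$, and your own independence argument explicitly relies on all earlier phases being finished ``at the moment of addition.''

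The standard proof inverts your recursion. First make a \emph{single} cheap sequential pass over the $\lceil d^{1/c}\rceil$ blocks, spending $O(1)$ rounds per block: a vertex of $W$ whose color lies in block $j$ becomes a candidate iff none of its neighbors is already a candidate. Adjacent candidates are then forced to lie in the same block, so the candidate set carries a proper coloring with only $d^{(c-1)/c}$ colors, and every vertex of $W$ is within distance $1$ of some candidate. Then make \emph{one} recursive call computing a $(2,c-1)$-ruling set of the candidate set; domination degrades by one hop per level of the recursion. This yields $T(c,d) = O(d^{1/c}) + T\bigl(c-1, d^{(c-1)/c}\bigr)$, and since $\bigl(d^{(c-1)/c}\bigr)^{1/(c-1)} = d^{1/c}$, each of the $c$ levels costs $O(d^{1/c})$, giving $O(c\cdot d^{1/c})$. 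Your base case and your single-level correctness bookkeeping are fine (modulo stating explicitly that the recursively computed ruling set is a subset of the set it rules, which your independence argument uses), but without replacing ``many recursive calls, pipelined'' by ``one recursive call on a low-chromatic candidate set,'' the claimed bound is not established.
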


\begin{corollary}
    \label{cor:det_ruling_set}
    There is a deterministic distributed algorithm that computes a $(2,\mathcal{O}(\log \Delta))$-ruling set in any graph $G$ in $\mathcal{O}(\log \Delta + \log^{\ast} n)$ rounds of the \LOCAL model.
\end{corollary}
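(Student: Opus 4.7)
The plan is to combine \Cref{thm:ruling_set_coloring} with a fast preliminary coloring step. First, I would invoke Linial's classical coloring algorithm to compute a proper $d$-coloring of $G$ with $d = \mathcal{O}(\Delta^2)$ in $\mathcal{O}(\log^\ast n)$ rounds of the \LOCAL model. Then I feed this coloring into \Cref{thm:ruling_set_coloring} with parameter $c := \lceil \log \Delta \rceil$ and $W = V$, which produces a $(2, \mathcal{O}(\log \Delta))$-ruling set of $G$.

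It remains to verify the runtime of this second step. By \Cref{thm:ruling_set_coloring}, the call runs in $\mathcal{O}(c \cdot d^{1/c})$ rounds. With $c = \lceil \log \Delta \rceil$ and $d = \mathcal{O}(\Delta^2)$, we obtain
\begin{equation*}
    d^{1/c} \;=\; \mathcal{O}\!\left(\Delta^{2/\log \Delta}\right) \;=\; \mathcal{O}(1),
\end{equation*}
so the second step contributes only $\mathcal{O}(\log \Delta)$ rounds. Adding the $\mathcal{O}(\log^\ast n)$ rounds for the initial coloring yields the claimed total complexity of $\mathcal{O}(\log \Delta + \log^\ast n)$.

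The step I expect to require the most care is the parameter choice $c = \lceil \log \Delta \rceil$: picking $c$ smaller would leave $d^{1/c}$ polynomial in $\Delta$, while picking it larger only increases $c$ without further reducing $d^{1/c}$ below the constant barrier. This sweet spot is precisely what makes the preprocessing coloring, even at palette size $\mathrm{poly}(\Delta)$, sufficient to drive the overall round complexity down to $\mathcal{O}(\log \Delta + \log^\ast n)$.
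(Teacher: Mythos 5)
Your argument matches the paper's proof exactly: compute an $\mathcal{O}(\Delta^2)$-coloring via Linial's algorithm in $\mathcal{O}(\log^{\ast} n)$ rounds, then apply \Cref{thm:ruling_set_coloring} with $c = \Theta(\log \Delta)$ so that $d^{1/c} = \Delta^{2/\log\Delta} = \mathcal{O}(1)$, giving the claimed $\mathcal{O}(\log \Delta + \log^{\ast} n)$ total. The reasoning, including the parameter choice, is correct and essentially identical to the paper's.
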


\begin{proof}
    We start by computing a $\mathcal{O}(\Delta^2)$-vertex coloring $\varphi$ of $G$ in $\mathcal{O}(\log^\star n)$ rounds using Linial's algorithm \cite{Linial1, Linial2}. 
    Then, we apply \Cref{thm:ruling_set_coloring} to $\varphi$ for $c = \mathcal{O}(\log \Delta)$ to get a $(2,\mathcal{O}(\log \Delta))$-ruling set in time $\mathcal{O}(c \cdot d^{1/c}) = \mathcal{O}(\log \Delta \cdot \Delta^{2/\log \Delta}) = \mathcal{O}(\log \Delta)$.
\end{proof}

\begin{corollary}
    \label{cor:rand_ruling_set}
    There is a randomized distributed algorithm that computes a $(2, \mathcal{O}(\log \Delta) \cap \mathcal{O}(\log \log n))$-ruling set in any graph $G$ in $\mathcal{O}(\log \log n)$ rounds of the \LOCAL model.
\end{corollary}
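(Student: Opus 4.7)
I split into two regimes depending on how $\Delta$ compares to $\log n$, picking the cheaper subroutine in each. In both cases the target is a $(2,\beta)$-ruling set with $\beta = \mathcal{O}(\log \Delta)$ and $\beta = \mathcal{O}(\log \log n)$, all in $\mathcal{O}(\log\log n)$ rounds.

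If $\Delta \leq \log n$, then $\log \Delta \leq \log\log n$, and \Cref{cor:det_ruling_set} already returns a $(2, \mathcal{O}(\log \Delta))$-ruling set in $\mathcal{O}(\log \Delta + \log^\ast n) = \mathcal{O}(\log\log n)$ rounds. Nothing randomized is needed here.

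If $\Delta > \log n$, I combine the two ruling-set algorithms from the preliminaries in a two-step fashion. First, apply \Cref{thm:ruling_set_induced_degree} to obtain a $(1, \mathcal{O}(\log\log \Delta))$-ruling set $S_1$ of $V$ with induced degree $\Delta' = \mathcal{O}(\log^5 n)$ in $\mathcal{O}(\log\log \Delta) = \mathcal{O}(\log\log n)$ rounds. Next, run the deterministic algorithm of \Cref{cor:det_ruling_set} directly on the induced subgraph $G[S_1]$, whose maximum degree is $\Delta'$; this can be simulated with no overhead because every edge of $G[S_1]$ is an edge of $G$. Since $\log \Delta' = \mathcal{O}(\log\log n)$, it produces a $(2, \mathcal{O}(\log\log n))$-ruling set $S_2$ of $G[S_1]$ in $\mathcal{O}(\log \Delta' + \log^\ast |S_1|) = \mathcal{O}(\log\log n)$ further rounds.

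It remains to verify that $S_2$ is a $(2, \beta)$-ruling set of $G$ with $\beta = \mathcal{O}(\log\log n) = \mathcal{O}(\log \Delta)$ (the latter holds because we are in the regime $\Delta > \log n$). Independence is immediate: $S_2$ is independent in $G[S_1]$, and since $G[S_1]$ is a subgraph of $G$, it is also independent in $G$. For the covering radius, take any $v \in V$. There exists $u \in S_1$ with $\mathrm{dist}_G(v,u) \leq \mathcal{O}(\log \log \Delta)$, and $w \in S_2$ with $\mathrm{dist}_{G[S_1]}(u,w) \leq \mathcal{O}(\log\log n)$. Since distances in the induced subgraph only upper-bound distances in $G$, the triangle inequality yields $\mathrm{dist}_G(v,w) \leq \mathcal{O}(\log\log \Delta) + \mathcal{O}(\log\log n) = \mathcal{O}(\log\log n)$, as required.

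The only mildly delicate point I anticipate is the interplay between the two stages, namely that distances contracting from $G$ to $G[S_1]$ are in the right direction for our bound and that the deterministic algorithm's dependence on the degree is logarithmic (not polynomial), which is exactly what makes $\Delta' = \mathcal{O}(\log^5 n)$ collapse to $\mathcal{O}(\log\log n)$ in Stage~2. Once these are observed, the runtime and radius bounds fall out from plugging \Cref{thm:ruling_set_induced_degree} and \Cref{cor:det_ruling_set} together.
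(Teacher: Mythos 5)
Your proposal is correct and follows essentially the same route as the paper: split on whether $\Delta$ is at most roughly $\log n$, use \Cref{cor:det_ruling_set} directly in the low-degree regime, and otherwise sparsify via the randomized $(1,\mathcal{O}(\log\log\Delta))$-ruling set of \Cref{thm:ruling_set_induced_degree} and finish with the deterministic coloring-based ruling set machinery on the induced subgraph of polylogarithmic degree, combining the two radii by the triangle inequality. The only (cosmetic) difference is that you invoke \Cref{cor:det_ruling_set} as a black box on $G[S_1]$, whereas the paper unfolds it into Linial's coloring plus \Cref{thm:ruling_set_coloring} with $c = \mathcal{O}(\log\log n)$ --- the same computation.
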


\begin{proof}
    If $\Delta = \mathcal{O}(\log n)$, we can simply use the deterministic algorithm from \Cref{cor:det_ruling_set} to get an $(2,\mathcal{O}(\log \Delta))$-ruling set (which is also a $(2,\mathcal{O}(\log \log n))$-ruling set) in $\mathcal{O}(\log \Delta + \log^{\ast} n) = \mathcal{O}(\log \log n)$ rounds. 
    Otherwise, for $\Delta = \Omega(\log n)$ we first apply \Cref{thm:ruling_set_induced_degree} to get a $(1,\mathcal{O}(\log \log \Delta))$-ruling set $S_0$ with induced degree at most $\mathcal{O}(\log^5 n)$. 
    Next we compute a $\mathcal{O}(\log^{10} n)$-coloring $\chi$ of $G[S_0]$ using Linial's algorithm \cite{Linial1, Linial2}. 
    Finally using $\chi$ as the input coloring and setting $c = \mathcal{O}(\log \log n)$, we get a $(2,\mathcal{O}(\log \log n))$ ruling set $S_1$ for $S_0$ in 
    $$
    \mathcal{O}(c \cdot d^{1/c}) = \mathcal{O}(\log \log n \cdot \log^{10 / \log \log n} n) = \mathcal{O}(\log \log n).
    $$
    randomized time. 
    Since every node $v \in V$ is at most $\mathcal{O}(\log \log \Delta) = \mathcal{O}(\log \log n)$ hops away from a vertex in $S_0$ and every vertex in $S_0$ is at most $\mathcal{O}(\log \log n) = \mathcal{O}(\log \Delta)$ hops away from a vertex in $S_1$, we get that $S_1$ is both a $(2,\mathcal{O}(\log \Delta))$-ruling set and a $(2,\mathcal{O}(\log \log n))$-ruling set for $V$. 
\end{proof}

Another subroutine that we will use in the optimized version of our algorithm computes a relaxation of a maximal matching.

\begin{theorem}[{$2$-ruling edge set algorithm \cite[Theorem 1.3]{RulingLineGraphs}}]
    \label{thm:edge-ruling}
    There is a deterministic distributed algorithm that computes a $2$-ruling edge set of $G$ in $\Theta(\log^\star n)$ rounds in the \CONGEST model.
\end{theorem}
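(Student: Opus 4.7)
The plan is to reduce the problem to an edge coloring of $G$ and then extract a $2$-ruling edge set from the resulting color classes. First, I would apply Linial's classical algorithm to the line graph $L(G)$, which has maximum degree at most $2(\Delta-1)$. This yields a proper edge coloring of $G$ with $\mathcal{O}(\Delta^2)$ colors in $\mathcal{O}(\log^\star n)$ rounds. Because each color class already forms a matching in $G$, the ``independence'' half of the ruling property is essentially handed to us for free within any single class.

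Next, with a small edge palette in hand, I would construct $E_\mathcal{R}$ by iterating over the color classes. In each iteration, every edge of the current class tentatively joins $E_\mathcal{R}$, and then conflicts are resolved locally so that no two added edges are within graph distance $2$ of each other, and no added edge is within distance $2$ of an edge committed in an earlier iteration. Checking the $2$-hop edge neighborhood is constant rounds per iteration, so this naive loop produces a $2$-ruling edge set in $\mathcal{O}(\Delta^2)$ rounds. Maximality along the way guarantees that every edge of $G$ either lies in $E_\mathcal{R}$ or is within distance $2$ of some edge of $E_\mathcal{R}$.

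To push the running time down to the claimed $\Theta(\log^\star n)$ bound, one cannot afford to scan $\Omega(\Delta^2)$ color classes. The key structural observation is that line graphs are \emph{claw-free}: the neighborhood of any edge in $L(G)$ decomposes into at most two cliques, one per endpoint. Exploiting this restricted local structure, a specialized color-reduction phase (in the spirit of Naor--Stockmeyer / Szegedy--Vishwanathan fast palette reduction, but tailored to claw-free graphs) can collapse the $\mathcal{O}(\Delta^2)$ palette down to a palette of constant size in an additional $\mathcal{O}(\log^\star n)$ rounds. Once the palette is constant, the greedy iteration above closes in $\mathcal{O}(1)$ rounds. The matching $\Omega(\log^\star n)$ lower bound follows from Linial's classical path argument, applied here through the line graph.

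The hard part will be the claw-free palette reduction, since on general graphs an $\Omega(\log \Delta)$ barrier is known for compressing a polynomial-size palette down to a constant. Circumventing this barrier by leveraging the two-clique decomposition of edge-neighborhoods in $L(G)$ is the genuine technical work behind Theorem~\ref{thm:edge-ruling}; the edge-coloring reduction and the greedy selection phase on a constant palette are relatively standard once that ingredient is in place.
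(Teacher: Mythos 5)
The paper does not prove this statement at all: it is imported verbatim as a black-box subroutine from \cite{RulingLineGraphs} (it sits in the appendix of quoted subroutines), so there is no internal proof to compare against and any argument you give has to stand on its own.

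As written, your argument has a fatal gap at the step you yourself flag as the crux. You cannot ``collapse the $\mathcal{O}(\Delta^2)$ palette down to a palette of constant size'' while keeping a proper coloring of $L(G)$: the edges incident to a single vertex of $G$ form a clique of size up to $\Delta$ in $L(G)$, so $\chi(L(G)) \geq \Delta$. The obstruction is existential, not a round-complexity barrier, and claw-freeness does not help -- claw-free graphs still contain arbitrarily large cliques, and it is exactly a clique that blocks you. So no palette-reduction phase, however clever, can make the subsequent ``iterate over constantly many color classes'' loop work for unbounded $\Delta$; the runtime of your scheme is inherently $\mathrm{poly}(\Delta)+\mathcal{O}(\log^\ast n)$, not $\mathcal{O}(\log^\ast n)$. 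There is also a smaller hole earlier: within one color class of a proper edge coloring, distinct matching edges can still lie at distance $2$ from one another, and ``conflicts are resolved locally'' in $\mathcal{O}(1)$ rounds is itself an unproved symmetry-breaking step (the distance-$1$ coloring does not break distance-$2$ ties; you would need, e.g., a distance-$2$ coloring of $L(G)$ with $\mathrm{poly}(\Delta)$ colors, which again only yields a $\mathrm{poly}(\Delta)$-round loop). A correct $\Theta(\log^\ast n)$ bound independent of $\Delta$ has to exploit the structure of line graphs differently -- e.g., via sparsification and the fact that edge neighborhoods decompose into two cliques (bounded neighborhood independence), so that after each vertex nominates a single incident edge one can recurse on a much sparser instance -- rather than via any constant-size proper palette; that is what the cited work does, and it is the ingredient your sketch is missing.
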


\subsection{Hypergraph sinkless orientation}

\begin{theorem}[{deterministic HSO \cite[Theorem 1.4]{HSO}}]
    \label{thm:hso}
    There is a deterministic distributed algorithm for computing an HSO in any $n$-vertex multihypergraph $H = (V,E)$ with maximum rank $r$ and minimum degree $\delta > r$ in $\mathcal{O}(\log_{\delta / r} n)$ rounds.
\end{theorem}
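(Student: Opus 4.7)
The plan is to establish Theorem~\ref{thm:hso} via an iterative sink-elimination scheme in which each round shrinks the set of ``sinks'' (vertices currently lacking an outgoing hyperedge) by a factor of roughly $\delta/r$. After $\mathcal{O}(\log_{\delta/r} n)$ rounds no sinks remain, which is precisely the definition of an HSO. The proof splits naturally into a probabilistic existence argument followed by a distributed derandomization, together with a careful recursion that preserves the degree-rank gap $\delta > r$.

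First I would analyze a uniformly random orientation as a guiding template: each hyperedge $e$ chooses its outgoing endpoint uniformly among its at most $r$ members, independently across hyperedges. For a fixed vertex $v$ with $\deg(v) \geq \delta$, the probability that every incident hyperedge points inward is $(1 - 1/r)^{\delta} \leq e^{-\delta/r}$, which is at most $(r/\delta)^{\Omega(1)}$ under the hypothesis $\delta > r$. By linearity of expectation, the number of surviving sinks is in expectation at most $n \cdot (r/\delta)^{\Omega(1)}$, so there certainly exists an orientation meeting this bound. To convert this into a deterministic distributed procedure I would apply the method of conditional expectations, guided by a constant-round, local computation in which each vertex inspects only its $\delta$ incident hyperedges. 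Non-sink vertices then freeze their outgoing hyperedge once and for all; the residual HSO instance lives on the surviving sinks (together with all their incident hyperedges), and, because any unfrozen hyperedge retains its full vertex set, the residual instance still satisfies the degree-rank gap inherited from $H$. Iterating this contraction $\mathcal{O}(\log_{\delta/r} n)$ times eliminates all sinks and terminates.

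The principal obstacle is making a single iteration genuinely cheap: a black-box derandomization through low-diameter network decomposition would pay an extra $\mathrm{poly}(\log n)$ factor and destroy the stated bound. The key insight I would exploit is the strict locality of sinkhood, since the event ``$v$ is a sink'' depends solely on the orientations of the hyperedges touching $v$. The slack $\delta > r$ creates enough independent ``safe'' choices at each sink that conflicts between nearby candidates can be resolved by a simple ID-based priority rule inside $\mathcal{O}(1)$ rounds while still guaranteeing the $(r/\delta)^{\Omega(1)}$ shrinkage. Formalizing this local rule rigorously, and certifying that the chain of contractions preserves $\delta' > r'$ throughout, is the technical crux of the argument, and it is exactly where the condition $\delta > r$ enters essentially rather than cosmetically.
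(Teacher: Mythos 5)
You should first note that the paper does not prove \Cref{thm:hso} at all: it is imported verbatim as a black box from \cite{HSO}, so your sketch has to stand on its own, and it does not. The central gap is the derandomization. Your plan reduces the theorem to the claim that a ``simple ID-based priority rule'' fixes orientations in $\mathcal{O}(1)$ rounds per phase while still shrinking the number of sinks by a factor of roughly $e^{-\delta/r}$; that claim is exactly the content of the theorem and you leave it unproven. The method of conditional expectations does not localize for free: the orientation of one hyperedge affects up to $r$ vertices, a vertex's sinkhood depends on $\delta$ hyperedges, and fixing orientations in parallel by an ID priority gives no bound whatsoever on how many low-priority vertices end up as sinks (an adversarial ID assignment can starve entire neighborhoods). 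The standard remedies --- sequentializing along color classes of the dependency graph, pessimistic estimators over bounded-independence spaces, or network decomposition --- each cost rounds that exceed the budget you have per phase. Since the probabilistic part of your argument is routine, essentially the whole theorem lives in this unproven step.

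There is a second, independent gap in the recursion: the degree--rank hypothesis is not preserved the way you assert. After non-sink vertices freeze their chosen outgoing hyperedges, consider a surviving sink $v$. Every hyperedge incident to $v$ is, by definition of $v$ being a sink, oriented out of some other vertex in the trial orientation, and each such vertex may freeze precisely that hyperedge; in the worst case all $\delta$ hyperedges at $v$ become frozen, so the residual degree of $v$ can fall below $r$, or even to zero, at which point neither the shrinkage bound nor termination goes through. Your remark that ``any unfrozen hyperedge retains its full vertex set'' controls the rank of the residual instance but says nothing about its minimum degree, so the sentence ``the residual instance still satisfies the degree-rank gap inherited from $H$'' is unjustified and, without an additional mechanism (restricting which hyperedges may be frozen, or replacing the sink count by a different progress measure), false. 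Until both points are repaired this is a plausible plan rather than a proof; for the actual argument one must consult \cite{HSO}, which the present paper invokes without reproving.
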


\begin{theorem}[{randomized HSO \cite[Theorem 1.5]{HSO}}]
    \label{thm:hso_randomized}
    There is a randomized distributed algorithm that w.h.p. computes an HSO on any hypergraph with maximum rank $r$ and minimum degree $\delta \geq 320 r \log r$ with runtime $\mathcal{O}(\log_{\delta/r} \delta + \log_{\delta/r}\log n)$.
\end{theorem}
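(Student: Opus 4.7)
The plan is to combine a single round of randomized orientation with a shattering argument and a deterministic cleanup via \Cref{thm:hso}. Let $G = (V,E)$ be the input hypergraph with minimum degree $\delta$ and maximum rank $r$, and assume $\delta \geq 320 \, r \log r$.

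\textbf{Phase 1: Random orientation.} Each hyperedge $e \in E$ independently picks one of its vertices uniformly at random to be its outgoing endpoint. For any vertex $v$, the probability that $v$ is a sink equals
$$
\Pr[v \text{ is a sink}] \;=\; \prod_{e \ni v}\!\left(1 - \frac{1}{|e|}\right) \;\leq\; \left(1 - \frac{1}{r}\right)^{\delta} \;\leq\; \exp(-\delta/r)\,.
$$
Using $\delta \geq 320 \, r \log r$ this probability is at most $\exp(-320\log r)$, which is polynomially small in $r$ (and hence in $\delta/r$). Events at vertices $u,v$ that share no incident hyperedge are independent, so the dependency radius of the bad event ``$v$ is a sink'' is at most two in the primal graph.

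\textbf{Phase 2: Shattering.} Because the bad probability is $r^{-\Omega(1)}$ and the dependency has bounded radius, a standard shattering argument (as used in the analysis of graph sinkless orientation and MIS in the style of Barenboim--Elkin--Pettie--Schneider) shows that with high probability the set of sink vertices decomposes into connected components of size $\mathrm{poly}(\delta)\cdot O(\log n)$ in the induced subhypergraph. Here the $\mathrm{poly}(\delta)$ factor arises from the maximum primal degree of the hypergraph, which we may bound in terms of $\delta$ by standard degree-reduction preprocessing if needed.

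\textbf{Phase 3: Deterministic cleanup.} On each bad component $C$ we need to re-orient hyperedges incident to sinks so that every vertex of $C$ gains an outgoing edge, without disturbing orientations outside $C$. We treat boundary hyperedges (those between $C$ and its complement) as already committed and only re-orient the purely internal hyperedges. The resulting local HSO instance still satisfies $\delta' \geq \delta/2$ and $r' \leq r$ (after possibly discarding a few boundary-committed edges), hence the gap $\delta'/r' = \Omega(\delta/r)$ is preserved. Invoking the deterministic algorithm from \Cref{thm:hso} on a component of size $N = \mathrm{poly}(\delta)\cdot O(\log n)$ takes
$$
O\!\left(\log_{\delta/r} N\right) \;=\; O\!\left(\log_{\delta/r}\delta + \log_{\delta/r}\log n\right)
$$
rounds, matching the claimed bound.

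\textbf{Main obstacle.} The delicate step is the shattering analysis: one must prove that after the single random orientation, sink vertices indeed form $\mathrm{poly}(\delta)\cdot O(\log n)$-sized components with high probability, and that the gap condition $\delta'/r' > 1$ (in fact $\Omega(\delta/r)$) survives the restriction to internal hyperedges of each component so that \Cref{thm:hso} applies. Additionally, care is needed to argue that the boundary-committed hyperedges do not substantially reduce each vertex's internal degree; the lower bound $\delta \geq 320\, r \log r$ provides the slack required for this argument to go through.
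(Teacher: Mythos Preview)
The paper does not prove \Cref{thm:hso_randomized}. It appears in the appendix under ``Subroutines'' and is quoted verbatim as Theorem~1.5 of \cite{HSO}, used as a black box. There is therefore no proof in this paper to compare your proposal against.

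That said, your outline follows the expected template for results of this type: one round of random orientation to make each vertex a sink with probability $\exp(-\delta/r)$, a shattering argument to bound the surviving bad components, and then the deterministic algorithm of \Cref{thm:hso} on each component. This is indeed the architecture used in \cite{HSO}. A few of your steps are underspecified, however. In Phase~2 you assert component size $\mathrm{poly}(\delta)\cdot O(\log n)$ and invoke ``degree-reduction preprocessing if needed'' without saying what that is; the shattering bound requires the sink probability to beat the dependency-graph degree, and the latter is governed by the \emph{maximum} degree of the hypergraph, not $\delta$, so this needs an explicit argument. In Phase~3 the claim that restricting to internal hyperedges preserves $\delta' \geq \delta/2$ is not automatic: a vertex in a bad component could in principle have almost all of its incident hyperedges touching the complement. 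One also has to argue that re-orienting internal hyperedges cannot create new sinks outside the component. These are exactly the points you flag as the ``main obstacle,'' and they are where the actual work in \cite{HSO} lies; your sketch identifies the right skeleton but does not yet discharge these obligations.
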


\subsection{Maximal Independent Set}

\begin{theorem}[deterministic MIS {\cite[Theorem 1.3]{GreedyAlgorithm}}]
    \label{alg:greedy}
    There is a deterministic distributed algorithm that in any $n$-vertex graph $G = (V,E)$ computes an MIS in $\widetilde{\mathcal{O}}(\log^{5/3} n)$ rounds of the \LOCAL model. 
    This complexity also applies for maximal matching, $(\mathrm{deg} +1)$-list vertex coloring, and $(2\cdot \mathrm{deg} -1)$-list edge coloring.
\end{theorem}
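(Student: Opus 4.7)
The plan is to follow the well-established randomized-then-derandomized paradigm. The starting point is a fast randomized MIS algorithm in the style of Ghaffari's 2016 algorithm, which runs in $O(\log\Delta) + \mathrm{poly}(\log\log n)$ rounds in expectation. This algorithm proceeds in phases where each live vertex tentatively joins the IS with a carefully chosen probability depending on a ``desire-level'' that adapts to the local effective degree; one can show that after $O(\log\Delta)$ phases, every vertex is either decided or lives in a small, low-degree ``post-shattering'' subgraph.

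The bulk of the work is then the derandomization. I would invoke the method of conditional expectation with a potential function that tracks the expected progress (in, say, the number of edges removed) of one randomized phase. Since a vertex's decision in a given phase depends only on its constant-hop neighborhood, a suitable pessimistic estimator can be computed locally, and one can fix the random bits group by group by solving a ``defective coloring'' / ``small-diameter-clustering'' subproblem: within each cluster, one brute-forces the best assignment and makes at least the expected progress. This rounding primitive is implementable via a network decomposition of small diameter and polylogarithmic quality, and it is the bottleneck of classical derandomization schemes (which pay roughly $\log^2 n$ per phase).

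To push the complexity down to $\widetilde{\mathcal{O}}(\log^{5/3} n)$, the derandomization must be sped up beyond the naive ``per-phase'' scheme. The key idea I would pursue is to amortize across phases: rather than derandomizing each of the $\Theta(\log n)$ randomized rounds independently (yielding $\log n \cdot \log^2 n = \log^3 n$), one batches several rounds together, uses a single carefully designed global potential whose expected value after a batch of $k$ rounds can still be estimated from $k$-hop information, and rounds the whole batch via one call to a network-decomposition-based subroutine. Balancing the batch size $k$ against the decomposition cost (roughly, $k \cdot \log^{4/3} n$ per batch versus $\log n / k$ batches) gives an optimum at $k = \Theta(\log^{1/3} n)$ and a total of $\widetilde{\mathcal{O}}(\log^{5/3} n)$ rounds. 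The hardest part of this program is designing a potential function that is both (i) a valid pessimistic estimator for the multi-phase success probability and (ii) computable from a bounded-radius neighborhood so that the clustered rounding step is local.

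Finally, the extensions to maximal matching, $(\deg+1)$-list vertex coloring, and $(2\deg-1)$-list edge coloring follow from standard black-box reductions: maximal matching is MIS on the line graph, and both list-coloring variants reduce to $(\deg+1)$-list vertex coloring, which in turn is equivalent (up to logarithmic factors) to MIS in the augmented conflict graph where vertices are (vertex, color) pairs; crucially, these reductions preserve the local structure needed for the derandomized rounding, so the $\widetilde{\mathcal{O}}(\log^{5/3} n)$ bound transfers without additional overhead.
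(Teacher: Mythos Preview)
The paper does not prove this statement at all: it is quoted verbatim as \cite[Theorem~1.3]{GreedyAlgorithm} in the appendix on subroutines and used purely as a black box. There is therefore no ``paper's own proof'' to compare against; the correct answer here is simply to cite Ghaffari--Grunau.

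Your sketch is a reasonable caricature of the derandomization paradigm, but as a proof of the actual $\widetilde{\mathcal{O}}(\log^{5/3} n)$ bound it has real gaps. The balancing heuristic you propose (batch $k$ randomized rounds, pay $k \cdot \log^{4/3} n$ per batch, take $k = \Theta(\log^{1/3} n)$) presupposes a $\log^{4/3} n$-round rounding primitive and a potential function that remains a valid pessimistic estimator across a batch of $k$ adaptive phases; neither ingredient is standard, and you do not supply them. The actual Ghaffari--Grunau argument is considerably more delicate: it does not simply batch Ghaffari's 2016 algorithm but builds a new local rounding framework with carefully engineered cost functions and an improved network-decomposition-style primitive, and the $5/3$ exponent arises from a different trade-off than the one you wrote down. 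So while your outline points in the right general direction (randomized local algorithm $+$ conditional-expectation derandomization $+$ amortization), it is not a proof, and for the purposes of this paper none is needed---a citation suffices.
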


\begin{corollary}[randomized MIS {\cite[Theorem 1.3]{GreedyAlgorithm} \& \cite{OptimalColoring}}]
    \label{alg:greedy_randomized}
    There is a randomized distributed algorithm, in the \LOCAL model, that computes a $(\Delta + 1)$-vertex coloring in $\widetilde{\mathcal{O}}(\log^{5/3}\log n)$ rounds. 
    The same holds also for the $(2\Delta - 1)$-edge coloring problem.
\end{corollary}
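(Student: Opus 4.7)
The plan is to obtain the randomized complexity by black-box composition of two ingredients: the shattering framework for coloring problems (as established in \cite{OptimalColoring}, ultimately going back to the Chang--Li--Pettie style analysis), and the deterministic $\widetilde{\mathcal{O}}(\log^{5/3} n)$-round algorithm stated in \Cref{alg:greedy}. The idea is that every $\mathrm{poly}\log n$-round randomized coloring routine of ``shattering'' type automatically inherits any improvement in the deterministic complexity, up to the substitution $n \mapsto \mathrm{poly}\log n$.

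First, I would invoke the shattering-based randomized preprocessing for $(\Delta+1)$-list vertex coloring from \cite{OptimalColoring}. In $\widetilde{\mathcal{O}}(\log^{5/3}\log n)$ randomized rounds this partitions the vertex set into ``easy'' vertices, which are colored immediately, and a set $B$ of uncolored vertices such that, with high probability, every connected component of $G[B]$ has size at most $\mathrm{poly}\log n$. The crucial property of this reduction is that on the remaining components the problem is still a proper $(\deg+1)$-list coloring instance, because the partial coloring of the easy vertices removes at most one color per colored neighbor from each uncolored vertex's list.

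Second, I would apply the deterministic $(\deg+1)$-list coloring algorithm from \Cref{alg:greedy} on each residual component $H$ separately and in parallel. Since $|V(H)| \leq N$ with $N = \mathrm{poly}\log n$, the algorithm runs in
\[
\widetilde{\mathcal{O}}\!\left(\log^{5/3} N\right) \;=\; \widetilde{\mathcal{O}}\!\left(\log^{5/3}(\mathrm{poly}\log n)\right) \;=\; \widetilde{\mathcal{O}}\!\left(\log^{5/3}\log n\right)
\]
rounds on each component, and these runs can be performed in parallel in the \LOCAL model because the components are vertex-disjoint. Summing this with the preprocessing cost gives the claimed bound for $(\Delta+1)$-vertex coloring. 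The $(2\Delta-1)$-edge coloring statement follows verbatim by the same two-step pipeline applied to the line graph $L(G)$: the shattering framework of \cite{OptimalColoring} is formulated for $(2\cdot\deg-1)$-list edge coloring as well, and \Cref{alg:greedy} also handles this problem in the same deterministic runtime.

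The main obstacle is of a bookkeeping rather than conceptual nature: one must verify that the post-shattering instances truly fit the interface of the deterministic list-coloring routine, i.e.\ that the residual lists satisfy $|L(v)| \geq \deg_{G[B]}(v)+1$ (respectively $\geq 2\deg-1$ in the edge case) after the random phase, and that the component-size guarantee holds with sufficiently high probability so that a failure of the shattering can be absorbed by a brute-force recursion within the same asymptotic runtime. Both facts are already packaged in the statements of \cite{OptimalColoring}, so no new probabilistic analysis is needed — the corollary is genuinely just the composition of the improved deterministic complexity of \cite{GreedyAlgorithm} with the pre-existing shattering reduction.
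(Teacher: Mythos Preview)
The paper does not give its own proof of this corollary; it appears in the ``Subroutines'' appendix as a black-box citation of \cite{GreedyAlgorithm} and \cite{OptimalColoring}. Your derivation---shatter via the randomized preprocessing of \cite{OptimalColoring} into components of size $\mathrm{poly}\log n$, then run the deterministic $(\deg+1)$-list coloring of \Cref{alg:greedy} on each component---is exactly the standard argument those citations encode, and it is correct.
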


\subsection{Edge coloring}

\begin{theorem}[deterministic $(2\Delta - 1)$-edge coloring {\cite[Theorem D.4]{EdgeColoringPolylogarithmicInDelta}}]
    \label{alg:polylog}
    The $(\mathrm{deg}(e) + 1)$-list edge coloring problem can be solved in $\mathcal{O}(\log^7 C \cdot \log^5 n + \log^\ast n)$ deterministic rounds in the \LOCAL model, where $C$ is the size of the color space.
\end{theorem}

\begin{theorem}[deterministic $(3\Delta/2)$-edge coloring {\cite[Theorem 1.2]{HSO}}]
    \label{alg:3Delta/2}
    There is a deterministic $\mathcal{O}(\Delta^2 \cdot \log n)$-round \LOCAL algorithm that computes a $(3\Delta / 2)$-edge coloring on any $n$-vertex graph with maximum degree $\Delta$.
\end{theorem}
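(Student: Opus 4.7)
The plan is to iteratively peel off subgraphs of maximum degree at most $3$ from the input graph, color each with a fresh palette of three colors, and repeat until no edges remain. If each peel is arranged so that the maximum degree of the residual graph drops by at least $2$, then after $\lceil \Delta/2 \rceil$ peels the graph is exhausted, and the total number of colors used is at most $3 \lceil \Delta/2 \rceil \approx 3\Delta/2$. The proof therefore reduces to two independent tasks: (a) extracting a suitable max-degree-$3$ subgraph $F$ efficiently, and (b) $3$-edge-coloring $F$ in few rounds.

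For part (a), I would phrase the extraction as an instance of hypergraph sinkless orientation. At every vertex $v$, group the incident edges into small bundles of size at most $3$, and build an auxiliary hypergraph on the vertex set of $G$ whose hyperedges are these bundles (so each hyperedge has rank $\mathcal{O}(1)$). A sinkless orientation of this auxiliary hypergraph forces every vertex of sufficiently high degree to have at least one outgoing bundle, and the union of outgoing bundles becomes the extracted subgraph $F$. Balancing the bundle size against the number of bundles per vertex makes the minimum degree $\delta$ and maximum rank $r$ of the auxiliary hypergraph satisfy $\delta / r = \Omega(\Delta)$, so that \Cref{thm:hso} completes in $\mathcal{O}(\log_{\delta/r} n) = \mathcal{O}(\log n)$ rounds. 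With an $\mathcal{O}(\Delta)$-round simulation overhead for setting up the auxiliary hypergraph in the original graph, each peel costs $\mathcal{O}(\Delta \cdot \log n)$ rounds.

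For part (b), the extracted $F$ must be $3$-edge-chromatic; a generic max-degree-$3$ graph needs $4$ colors, so the extraction procedure has to be engineered to produce a structurally restricted $F$. The cleanest route is to force $F$ to be bipartite (which guarantees $3$-edge-colorability by K\"onig's theorem) by threading a precomputed $2$-coloring of the vertices into the bundle construction, so that the final constant-degree coloring can be done deterministically in $\mathcal{O}(\log^{\ast} n)$ rounds on the line graph via the algorithms of \Cref{alg:greedy} applied to a constant-degree instance. Ensuring this compatibility between the bipartiteness requirement on $F$ and the HSO encoding of the extraction is where I expect the main technical difficulty to lie, and is the crux of the result in \cite{HSO}. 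Summing over the $\lceil \Delta/2 \rceil$ peels yields the claimed total runtime of $\mathcal{O}(\Delta^2 \cdot \log n)$.
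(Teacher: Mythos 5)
You are attempting to re-prove a statement that this paper does not prove at all: \Cref{alg:3Delta/2} is imported as a black-box subroutine, cited verbatim from \cite[Theorem 1.2]{HSO}, so the only "proof" in the paper is the citation. Your high-level skeleton (repeatedly peel off a $3$-edge-colorable subgraph of maximum degree $3$ whose removal decreases the maximum degree by at least $2$, spending a fresh triple of colors per peel) does match the strategy of that line of work as the paper itself summarizes it in the related-work section, and the round accounting $\lceil \Delta/2\rceil$ peels $\times\, \mathcal{O}(\Delta\log n)$ per peel is consistent with the claimed bound. But both pillars of your sketch have genuine gaps, and you explicitly defer "the crux" to \cite{HSO}, so this is an outline of the framework rather than a proof.

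Concretely: (a) the HSO-based extraction as you describe it does not yield the two properties you need. If $F$ is the union of all outgoing bundles, a vertex $v$ can additionally receive edges from bundles centered at up to $\Delta$ of its neighbors, so $\Delta(F)$ is not bounded by $3$; and the hyperedge that is outgoing \emph{for} $v$ may well be a bundle centered at a neighbor $u$, in which case $v$ contributes only the single edge $uv$ to $F$ and its residual degree drops by at most $1$, so $\Delta(G-F)\le \Delta-2$ is not established either. Making "outgoing for $v$" mean "$v$'s own bundle is selected" is exactly what the plain HSO guarantee does not give you. (b) The coloring step fails as stated: you cannot "thread a precomputed $2$-coloring of the vertices" into the construction, since $G$ is not bipartite in general and even on bipartite graphs a proper $2$-coloring requires $\Omega(\mathrm{diameter})$ rounds in \LOCAL; and even granting a bipartite subcubic $F$, K\H{o}nig's theorem only gives existence. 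Computing a $3$-edge-coloring of a maximum-degree-$3$ graph is not a greedy problem: the line-graph routine of \Cref{alg:greedy} yields $2\cdot 3-1=5$ colors, and $3$-edge-coloring $3$-regular (even bipartite, high-girth) graphs has an $\Omega(\log n)$ deterministic lower bound, being at least as hard as sinkless orientation, so the $\mathcal{O}(\log^{\ast} n)$ subroutine you invoke does not exist. Engineering the peeled subgraph so that it is simultaneously degree-reducing, degree-bounded, and efficiently $3$-edge-colorable is precisely the content of the cited theorem, and your proposal leaves that part open.
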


\bibliographystyle{plainurl}
\bibliography{references}

\end{document}